\documentclass[reqno,12pt,a4paper]{amsart}

\usepackage{tikz}
\usepackage{tikz-3dplot}
\usepackage{marvosym}

\usepackage{chemfig} 
\usepackage{graphicx} 
\usepackage{caption}

\usepackage{chemfig} 
\usepackage{geometry}

\usepackage{euscript}
\usepackage{multicol}
\usepackage{amsmath}
\usepackage{times}
\tikzset{
    vertex/.style={circle, fill=blue!60, inner sep=1.5pt},
    removed/.style={circle, fill=red!40, inner sep=1pt, dashed},
    edge/.style={thick, black},
    removededge/.style={thick, red!40, dashed}
}

\usepackage{amsmath,amsfonts,amsthm,amssymb,amsxtra}
\usepackage{mathrsfs}
\usepackage{amssymb}
\usepackage{amsfonts}
\usepackage{latexsym}
\usepackage{amsthm}

\usepackage{amsmath,amsfonts,amsthm,amssymb,amsxtra}
\usepackage{mathrsfs}
\usepackage{amssymb}
\usepackage{amsfonts}
\usepackage{latexsym}
\usepackage{amsthm}

\usepackage{mathtools,xparse}

\DeclarePairedDelimiter{\oldnormaux}{\bracevert}{\bracevert}

\NewDocumentCommand{\oldnorm}{som}{%
  \IfBooleanTF{#1}
    {\oldnormaux*{#3}}
    {\IfNoValueTF{#2}
       {\oldnormaux*{\vphantom{dq}#3}}
       {\oldnormaux[#2]{#3}}%
    }%
}
\theoremstyle{plain}
\newtheorem{theorem}{\bf Theorem}[section]

\newtheorem{corollary}[theorem]{\bf Corollary}
\newtheorem{proposition}[theorem]{\bf Proposition}
\newtheorem{definition}[theorem]{\bf Definition}

\def\d{\delta}

\def\l{\lambda}

\def\x{\xi}

\def\ve{\varepsilon}

\usepackage{pgfplots}
 
\pgfplotsset{compat = newest}

\numberwithin{equation}{section}

\let\ge\geqslant
\let\le\leqslant
\let\geq\geqslant
\let\leq\leqslant
\newcommand{\ca}{\begin{cases}}
\newcommand{\ac}{\end{cases}}
\newcommand{\ma}{\begin{pmatrix}}
\newcommand{\am}{\end{pmatrix}}
\renewcommand{\[}{\begin{equation}}
\renewcommand{\]}{\end{equation}}
\def\eq{\begin{equation}}
\def\qe{\end{equation}}
\def\[{\begin{equation}}

\title[]{On the Emergence of Discrete Spectrum for  Weakly Disordered  Schrödinger  Operators}
\author{Stanislav Molchanov, and Oleg  Safronov }

\begin{document}
\maketitle

\begin{abstract}
We investigate the spectral properties of the Anderson operator perturbed by a localized negative potential, \(-V\). Specifically, we analyze the random Schrödinger operator defined by \(H_\varepsilon = -\Delta +\ve \sum_{n} \omega_n \chi_n - V\), where the unperturbed operator exhibits a disordered energy landscape. Our primary focus is to establish precise estimates on the number of negative eigenvalues (bound states) induced by the attractive perturbation. By analyzing the competition between Anderson localization and the binding capacity of the potential, we provide quantitative bounds on the discrete spectrum. These results offer new insights into how randomness enhances  the eigenvalue bounds.
\end{abstract}
\thispagestyle{empty}

\section{Introduction. Statement of  the main results}

Random Schrödinger operators provide a basic framework for understanding the spectral effects of disorder in quantum systems. A particularly important problem is to determine how a weak random background interacts with a deterministic attractive perturbation and how many discrete eigenvalues are produced below the bottom of the essential spectrum.
In this paper, we
 study a  model
where even the slightest noise can fundamentally alter the spectral structure.

Let $\omega_n$  be independent  identically distributed random variables
taking  values $0$ and $1$ with probabilities $q$ and $p=1-q$.  Let $\chi$  be the characteristic  function of the unit cube $[0,1)^d$.
As the coupling constant, \(\ve>0 \), in front of a random potential \(\sum_n \omega_n \chi(x-n)\) tends to zero, the operator
 \[\label{operator} H_\varepsilon= -\Delta + \ve \sum_{n\in {\Bbb Z}^d} \omega_n \chi(x-n)-  V,\qquad \text{with}\quad V\geq 0,\]
  undergoes a transition where new discrete eigenvalues are born at  the left edge of the essential spectrum.
  Obtaining precise asymptotic estimates for the number of these eigenvalues, \(N(\ve)\), as \(\ve \to 0\) is main objective of this paper.

Classical estimates offer a starting point.  The standard Cwikel-Lieb-Rozenblum (CLR) inequality \cite{Cw, Lieb, R} asserts that for any \(\varepsilon \geq 0\), the number of discrete eigenvalues satisfies
  \[\label{cwikel}
N(\ve)\leq   C_d\int_{{\Bbb R}^d}V^{d/2}(x) dx,\qquad d\geq3,
\]
where the constant \(C_{d}\) depends solely on the dimension \(d\).
 While this elegant estimate holds even at \(\varepsilon = 0\), it loses its utility when confronting  potentials  \[\label{notinL}
 V \notin L^{d/2}({\Bbb R}^d)\] that decay logarithmically at infinity.
 To address slower-decaying potentials, we establish a sharper boundary for \(\varepsilon > 0\). We prove that the eigenvalue count obeys the refined bound:
 \[\label{MSa}
N(\ve)\leq   \tilde C_d \Bigl( \int_{|x|<R}V^{d/2}(x) dx+\int_{|x|>R}\bigl(V(x)-\tilde \phi_{\ve}(x)\bigr)_+^{d/2} dx\Bigr).
\]
In this formulation, \(R = R(\omega)\) acts as a random truncation radius, and \((\dots)_+\) denotes the positive part, defined as \(f_+ = \frac{\vert{}f\vert{}+f}{2}\). Crucially, \(\tilde{\phi }_{\varepsilon }\) acts as a correction function that asymptotically controls the threshold, behaving as

  \[\notag
  \tilde \phi_\ve\sim \frac{C} {   ( \ln (|x|))^{2/d}  },\qquad \text{as}\quad |x|\to\infty.
  \]
This construction guarantees that the eigenvalue count remains strictly finite  for all potentials that do not exceed  \(\tilde{\phi }_{\varepsilon }\):
\[\notag
V\le \tilde{\phi }_{\varepsilon }\quad \implies \quad N(\varepsilon) < \infty.
\]
 Another  crusual property of the family of functions $\tilde \phi_\ve$  is that 
 \[\notag
 \tilde \phi_\ve\to 0,\qquad \text{as}\quad \ve\to0 \quad \text{uniformly on compact sets}.
 \]
 Therefore,  if  $V$ satisfies the non-integrability condition \eqref{notinL}, the right-hand side of our refined bound  \eqref{MSa} tends to infinity as  
  $\ve\to 0$. This accurately reflects that an infinite number of eigenvalues can emerge.
 
  Determining whether a potential creates infinitely or finitely many bound states is a classical problem. The  work of Molchanov and Vainberg  \cite{MV} established a  threshold on how slowly a potential can decay before it yields an infinite number of negative eigenvalues. 
  Our first main result tightens that  threshold by presenting a stricter quantitative upper bound on the decay constant \(C\). Here, \(\varkappa _{d}\) denotes the volume of the unit ball in \(\mathbb{R}^{d}\).
  
\begin{theorem}\label{TMV} Let $d\geq 3$, and
let $V\geq 0$ be a  bounded  function on ${\Bbb R}^d$  such that
\[\notag
V(x)\leq   \frac{C} {   ( \ln (|x|))^{2/d}  }    
\] for all $|x|>R_0$.  Assume that
\[\label{boundC<}
C< \Bigl( \frac{\pi  \varkappa_d^{1/d}(\ln q^{-1})^{1/d}} {4  }   \Bigr)^2.  
\]  Then for any $\ve>0$,  the negative spectrum of  the operator
\[\label{operatorH}
-\Delta+\ve\sum_{n\in {\Bbb Z}^d}\omega_n \chi(x-n)-V(x)
\]
is  finite
almost  surely.
\end{theorem}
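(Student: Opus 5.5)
The approach is to localize: we find a random family of regions on which the random potential is nearly absent (``lakes'' of zeros of $\omega_n$), and use the fact that in the complement the random potential supplies a positive lower bound on the operator. On each region we compare $V$ with the local Dirichlet ground-state energy. By Borel--Cantelli, a.s. for $|x|$ large, every cube of side $L$ consisting entirely of sites with $\omega_n=0$ and meeting the annulus $\{|x|\sim r\}$ has volume at most $(1+\delta)\,d\ln r/\ln q^{-1}$. Indeed, the probability that a given ball of volume $\rho$ around a lattice point contains only zeros is $q^{\rho}$. There are $\sim r^{d-1}$ sites in a shell, and summing $r^{d-1}q^{\rho(r)}$ over $r$ forces $\rho(r)\le (d+\delta)\ln r/\ln q^{-1}$. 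More usefully, the largest ball free of sites with $\omega_n=1$ near $x$ has radius
\[
\ell(x)\le \Bigl(\frac{(d+\delta)\ln|x|}{\varkappa_d\ln q^{-1}}\Bigr)^{1/d}.
\]

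The plan then uses a partition of unity (IMS localization) into cubes $Q_j$ of size comparable to $\ell(x)$, which grows only logarithmically, so the localization error is $O(\ell^{-2})=O((\ln|x|)^{-2/d})$ and of the same order as $V$. This is where constants matter. The operator $-\Delta+\varepsilon\sum\omega_n\chi_n$ on a region of diameter $\sim 2\ell$ whose interior contains no occupied site, but which is surrounded by occupied cells, has ground energy bounded below by roughly a Dirichlet-type eigenvalue. For $\varepsilon$ fixed and $\ell\to\infty$, the random barrier $\varepsilon$ acts like a hard wall relative to energies $\sim\ell^{-2}\to0$. A cleaner route is a Lifshitz-tail type lower bound: on a cube of side $L$, a Neumann (or Temple/Thomas-type) estimate gives
\[
\lambda_1\bigl(-\Delta+\varepsilon\textstyle\sum\omega_n\chi_n\bigr)\ge (1-o(1))\,\frac{\pi^2}{(2\ell_{\max})^2}\cdot c_d ,
\]
which yields $\mu(x)\ge \bigl(\pi/(4\,\ell(x))\bigr)^2$ up to $1+o(1)$. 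This matches the constant in \eqref{boundC<}: with $\ell=(\ln|x|/(\varkappa_d\ln q^{-1}))^{1/d}$, hypothesis \eqref{boundC<} is exactly $V(x)<(1-\eta)\mu(x)$.

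Given this, the conclusion is assembled as follows. Write $H\ge \sum_j J_j\bigl(H_j - V - |\nabla J_j|^2\bigr)J_j$ with smooth $J_j$ adapted to cubes of side $L_j\gg\ell$ but still $o$-comparable. For cubes far away, $H_j-V-|\nabla J|^2\ge \eta\mu-V_{err}>0$, which requires slightly enlarging the cube relative to $\ell$ to kill the $|\nabla J|^2$ term; the lower bound on $H_j$ must therefore come from the longest empty stretch in the larger cube, which the Borel--Cantelli bound still controls. Finitely many remaining cubes near the origin contribute finitely many negative eigenvalues, since $V$ is bounded and the cubes are compact. Hence $N<\infty$ a.s.

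The main obstacle is the sharp local lower bound $\lambda_1\ge(1-o(1))(\pi/(4\ell))^2$ on cubes with the geometric ``no empty ball of radius $\ell$'' information, uniformly in $\varepsilon>0$ fixed as $\ell\to\infty$. I would try a one-dimensional slicing/Hardy argument: in each coordinate direction a path of length exceeding $2\ell$ hits occupied cells. Combined with the sharp 1D inequality $\int|u'|^2\ge(\pi/2a)^2\int|u|^2$ for functions vanishing at one endpoint of an interval of length $a$, and with the factor-loss absorbed in the constant $4$, this should give the bound. The soft wall $\varepsilon$ is handled by noting that $\varepsilon\gg\ell^{-2}$.
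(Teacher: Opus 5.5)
\textbf{Verdict: genuine gap.} It sits at the step you flag yourself. The local bound $\lambda_1\ge(1-o(1))(\pi/(4\ell))^2$ does not follow from the information you collect.

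\textbf{The empty-ball information is too weak.} Knowing that every ball of radius $\ell$ meets an occupied cell gives no lower bound of order $\ell^{-2}$ when $d\ge 3$. Occupy the cells $k+[0,1)^d$ with $k$ on a lattice of spacing $a=\lfloor 2\ell/\sqrt d\rfloor$. Every ball of radius $\ell$ meets one of them. Yet testing on a cube of side $L_j\gg\ell$ with $\prod_i\sin(\pi x_i/L_j)$ gives a Dirichlet ground state energy at most $d\pi^2L_j^{-2}+\ve a^{-d}(1+o(1))=o(\ell^{-2})$. Isolated cells of height $\ve$ have bounded capacity and do not act as walls; the remark $\ve\gg\ell^{-2}$ only helps for thick barriers.

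\textbf{The slicing argument fails outright.} Its hypothesis is false almost surely. A box of side $r$ contains of order $r^d/m$ disjoint coordinate segments of $m$ cells, each entirely white with probability $q^m$. So white runs of length $m\approx(d-\delta)\ln r/\ln q^{-1}\gg 2\ell\sim(\ln r)^{1/d}$ occur in every large shell. A one-dimensional inequality can then give only $\lambda\gtrsim(\ln r)^{-2}$, which is useless against $V\sim(\ln r)^{-2/d}$.

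\textbf{What is missing} is probabilistic control of all configurations with small local ground state energy, including low-density ones. These are rare, but excluding them is a large-deviation statement about local eigenvalues (in the spirit of Sznitman's enlargement of obstacles), not a statement about empty balls.

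\textbf{The constant does not match.} Your own Borel--Cantelli bound is $\varkappa_d\ell^d\le(d+\delta)\ln|x|/\ln q^{-1}$. The most you can extract is therefore $(\pi/(4\ell))^2\ge(\pi/4)^2\bigl(\varkappa_d\ln q^{-1}/(d+\delta)\bigr)^{2/d}(\ln|x|)^{-2/d}$. This falls short of the threshold in \eqref{boundC<} by the factor $(d+\delta)^{2/d}$; your claimed exact match silently drops $d$. The paper's construction loses the same factor. It needs $A^d<q^{-1}$, and its bound $V\le\bigl(\pi\varkappa_d^{1/d}/(4(\log_A|x|)^{1/d}+8)\bigr)^2$ amounts to $C<(\pi/4)^2\varkappa_d^{2/d}(d^{-1}\ln q^{-1})^{2/d}$. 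So this is a problem with the stated constant that the paper's method does not repair either.

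\textbf{How the paper's route differs.} It uses no clearings and no IMS. Instead it combines three ingredients:
\begin{enumerate}
\item Neumann bracketing into the cubes $Q_{\tilde n,n}$ on $\mathcal L_n$.
\item A Borel--Cantelli bound on the number of occupied cells in each cube (Theorem~\ref{T0.1}).
\item A worst-placement comparison (Theorem~\ref{T1.1}) with the ball of radius $L=n\varkappa_d^{-1/d}$ carrying the potential in its boundary shell, for which Corollary~\ref{C1.3} gives $(\pi/(4(L-1)))^2$.
\end{enumerate}
That computation, not a slicing inequality, is the source of $\pi\varkappa_d^{1/d}/4$. Counting occupied cells is what excludes sparse configurations like the one above.

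Do not adopt Theorem~\ref{T1.1} as your missing step, however. Its P\'olya--Szeg\H{o} step is applied to functions without Dirichlet data, where it fails. Concretely, a compact clump of about $dn^{d-1}$ occupied cells in a cube of side $n$ has Neumann ground state energy $\lesssim n^{-3+2/d}\ll n^{-2}$. So the worst-placement step is exactly where the large-deviation input is needed in either approach.
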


To precisely count how many discrete eigenvalues, \(N(\varepsilon)\), emerge from the noise as \(\varepsilon \to 0\), we introduce the following auxiliary functions. 
For any \(A\) such that \(1 < A^d < \frac{1}{q}\), we define the function \(\eta(x)\) for \(\vert{}x\vert{} > A\) as
\[\label{eta}
\eta(x)=\frac{ \pi \varkappa_d^{1/d}}{ 4((\log_A|x|)^{1/d}+8+\varkappa_d^{1/d}}.
\]
Next, we define the barrier function \(\phi_\varepsilon(x)\) for \(\vert{}x\vert{} > A\) as
\[\notag
\phi_\ve(x)=\min\Bigl\{\sqrt\ve \tanh(\sqrt\ve)\eta(x),  \,\, \eta^2(x)   \Bigr\}.
\]
Armed with these tools, we state  the following result.
\begin{theorem}\label{CRLMS}
Let $d\geq 3$, $1<A^d<1/q$,  and let $V\geq 0$. Then
for almost   every  $\omega$, there is a radius $R=R(\omega)>A$  such that
\[\label{estimateMS}
N(\ve)\leq   2^{d/2}C_d \Bigl( \int_{|x|<R}V^{d/2}(x) dx+\int_{|x|>R}\bigl(V(x)-2^{-1} \phi_{2\ve}(x)\bigr)_+^{d/2} dx\Bigr)
\]
where  the constant $C_d$ depends only on  the dimension $d$ and is the same as  in \eqref{cwikel}.
Furthermore, the radius \(R(\omega)\) is independent of \(\varepsilon \).
\end{theorem}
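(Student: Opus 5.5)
The plan is to split the operator into two halves and use the random potential to generate a deterministic lower bound $\phi$ at infinity. Write
\[\notag
H_\ve=\tfrac12\Bigl(-\Delta+2\ve\sum_n\omega_n\chi(x-n)\Bigr)+\tfrac12\bigl(-\Delta-2V\bigr).
\]
The key intermediate claim is a quadratic form inequality, valid for almost every $\omega$ with a radius $R(\omega)>A$ independent of $\ve$:
\[\notag
-\Delta+2\ve\sum_n\omega_n\chi(x-n)\;\geq\;\phi_{2\ve}(x)\,\mathbf 1_{\{|x|>R\}}.
\]
Granting this, we get $H_\ve\geq \tfrac12\bigl(-\Delta-W\bigr)$, where
\[\notag
W=2V\mathbf 1_{\{|x|<R\}}+2\bigl(V-2^{-1}\phi_{2\ve}\bigr)_+\mathbf 1_{\{|x|>R\}}.
\]
By the min-max principle, $N(\ve)\le N(-\Delta-W)$. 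The CLR inequality \eqref{cwikel} then gives $N(\ve)\le C_d\int W^{d/2}$, and $2^{d/2}$ comes out of $W^{d/2}$. This is exactly \eqref{estimateMS}.

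To prove the claim, I would use a Borel--Cantelli argument on the random configuration. Since $A^d<1/q$, the probability that a ball of radius $r$ contains a cube on which $\omega_n=0$ for all cubes in it is at most of order $q^{\varkappa_d r^d}$. Around a point $x$, take the ball of radius $r(x)\approx(\log_A|x|)^{1/d}$ plus a fixed margin; this explains the $+8+\varkappa_d^{1/d}$ in \eqref{eta}. For integer points $x$, there are about $|x|^{d-1}$ points per shell. Because $A^d q<1$, the sum
\[\notag
\sum_x q^{\varkappa_d r(x)^d}\approx\sum_x |x|^{-\varkappa_d\log_A q^{-1}}
\]
converges, at least after choosing constants appropriately. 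So almost surely there is $R(\omega)$ such that for $|x|>R$, every ball $B(x,r(x))$ contains a cube where the potential equals $2\ve$. Equivalently, the empty regions (clusters of zeros) far away have radius at most $r(x)$, which is of size $\asymp 1/\eta(x)$. Since this event does not involve $\ve$, $R(\omega)$ is independent of $\ve$.

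The main obstacle is the deterministic local estimate. I would use a partition of unity (IMS localization) adapted to cubes of size comparable to $r(x)$. On each such region, the form $-\Delta+2\ve\sum\omega_n\chi_n$ should be bounded below by $\min\{\sqrt{2\ve}\tanh(\sqrt{2\ve})\,\eta,\ \eta^2\}$. The heuristic has two regimes. When $\ve$ is large, the bound is the Dirichlet-type eigenvalue $\sim(\pi/(4r))^2$ of a hole of radius $r$ surrounded by potential, which gives $\eta^2$. When $\ve$ is small, a one-dimensional comparison applies: a unit cell with potential $2\ve$ adjacent to free space contributes a boundary term. The lowest eigenvalue of $-u''+2\ve u$ on a unit interval with Neumann-type matching yields a Robin coefficient $\sqrt{2\ve}\tanh\sqrt{2\ve}$, and over a hole of length $\sim r$ this produces an eigenvalue of order $\sqrt{2\ve}\tanh(\sqrt{2\ve})/r$.

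I would make this rigorous by reducing to one-dimensional lines in each coordinate direction, using the fact that any segment of length $\gtrsim r$ meets an occupied cube. On each line I would apply the 1D Robin bound $\int|u'|^2+\sigma|u(0)|^2\ge c\min\{\sigma/r,1/r^2\}\int|u|^2$ on intervals of length $r$. The IMS localization error, of order $r^{-2}$ times a small constant, has to be absorbed; the factor $\tfrac12$ in $2^{-1}\phi_{2\ve}$ and the constant $\pi/4$ are meant to leave room for that. Getting these constants to close is the delicate step.
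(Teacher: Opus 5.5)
Your reduction is the same as the paper's. The half-and-half splitting is the paper's subadditivity with $\tau=1/2$; you then need positivity of $-\Delta+2\ve\sum_n\omega_n\chi_n-\phi_{2\ve}\tilde\chi_R$ for $R>R_0(\omega)$, with $R_0$ independent of $\ve$, and finish with CLR. The gap is in how you propose to prove that positivity, and it is structural, not a matter of constants.

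First, the event you extract is spectrally far too weak. Knowing that every ball $B(x,r(x))$ contains one occupied cell is compatible with the occupied cells forming a sparse array of spacing $\asymp r$. Test the form with a function equal to $1$ on a box of side $M$, with $r\ll M\ll|x|$. This shows that no lower bound better than $O(\ve r^{-d}+M^{-2})$ can hold there. That is far below $\phi_{2\ve}\gtrsim\min\{\ve/r,r^{-2}\}$ at large $|x|$ when $d\ge3$.

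Second, the ``fact'' behind your coordinate-line reduction is false. At distance $\rho$ there are $\gtrsim\rho^d/\ell$ disjoint lattice segments of $\ell$ cells, each unoccupied with probability $q^\ell$. So almost surely there are empty coordinate segments of length $c\log\rho\gg(\log\rho)^{1/d}$, and one-dimensional Robin bounds along lines only see that length scale.

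Third, there is no slack in $\phi$ to absorb an IMS error. The $2^{-1}$ in \eqref{estimateMS} is used up by your own splitting, so positivity is needed with the full $\phi_{2\ve}$. Moreover, since $R(\omega)$ must serve every $\ve$, you cannot avoid the regime $\ve r\ll\kappa$. There $\phi_{2\ve}\le 2\ve\eta\lesssim\ve/r$ is much smaller than an IMS error $\kappa r^{-2}$.

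The paper's route supplies what your argument lacks.
\begin{itemize}
\item \textbf{Stronger probabilistic input.} Its Borel--Cantelli statement, Theorem~\ref{T0.1}, says that for large $n$ every cube $Q_{\tilde n,n}$ meeting $\mathcal L_n$ has fewer than $(n-1)^d$ white cells, i.e.\ more than $n^d-(n-1)^d$ black ones. This is a surface's worth of potential in every cube of side $n\approx(\log_A|x|)^{1/d}$. It is still summable, because failure has probability $q^{n^d(1+o(1))}$ against $\lesssim A^{dn^d}$ cubes, and $A^dq<1$.
\item \textbf{Error-free localization.} It localizes by Neumann bracketing into these cubes, which has no error term.
\item \textbf{Explicit local bound.} It bounds each Neumann ground state from below using the rearrangement of Theorem~\ref{T1.1}: an equal-volume ball of radius $L=n\varkappa_d^{-1/d}$, with the potential pushed into a boundary shell. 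It then applies the explicit computation of Corollary~\ref{C1.3}, which gives $\min\{\ve\pi(1-\delta)/(4L+1),(\pi/(4(L-1)))^2\}$.
\end{itemize}
The constants in $\eta$ come from this bound together with $n\le(\log_A|x|)^{1/d}+2$, that is, $4n+\varkappa_d^{1/d}\le 4(\log_A|x|)^{1/d}+8+\varkappa_d^{1/d}$. They do not come from a margin in a ball covering.
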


This  theorem provides a rigorous quantitative upper bound on the number of discrete eigenvalues \(N(\varepsilon)\), extending the classical Cwikel–Lieb–Rozenbljum (CLR) inequality \cite{Cw}, \cite{ Lieb},
\cite{ R} to a disordered setting.
Instead of just proving they exist, it allows one to mathematically count and limit how many bound states can emerge when the system transitions into randomness.
The formula incorporates both the base potential \(V\) and the "system parameter" \(\phi _{\varepsilon }\) driven by the coupling constant \(\ve \).    It explicitly separates the continuous space effects (the integral over \(|x| < R\)) from the stochastic/random background effects (the integral over \(|x| > R\)).

A key  achievement of Theorem 1 is that the spatial boundary \(R(\omega)\) is  independent of \(\varepsilon \). 
This means the geometric boundary separating the  localized deterministic region  (\(\vert{}x\vert{} < R\)) from the disordered infinity (\(\vert{}x\vert{} > R\)) remains frozen, no matter how weak the coupling becomes.
The scaling dynamics of the  transition is absorbed  entirely into the function \(\phi_\varepsilon\)  which tends to zero as $\varepsilon\to0$.

We can further extend this quantification to the sum of the negative eigenvalues.

\begin{corollary}Let $d\geq 3$, $1<A^d<1/q$,  $\gamma>0$, and let $V\geq 0$.  Let $\lambda_j$  be the  negative  eigenvalues of 
the operator \eqref{operatorH}.  Then
for almost   every  $\omega$, there is a radius  $R=R(\omega)>A$  such that
\[\label{LTMS}
\sum _{j}|\lambda _{j}|^\gamma\le C_{d,\gamma}^{\prime }\left(\int _{|x|<R}V^{d/2+\gamma}(x)dx+\int _{|x|>R}\bigl(V(x)-2^{-1} \phi_{2\ve}(x)\bigr)_+^{d/2+\gamma} dx\right).\]
The constant $C^{\prime}_{d,\gamma}>0$ in this inequality   depends  only on $d$  and $\gamma$.  Furthermore,  the radius \(R(\omega)\) is independent of \(\varepsilon \).
\end{corollary}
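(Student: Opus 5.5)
This is the standard passage from a bound on the counting function to a Lieb--Thirring type bound. I will use the Birman--Schwinger identity in the form of the layer-cake representation, and apply Theorem \ref{CRLMS} to every spectral level at once.

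First I write
\[\notag
\sum_j|\lambda_j|^\gamma=\gamma\int_0^\infty N_E(\ve)\,E^{\gamma-1}\,dE ,
\]
where $N_E(\ve)$ is the number of eigenvalues of $H_\ve$ below $-E$. This equals the number of negative eigenvalues of $H_\ve+E$. Next I split $V=(V-E/2)+E/2\le (V-E/2)_++E/2$. By the variational principle,
\[\notag
N_E(\ve)\le N\Bigl(-\Delta+\ve\sum_n\omega_n\chi(x-n)-(V-E/2)_+ +E/2\Bigr)\le N\Bigl(-\Delta+\ve\sum_n\omega_n\chi(x-n)-(V-E/2)_+\Bigr).
\]
The last inequality holds because adding the positive constant $E/2$ can only decrease the count.

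I then apply Theorem \ref{CRLMS} to the nonnegative potential $W_E=(V-E/2)_+$. The essential point is that the radius $R(\omega)$ produced there depends only on $\omega$ and $A$, not on $V$ or $\ve$. I expect this to be the main step to check. The full-probability event in Theorem \ref{CRLMS} should be constructed from the random configuration alone, namely from the sizes of clusters of sites with $\omega_n=0$ beyond radius $R$, which is where $\phi_\ve$ enters. So a single $R(\omega)$ serves simultaneously for all $E>0$. If the theorem only gave an $R$ depending on $V$, I would either restrict to rational $E$ and use monotonicity of $N_E$ in $E$, or go back into the proof to extract the $V$-independence. With this granted,
\[\notag
N_E(\ve)\le 2^{d/2}C_d\Bigl(\int_{|x|<R}(V-E/2)_+^{d/2}dx+\int_{|x|>R}\bigl((V-E/2)_+-2^{-1}\phi_{2\ve}\bigr)_+^{d/2}dx\Bigr).
\]
On $|x|>R$ I use $\bigl((V-E/2)_+-\phi/2\bigr)_+=(V-\phi/2-E/2)_+$, which holds since $\phi\ge0$.

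Finally I insert this bound into the layer-cake integral and use Fubini together with
\[\notag
\gamma\int_0^\infty (U-E/2)_+^{d/2}E^{\gamma-1}dE=c_{d,\gamma}\,U_+^{d/2+\gamma},\qquad c_{d,\gamma}=\gamma\,2^{\gamma}B(\gamma,d/2+1),
\]
which follows from the substitution $E=2Ut$. This is applied with $U=V$ on $|x|<R$ and $U=V-\phi_{2\ve}/2$ on $|x|>R$. The result is \eqref{LTMS} with $C'_{d,\gamma}=2^{d/2}C_d\,c_{d,\gamma}$, and the radius $R(\omega)$ is again independent of $\ve$.
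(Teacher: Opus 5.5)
Your proposal is correct and follows the paper's proof: the layer-cake formula, the variational bound of the number of eigenvalues below $-E$ by the number of negative eigenvalues for a shifted potential $(V-\cdot)_+$, Theorem~\ref{CRLMS} applied to that potential, and Fubini. The paper shifts by the full $\lambda$ rather than by $E/2$, so your extra factor $2^{\gamma}$ in the constant is unnecessary but harmless.

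Your point that a single $R(\omega)$ must serve every $E$ is well taken, and it holds for the reason you give. In the proof of Theorem~\ref{CRLMS}, $R$ is any radius exceeding an $R_0(\omega)$ chosen so that $-\Delta+2\ve\sum_n\omega_n\chi(x-n)-\tilde\chi_R\phi_{2\ve}\ge 0$, and this condition does not involve $V$.
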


{\it Proof}.
We use the standard layer-cake representation to express the sum of the eigenvalues as an integral of the counting function:
\[\notag
\sum _{j}|\lambda _{j}|{}^{\gamma }=\gamma \int _{0}^{\infty }\lambda ^{\gamma -1}\tilde{N}(\lambda )\,d\lambda \]
Here, \(\tilde{N}(\lambda)\)
 denotes the number of eigenvalues of the operator \eqref{operatorH}  lying below \(-\lambda \). By the variational principle, \(\tilde{N}(\lambda)\) does not exceed the number of negative eigenvalues of  the operator with the shifted potential \((V - \lambda)_+\).
 We apply the bound from Theorem~\ref{CRLMS} to \(\tilde N(\lambda)\) by replacing \(V\) with \((V - \lambda)_+\). 
 For almost every \(\omega \), there exists an \(R = R(\omega) > A\) such that:
 \[\notag
 \tilde {N}(\lambda )\le 2^{d/2}C_{d}\left(\int _{|x|<R}(V(x)-\lambda )_{+}^{d/2}\,dx+\int _{|x|>R}\bigl((V(x)-\lambda )_{+}-2^{-1}\phi _{2\varepsilon }(x)\bigr)_{+}^{d/2}\,dx\right)\]
We substitute this inequality back into the layer-cake integral. This splits the expression into two distinct spatial domains:
\begin{equation*}\begin{split}
\sum_{j}|\lambda_{j}|^{\gamma }\le \gamma 2^{d/2}C_{d}\Bigl(  \int _{0}^{\infty }\lambda ^{\gamma -1}\int _{|x|<R}(V-\lambda )_{+}^{d/2}\,dx\,d\lambda +\\
+\int _{0}^{\infty }\lambda ^{\gamma -1}\int _{|x|>R}\bigl(V-\lambda  -2^{-1}\phi _{2\varepsilon }\bigr)_{+}^{d/2}\,dx\,d\lambda   \Bigr).
\end{split}
\end{equation*}
The  bound \eqref{LTMS} follows by Fubini's theorem,  after we swap the order of integration for both terms. $\,\,\Box$

\bigskip

The corollary presents an  advancement in  Lieb-Thirring-type inequalities (\cite{LT}, \cite{LT2}) for Schrödinger-type operators with random  potentials.
The threshold radius \(R(\omega)\) is random, meaning the boundary where the potential's effective behavior changes depends on the specific realization of the medium.

\begin{theorem}\label{Td=1} 
Let $d=1$, $1<A<1/q$,  and let $V\geq 0$. Then
for almost   every  $\omega$, there is an $R=R(\omega)>A$  such that
\[\notag
 N(\ve ) \leq 1+2\left(\int _{|x|<R}|x|V(x)dx+\int _{|x|>R}|x| \left(V(x)-\frac{1}{2} \phi _{2\epsilon }(x)\right)_{+}dx\right).
\] Furthermore,  \(R(\omega)\) is independent of \(\varepsilon \).
\end{theorem}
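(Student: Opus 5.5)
The plan is to copy the scheme behind Theorem~\ref{CRLMS}, with the CLR inequality replaced by the one-dimensional Bargmann-type bound
\[\notag
N(-d^2/dx^2-W)\le 1+\int_{\Bbb R}|x|W(x)\,dx,\qquad W\ge0.
\]
The factor $2$ and the halved barrier $\frac12\phi_{2\ve}$ come from splitting the operator in two:
\[\notag
H_\ve=\tfrac12\Bigl(-\Delta+2\ve\sum_n\omega_n\chi(x-n)-\phi_{2\ve}\,\mathbf 1_{\{|x|>R\}}\Bigr)+\tfrac12\Bigl(-\Delta-2V+\phi_{2\ve}\,\mathbf 1_{\{|x|>R\}}\Bigr).
\]
If the first bracket is nonnegative, the variational principle gives
\[\notag
N(H_\ve)\le N\bigl(-\Delta-2W\bigr),\qquad W=V\mathbf 1_{\{|x|<R\}}+\bigl(V-\tfrac12\phi_{2\ve}\bigr)_+\mathbf 1_{\{|x|>R\}}.
\]
The one-dimensional bound applied to $2W$ then yields exactly the stated inequality $1+2\int|x|W$.

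The heart of the proof is the deterministic-on-a-full-measure-set claim. For a.e.\ $\omega$ there is $R(\omega)>A$, independent of $\ve$, such that
\[\notag
-\frac{d^2}{dx^2}+\ve\sum_n\omega_n\chi(x-n)-\phi_\ve\,\mathbf 1_{\{|x|>R\}}\ge0\qquad\text{for all }\ve>0.
\]
This should be the same lemma used for Theorem~\ref{CRLMS}, specialised to $d=1$, where $\varkappa_1=2$. I would prove it in three steps.

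\textbf{Borel--Cantelli.} Call a run of consecutive zeros among the $\omega_n$ a ``hole''. Since $P(\text{hole of length}\ge L\text{ starting at }n)=q^L$ and $A<1/q$, the first Borel--Cantelli lemma shows that a.s., for $|n|$ large, every hole starting near $n$ has length at most $\log_{1/q}|n|\le\log_A|n|$. The radius $R(\omega)$ is chosen beyond the last violation, so it does not involve $\ve$.

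\textbf{Localisation.} Split $\{|x|>R\}$ into cells, each consisting of one hole of length $\ell$ together with the adjacent unit intervals where $\omega_n=1$. Use an IMS partition of unity, or simply Neumann bracketing at the midpoints of the occupied unit cells, to reduce to a one-cell problem. On a cell we have $-u''+\ve\mathbf 1_{\text{ends}}u$ with Neumann conditions.

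\textbf{One-cell estimate.} The lowest eigenvalue of this cell operator is bounded below by an explicit function of $\ell$ and $\ve$. For large $\ell$ it behaves like $(\pi/(2(\ell+c)))^2$, since the barrier acts like a Dirichlet wall. For small $\ve$ it behaves like $\sqrt\ve\tanh\sqrt\ve/(\ell+c)$, which comes from matching the $\cosh$ solution in the barrier with the linear/cosine solution in the hole. Because $\ell\le\log_A|x|$ on the cell, these bounds dominate $\min\{\sqrt\ve\tanh(\sqrt\ve)\eta,\eta^2\}=\phi_\ve$, with $\eta$ given by \eqref{eta} at $d=1$; the constants $8$ and $\varkappa_d^{1/d}$ there absorb the cell padding and the localisation error. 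Monotonicity of $\eta$ lets us replace $\ell$ by its value at the cell point.

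\textbf{Main obstacle.} The hardest step is this explicit one-cell lower bound. One must show, uniformly in $\ve$, that the ground state energy of a 1D box with weak barriers at its ends is at least $\min\{\sqrt\ve\tanh\sqrt\ve/(\ell+c),(\pi/(2(\ell+c)))^2\}$, while keeping the bracketing errors within the constants of $\eta$. I would first try a transfer-matrix computation: write the ground state as $\cos(k(x-x_0))$ in the hole and $\cosh(\sqrt{\ve-k^2}\,\cdot)$ in the barrier, and analyse the secular equation $k\tan(k\ell/2)=\sqrt{\ve-k^2}\tanh(\sqrt{\ve-k^2})$. If $k^2$ is below the claimed bound, this equation should produce a contradiction.

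Finally, the bracket with $2\ve$ needs the lemma at coupling $2\ve$, and $R$ stays $\ve$-independent. Together with the Bargmann step this concludes the argument.
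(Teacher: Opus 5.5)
Your reduction is the paper's: the half--half splitting is its subadditivity step with $\tau=1/2$. Bargmann's bound applied to $(2V-\tilde\chi_R\phi_{2\ve})_+$ then gives the stated inequality, once you have $-\frac{d^2}{dx^2}+2\ve\sum_n\omega_n\chi(x-n)-\phi_{2\ve}\tilde\chi_R\ge 0$ with $R$ independent of $\ve$.

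The gap is in your proof of this positivity lemma, at the claim that the one-cell bounds dominate $\phi_\ve$. For $d=1$ one has $\eta(x)=\pi/(2\log_A|x|+5)$. So for small $\ve$ the active branch is $\phi_\ve(x)=\sqrt\ve\tanh(\sqrt\ve)\,\pi/(2\log_A|x|+5)\approx\frac{\pi}{2}\,\ve/\log_A|x|$. Your weak-coupling bound $\sqrt\ve\tanh\sqrt\ve/(\ell+c)\approx\ve/\ell$ is smaller by a factor close to $2/\pi$ when $\ell$ is near $\log_A|x|$, and your Borel--Cantelli step does not exclude such $\ell$. The additive constants $8$ and $\varkappa_1=2$ in $\eta$ cannot absorb a multiplicative loss.

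This is not an artefact of a crude ODE estimate. Neumann bracketing at the midpoints of occupied cells produces cells consisting of a hole of length $\ell$ with half an occupied cell at each end. The constant test function shows that the Neumann ground state of such a cell is at most $\ve/(\ell+1)$. An IMS localization would instead cost an error of order $(\text{overlap})^{-2}$, far above $\phi_\ve$. Your secular equation, with $\tanh\sqrt{\ve-k^2}$, describes a full occupied cell on each side of the hole. No partition provides that when an occupied cell is isolated between two holes.

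Runs of zeros of length at least $\theta\log_{1/q}|x|$ occur infinitely often a.s.\ for every $\theta<1$. Hence even the optimal bound on hole lengths does not rescue the argument when $\ln A>\frac{2}{\pi}\ln(1/q)$, which the theorem allows. Beyond any $R$ there are cells on which your lower bound falls below $\phi_\ve$ for all sufficiently small $\ve$, so no $\ve$-independent $R$ comes out. Separately, Borel--Cantelli with threshold $\log_{1/q}|n|$ gives the divergent series $\sum_n|n|^{-1}$. Use $\log_A|n|$ directly instead, since $q^{\log_A|n|}=|n|^{-\ln(1/q)/\ln A}$ is summable.

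The paper avoids this by localizing on fixed-length cells rather than on single holes. It tiles the layer $A^{n-1}<|x|\le A^{n}$ by intervals of length $n\le\log_A|x|+2$. Theorem~\ref{T0.1}, which uses $Aq<1$, shows that eventually each of them contains at least two occupied cells, i.e.\ barrier mass $2$ per cell instead of $1$. Its model problem on $[-n/2,n/2]$ with barrier $\{n/2-1<|x|<n/2\}$ is exactly your secular equation with $\ell=n-2$. It gives either $\lambda\ge(\pi/(2n-4))^2\ge\eta^2$ or $\lambda>\pi\sqrt\ve\tanh\sqrt\ve/(2n+1)\ge\sqrt\ve\tanh(\sqrt\ve)\,\eta$. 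This is precisely where the factor $\pi$ and the $+5$ in $\eta$ come from.

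If you switch to that decomposition, be careful with one step. The paper passes from an arbitrary placement of the occupied cells to the symmetric configuration via the rearrangement of Theorem~\ref{T1.1}. The Polya--Szego inequality is not available for functions that do not vanish on the boundary: rearranging a monotone function on $[0,n]$ onto $[-n/2,n/2]$ multiplies $\int|u'|^2$ by $4$. So that step needs an independent justification.
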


Theorem \ref{Td=1} highlights a fundamental contrast between one-dimensional and higher-dimensional (\(d \geq 3\)) random Schrödinger operators. In dimension \(d=1\), the operator exhibits the  property that an arbitrarily weak attractive potential can create at least one bound state. This is reflected in the first  term \(1\) of the bound. Furthermore, the dimensionality dictates a different spatial decay and moment requirement on the potential \(V(x)\); for \(d=1\), the first moment \(\int |x|V(x) \, dx\) governs the eigenvalue counting, whereas higher dimensions rely on \(L^{d/2}\) norms.

\begin{theorem}\label{T3}

Let $d\geq3$,  $1<A^d<1/q$,  and let the function  $\eta$ be    defined   by \eqref{eta}.  Suppose  the potential $V$ is given by
\[\notag
V(x) = C\left(\frac{\eta(x)}{\pi\varkappa_d^{1/d}}\right)^2,\qquad \text{where}\qquad
C< \pi^2  \varkappa_d^{2/d}.   
\]
Then   the following asymptotic bound holds:
\[\notag
\limsup_{\varepsilon \to 0}\Bigl(\frac{
N(\varepsilon )}{g(\varepsilon )}\Bigr)\leq C_d \left(\frac{C\pi }{8d^{2}\ln A}\right)^{d/2},
\]
where the scaling function \(g(\varepsilon)\) and the parameter \(T_{\varepsilon }\) are defined as
\[
g(\ve)=
\frac{A^{dT_{\varepsilon }}}{T_{\varepsilon }^{d/2+1}}\qquad\text{and}\qquad T_\varepsilon = \left(\frac{C}{4\varepsilon \pi (1-\delta _{2\varepsilon })\varkappa _{d}^{1/d}}-2-\varkappa _{d}^{1/d}/4\right)^d,
\]
and the auxiliary parameter  \(\delta _{\varepsilon }\) is given by
\[\notag
\delta_\ve=1-\frac{\tanh{\sqrt\ve}}{\sqrt\ve}.
\]
The constant $C_d$ is the same as in  \eqref{estimateMS}.
\end{theorem}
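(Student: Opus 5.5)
The plan is to derive Theorem~\ref{T3} directly from Theorem~\ref{CRLMS}, applied to the explicit potential $V=C\bigl(\eta/(\pi\varkappa_d^{1/d})\bigr)^2$. For almost every $\omega$ that theorem supplies a radius $R=R(\omega)>A$, independent of $\varepsilon$, such that
\[\notag
N(\varepsilon)\le 2^{d/2}C_d\Bigl(\int_{|x|<R}V^{d/2}\,dx+\int_{|x|>R}\bigl(V-2^{-1}\phi_{2\varepsilon}\bigr)_+^{d/2}\,dx\Bigr).
\]
The first integral is a finite constant, since $V$ is bounded and $R$ does not depend on $\varepsilon$. The scaling function $g(\varepsilon)$ grows like $A^{dT_\varepsilon}$, and $T_\varepsilon\to\infty$ as $\varepsilon\to0$, so $g(\varepsilon)\to\infty$. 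Hence the first term vanishes after division by $g(\varepsilon)$, and the whole problem reduces to the asymptotics of the exterior integral.

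Next I will rewrite the exterior integrand. Observe that $\sqrt{2\varepsilon}\tanh\sqrt{2\varepsilon}=2\varepsilon(1-\delta_{2\varepsilon})$. Therefore on the set where the minimum defining $\phi_{2\varepsilon}$ is its linear branch,
\[\notag
V-2^{-1}\phi_{2\varepsilon}=\eta\Bigl(\frac{C\eta}{\pi^2\varkappa_d^{2/d}}-\varepsilon(1-\delta_{2\varepsilon})\Bigr).
\]
This expression is positive exactly when $\eta$ exceeds the threshold $\varepsilon(1-\delta_{2\varepsilon})\pi^2\varkappa_d^{2/d}/C$.

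Since $\eta$ is a decreasing function of $s=(\log_A|x|)^{1/d}$, this condition is equivalent to $s$ lying below an explicit level. That level is essentially $C/(4\varepsilon\pi(1-\delta_{2\varepsilon})\varkappa_d^{1/d})$ minus bounded constants. I expect the result to match $T_\varepsilon^{1/d}$ as defined in the statement, after the constants in the definition of $\eta$ are tracked carefully. The region $s>T_\varepsilon^{1/d}$ must then contribute nothing. On the part of that region where $\phi_{2\varepsilon}=\eta^2$, I will compare $V$ with $\eta^2/2$ using the hypothesis $C<\pi^2\varkappa_d^{2/d}$. This step, showing that positivity of the integrand is confined to $\{s<T_\varepsilon^{1/d}\}$ and that the exact constants in $T_\varepsilon$ come out right, is the point I expect to require the most care.

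Finally, I will compute the integral over $R<|x|<A^{T_\varepsilon}$ in polar coordinates with $t=\log_A|x|$. Writing $r=A^t$, the volume element becomes $d\varkappa_d\ln A\,A^{dt}\,dt$. The integrand is a Laplace-type integral dominated by the exponential weight $A^{dt}$ near the upper endpoint $t=T_\varepsilon$. There the bracket $(V-\phi_{2\varepsilon}/2)$ vanishes linearly in $T_\varepsilon-t$. Its slope is computed from $\partial_t s=\frac{1}{d}t^{1/d-1}$ together with $\eta\approx\pi\varkappa_d^{1/d}/(4s)$, and this produces the powers of $T_\varepsilon$. Substituting $u=T_\varepsilon-t$ and using $\int_0^\infty A^{-du}u^{d/2}\,du=\Gamma(d/2+1)/(d\ln A)^{d/2+1}$ then gives leading behaviour of the form $\mathrm{const}\cdot A^{dT_\varepsilon}T_\varepsilon^{-d/2-1}=\mathrm{const}\cdot g(\varepsilon)$. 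The contribution from $t$ far below $T_\varepsilon$ is exponentially smaller and can be dropped by dominated convergence. Collecting the constants should yield the stated factor $C_d\bigl(C\pi/(8d^2\ln A)\bigr)^{d/2}$ as an upper bound for the $\limsup$. Any loss in constants at this stage is harmless, since only an upper bound is claimed.
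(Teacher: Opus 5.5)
Your plan follows the paper's proof step for step. You apply Theorem~\ref{CRLMS}, discard the $\varepsilon$-independent interior term, reduce $2^{-1}\phi_{2\varepsilon}$ to its linear branch $\varepsilon(1-\delta_{2\varepsilon})\eta$, and use Laplace's method at $t=T_\varepsilon$. The threshold coming from $V>\varepsilon(1-\delta_{2\varepsilon})\eta$ is exactly $T_\varepsilon$. The Laplace computation gives $I(\varepsilon)\sim\bigl(C\pi/(16d^2\ln A)\bigr)^{d/2}g(\varepsilon)$, which is precisely the stated constant after multiplying by $2^{d/2}C_d$. So, contrary to your last sentence, nothing may be lost there: the constant is explicit, and even an $O(1)$ upward error in $T_\varepsilon^{1/d}$ would change $A^{dT_\varepsilon}$ by an unbounded factor.

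The genuine gap is the step you flagged. Since $V=\frac{C}{\pi^2\varkappa_d^{2/d}}\eta^2$ identically, the inequality $V\le\eta^2/2$ on $\{\phi_{2\varepsilon}=\eta^2\}$ is equivalent to $C\le\pi^2\varkappa_d^{2/d}/2$. The hypothesis $C<\pi^2\varkappa_d^{2/d}$ only gives $V<\eta^2$, which does not beat the factor $2^{-1}$. This set is not small: $\phi_{2\varepsilon}=\eta^2$ wherever $\eta\le\sqrt{2\varepsilon}\tanh\sqrt{2\varepsilon}$, i.e.\ for all large $|x|$. There the integrand is $\bigl(\frac{C}{\pi^2\varkappa_d^{2/d}}-\frac12\bigr)_+^{d/2}\eta^d\asymp(\log_A|x|)^{-1}$, which is not integrable at infinity. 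Hence for $\pi^2\varkappa_d^{2/d}/2<C<\pi^2\varkappa_d^{2/d}$ the right-hand side of \eqref{estimateMS} is $+\infty$ for every $\varepsilon$, and Theorem~\ref{CRLMS} used as a black box yields nothing. The paper's proof has the same hole. It says that for large $|x|$ the linear term exceeds $\eta^2$, which exceeds $V$, and then integrates with the linear term. But then the minimum is $\eta^2$, and what is required is $\eta^2/2\ge V$, not $\eta^2\ge V$.

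For $C\le\pi^2\varkappa_d^{2/d}/2$ your argument is complete as written. The positivity set $\{\eta>\varepsilon(1-\delta_{2\varepsilon})\pi^2\varkappa_d^{2/d}/C\}$ then lies inside the linear-branch region $\{\eta\ge 2\varepsilon(1-\delta_{2\varepsilon})\}$, and the $\eta^2$-branch contributes zero.

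For the full range you have to reopen the proof of Theorem~\ref{CRLMS}:
\begin{itemize}
\item Split $V-\varepsilon\sum_n\omega_n\chi_n$ with weights $1-\tau$ and $\tau$ instead of $\frac12,\frac12$.
\item Use nonnegativity of $-\Delta+\varepsilon'\sum_n\omega_n\chi_n-\tilde\chi_R\phi_{\varepsilon'}$ with $\varepsilon'=\varepsilon/(1-\tau)$.
\item Choose $\tau<\frac12$ with $C<(1-\tau)\pi^2\varkappa_d^{2/d}$.
\end{itemize}
Then the $\eta^2$-branch contributes nothing and the prefactor becomes $\tau^{-d/2}$. In the threshold, $\delta_{2\varepsilon}$ is replaced by the smaller $\delta_{\varepsilon/(1-\tau)}$. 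The new threshold $T'_\varepsilon$ therefore satisfies $T_\varepsilon^{1/d}-(T'_\varepsilon)^{1/d}\to\frac{C(1-2\tau)}{12\pi\varkappa_d^{1/d}(1-\tau)}>0$. Thus $T_\varepsilon-T'_\varepsilon\to\infty$, the resulting bound is $o(g(\varepsilon))$, and the stated inequality follows with room to spare.
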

This theorem establishes an asymptotic upper bound for the number of discrete eigenvalues \(N(\varepsilon)\) as the coupling constant \(\varepsilon \) approaches zero. While general inequalities provide bounds for arbitrary potentials, this result characterizes the specific scaling behavior of bound states near the threshold. 
Central to this bound is the scaling function \(g(\varepsilon)\), in which the term \(T_{\varepsilon }\) is asymtotically proportional to \(\varepsilon ^{-d}\). Because \(T_{\varepsilon }\) appears in the exponent of the base \(A\), the result indicates that as the perturbation weakens (\(\varepsilon \to 0\)), the number of discrete eigenvalues grows exponentially relative to the inverse power of the coupling constant.

For the one-dimensional case (\(d = 1\)), the statement reduces to the following result.
Because the proof of this one-dimensional case differs substantially from the proof of Theorem~\ref{T3}, we state it separately.
\begin{theorem}\label{T4}
Let \(d=1\),  let $1<A<1/q$ and let the potential be given by:\[\notag V(x) = \frac{C}{\bigl(10+4(\log_A|x|)_+\bigr)^2},\qquad \text{where} \qquad \notag C < 4\pi^2\]
Then \[\notag \limsup_{\varepsilon \to 0}\Bigl(\frac{
N(\varepsilon )}{g(\varepsilon )}\Bigr)\leq\sqrt{  \frac{    C }{2\pi \ln A }},\]
where \[\notag g(\varepsilon ) =\frac{ A^{T_\ve}}{T_\ve^{3/2}}\qquad \text{and}   \qquad T_{\varepsilon} = \frac{C}{8\pi\varepsilon(1-\delta_{2\ve}) }-\frac52.\]
\end{theorem}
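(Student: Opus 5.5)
Apply Theorem~\ref{Td=1} to the given $V$ and evaluate the right-hand side as $\varepsilon\to0$. The first integral, $\int_{|x|<R}|x|V\,dx$, is finite and independent of $\varepsilon$, because $R(\omega)$ does not depend on $\varepsilon$. The whole growth of $N(\varepsilon)$ must therefore come from
\[\notag
I(\varepsilon)=2\int_{|x|>R}|x|\Bigl(V(x)-\tfrac12\phi_{2\varepsilon}(x)\Bigr)_+dx .
\]
For $d=1$ we have $\varkappa_1=2$, so \eqref{eta} reads $\eta(x)=\pi/\bigl(2(\log_A|x|+10)\bigr)$. In the regime that matters, $\eta$ is small and $\sqrt{2\varepsilon}\tanh\sqrt{2\varepsilon}=2\varepsilon(1-\delta_{2\varepsilon})$. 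Hence for $\varepsilon$ small and $|x|$ large the minimum defining $\phi_{2\varepsilon}$ is $2\varepsilon(1-\delta_{2\varepsilon})\eta(x)$ wherever $\eta(x)$ exceeds this quantity. Where the other branch $\eta^2$ is active, the integrand is already negative, since $C<4\pi^2$ gives $V<\tfrac12\eta^2$ there. I will check this comparison first, since it uses the constant $10$ versus $4$ in the denominators.

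Set $t=\log_A|x|$. Then
\[\notag
V-\tfrac12\phi_{2\varepsilon}=\frac{C}{(10+4t)^2}-\frac{\varepsilon(1-\delta_{2\varepsilon})\pi}{2(t+10)}.
\]
Up to the shift constants, this is positive exactly when $t$ is below a threshold of order $C/(8\pi\varepsilon(1-\delta_{2\varepsilon}))$. Matching the constants precisely should reproduce $T_\varepsilon=\frac{C}{8\pi\varepsilon(1-\delta_{2\varepsilon})}-\frac52$. If the shifts $10$ and $4t+40$ do not match exactly, I will bound $\frac{1}{t+10}\ge\frac{4}{4t+10}$ so that both terms share the denominator $(10+4t)$. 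Using this simplification, the integrand becomes
\[\notag
\frac{1}{(10+4t)^2}\Bigl(C-2\pi\varepsilon(1-\delta_{2\varepsilon})(10+4t)\Bigr)_+,
\]
whose zero is exactly $t=T_\varepsilon$. This confirms the formula.

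Next change variables: $|x|=A^t$ and $dx=A^t\ln A\,dt$ on each half-line, which gives
\[\notag
I(\varepsilon)\le 4\ln A\int^{T_\varepsilon}A^{2t}\,\frac{8\pi\varepsilon(1-\delta_{2\varepsilon})(T_\varepsilon-t)}{(10+4t)^2}\,dt .
\]
Laplace's method at the upper endpoint, where the weight $A^{2t}$ concentrates, gives $\int^{T}A^{2t}(T-t)\,dt\approx A^{2T}/(2\ln A)^2$, while the denominator evaluated at $t\approx T$ gives $16T^2$. Since $\varepsilon\sim C/(8\pi T)$, the result is of order $A^{2T}/T^3$. This is the \emph{square} of the claimed $g(\varepsilon)=A^{T}/T^{3/2}$, which suggests the intended estimate takes a square root somewhere. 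The most likely route is a one-dimensional Bargmann/CLR-type variant with the weight $\bigl(|x|\int\ldots\bigr)^{1/2}$; alternatively, Theorem~\ref{Td=1} might be applied after a dyadic decomposition in $t$, with a Cauchy--Schwarz step.

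The main obstacle is therefore reconciling the power of $A^{T_\varepsilon}$ with $g(\varepsilon)$. My first attempt will bypass the first moment: use Dirichlet--Neumann bracketing on the intervals $\{A^k<|x|<A^{k+1}\}$ and count eigenvalues on each interval by the Calogero-type bound $N\le 1+\frac1\pi\,\ell\int\sqrt{(V-\frac12\phi)_+}$. This $\sqrt{\cdot}$ form yields $\int^{T}A^{t}\sqrt{T-t}\,\ldots\,dt\sim A^{T}/T^{3/2}$, together with the constant $\sqrt{C/(2\pi\ln A)}$. The precise constant then follows from $\int_0^\infty e^{-s\ln A}\sqrt s\,ds=\frac{\sqrt\pi}{2}(\ln A)^{-3/2}$, combined with $\varepsilon T_\varepsilon\to C/(8\pi)$. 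The randomness enters only through the barrier $\phi_{2\varepsilon}$ that Theorem~\ref{Td=1}'s proof provides outside $R(\omega)$, and the finite region $|x|<R$ contributes only $O(1)$.
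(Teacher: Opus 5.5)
Your diagnosis is correct and is exactly the paper's move. Theorem~\ref{Td=1} yields only order $A^{2T_\varepsilon}/T_\varepsilon^{3}$, so a $\sqrt{W}$-type count is needed, and the proof of Theorem~\ref{T4} does not use Theorem~\ref{Td=1} at all. It takes the reduction \eqref{N<N(2V)}, $N(\varepsilon)\le N(2V-\tilde\chi_R\phi_{2\varepsilon})$. It then applies Calogero's bound $N\le 1+\frac2\pi\int_{\mathbb R}W_+^{1/2}\,dx$ once, on the whole line, to $W=2V-\tilde\chi_R\phi_{2\varepsilon}$. This is legitimate, with no bracketing, because $W_+$ is even and nonincreasing in $|x|$:
\begin{itemize}
\item for $|x|>R$ (first branch of the minimum) and $s=4\log_A|x|+10$, one has $W=2\bigl(C-2\pi\varepsilon(1-\delta_{2\varepsilon})s\bigr)/s^2$, which is decreasing in $s$ wherever it is positive;
\item on $|x|<R$, $W=2V$;
\item $W$ jumps down at $|x|=R$.
\end{itemize}
Laplace's method then gives the constant $\frac2\pi\cdot\sqrt2\cdot\frac{\sqrt{\pi C}}{4\sqrt{\ln A}}=\sqrt{C/(2\pi\ln A)}$. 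The factor $\sqrt2$ comes from the $2V$ in the reduction, which your $\sqrt{(V-\frac12\phi_{2\varepsilon})_+}$ drops.

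As written, however, your argument has two genuine gaps.

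\textbf{The threshold step.} It uses a false inequality: $\frac{4}{4t+10}=\frac1{t+5/2}>\frac1{t+10}$. With your reading $\eta=\pi/(2(t+10))$, the replacement enlarges the subtracted barrier and lowers the integrand. That is the wrong direction for an upper bound on $N(\varepsilon)$. Under that reading the true zero is $T_\varepsilon+\frac{15}{2}+o(1)$, and you would lose a factor $A^{15/2}$. The fix is to read \eqref{eta} as $\eta=\pi\varkappa_d^{1/d}/\bigl(4(\log_A|x|)^{1/d}+8+\varkappa_d^{1/d}\bigr)$, i.e.\ $\eta=2\pi/(4t+10)$ for $d=1$; this is what makes $V=C\eta^2/(4\pi^2)$. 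Then $\frac12\phi_{2\varepsilon}=2\pi\varepsilon(1-\delta_{2\varepsilon})/(4t+10)$ and $T_\varepsilon$ is exact.

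Relatedly, $C<4\pi^2$ gives only $V<\eta^2$, not $V<\frac12\eta^2$. The $\eta^2$-branch is active once $4t+10>\pi/(\varepsilon(1-\delta_{2\varepsilon}))$, and there the integrand vanishes only if $C\le 2\pi^2$; otherwise the integral diverges. The paper's proof makes the same comparison, so this restriction affects both arguments.

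\textbf{The final route.} It is only sketched, and its constant is asserted rather than derived. The per-piece bound you quote is not correct as written. On a Neumann interval $I$, min-max gives $1+\frac{|I|}{\pi}\sup_IW_+^{1/2}$. On your unit steps in $t=\log_A|x|$, this replaces $\ln A\int_0^\infty A^{-s}\sqrt s\,ds$ by the Riemann sum $(A-1)\sum_{j\ge0}A^{-(\theta+j)}\sqrt{\theta+j}$, with $\theta=\{T_\varepsilon\}$. For large $A$ (admissible when $q$ is small) this overshoots the stated constant by a factor growing linearly in $A$.

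With steps of length $h$ in $t$ the route does work. The $O(T_\varepsilon/h)$ extra units are negligible against $g(\varepsilon)$. Monotonicity gives $\sum_I|I|\sup_IW_+^{1/2}\le A^h\int W_+^{1/2}\,dx+O(1)$. Letting $h\to0$ yields $\limsup N/g\le\frac12\sqrt{C/(2\pi\ln A)}$. Here the Weyl constant $1/\pi$ replaces Calogero's $2/\pi$, so the completed bracketing argument is more elementary than the paper's and gains a factor $2$.
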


While establishing definitive lower bounds is notoriously difficult, we can compare \(N(\varepsilon)\), the number of negative eigenvalues of the random operator
\[\notag
H_\varepsilon =-\Delta+\ve\sum_{n\in {\Bbb Z}^d}\omega_n \chi(x-n)-V(x)
\]
 with \(N_0(\varepsilon)\), the number of eigenvalues of the non-random operator
\[\label{notrandom}
-\Delta-V
\]
situated strictly 
below $-\ve<0.$ Beyond the obvious  inequality $N_0(\ve)\leq N(\ve)$, we establish the following theorem, which  quantifies the surplus of negative eigenvalues in the random operator 
$H_\varepsilon $ compared to the non-random counterpart  \eqref{notrandom} .

\begin{theorem}\label{T7}
Let $d=1$, and let $V(x)$ be a real, bounded potential satisfying
\[ \notag
V(x) = \frac{C}{(\log_A |x|)^2} \quad \text{for } |x| > A, 
\]
where $A > 1/q$ and $0 < C < \pi^2$. Let $N_0(\ve)$ denote the number of eigenvalues of the operator
\begin{equation}\notag
-\frac{d^2}{dx^2}  - V(x)
\end{equation}
below $-\ve$.
Then  the following asymptotic lower bound holds:
\begin{equation}\notag
\liminf_{\ve\to0^+} \sqrt{\ve} \bigl(N(\ve)-N_0(\ve)\bigr) \geq \frac{\sqrt{C}}{2} \left( \frac{1}{\cos(\sqrt{C}/2)} - 1 \right). 
\end{equation}
\end{theorem}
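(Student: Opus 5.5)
Since $N(\ve)-N_0(\ve)$ is bounded below by the number of extra bound states the random operator produces, I will build $\gg \ve^{-1/2}$ orthogonal test functions on which the quadratic form of $H_\ve$ is negative. These functions will live in long random intervals where $\omega_n=0$, so the random potential vanishes on their support. A counting/min--max argument then separates the eigenvalues already below $-\ve$ for $-d^2/dx^2-V$ from the new ones in $[-\ve,0)$.

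\emph{Step 1 (random geometry).} By Borel--Cantelli, since $A>1/q$, almost surely for all large $k$ there is a run of zeros $\omega_n=0$ of length at least $L_k\approx \log_{1/q}A^k>k$ inside the dyadic-type shell $A^{k}<|x|<A^{k+1}$. Roughly, the probability of no run of length $\ell$ in a block of size $A^k$ is about $\exp(-A^kq^\ell)$, which is summable when $\ell = k\log_{1/q}A - O(\log k)$. Because $A>1/q$, this gives runs of length $\ell_k\ge k(1+c)$ in the $k$-th shell. On such a run the operator is $-d^2/dx^2-V$ with $V\approx C/k^2$.

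\emph{Step 2 (test functions).} On an interval $I_k$ of length $\ell_k$ I take the Dirichlet ground state. Its energy is approximately $\pi^2/\ell_k^2 - C/k^2$. This is negative when $\ell_k>\pi k/\sqrt C$, which $C<\pi^2$ permits once $\ell_k\ge k$. Having several subintervals per shell, or using higher Dirichlet modes, gives more states. The $\sqrt C$ and $\cos(\sqrt C/2)$ constants suggest that the right comparison is not one cell per shell. Instead, one should compare with the continuous problem: after the change of variables $t=\log_A|x|$, the potential becomes $C/t^2$ and the essential structure is a Bessel/Euler operator, whose eigenvalue count below $-\ve$ is explicit. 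The surplus then comes from the range $|x|\lesssim \ve^{-1/2}\cdot(\text{const})$, where states with energies in $(-\ve,0)$ of the deterministic operator, essentially the WKB count $\frac1\pi\int (V-\ve)_+^{1/2}$ versus $\frac1\pi\int V^{1/2}$ restricted to the region where $V\approx\ve$, become genuinely negative in $H_\ve$ because the random potential vanishes on the relevant supports. My plan is to compute $N_0(\ve)$ by WKB/Sturm oscillation for $-u''-(C/\log_A^2|x|)u=-\ve u$. I will then lower-bound $N(\ve)$ by the Sturm oscillation count of $-u''-Vu=0$ on the random zero-runs glued with Dirichlet decoupling, since Dirichlet bracketing only raises eigenvalues.

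\emph{Step 3 (the explicit constant).} The difference should reduce to integrals of the form $\frac1\pi\int\bigl(\sqrt{V}-\sqrt{(V-\ve)_+}\bigr)dx$ over the turning-point region $\log_A|x|\sim \sqrt{C/\ve}$. After rescaling $s=\sqrt{\ve}\,\log_A|x|/\sqrt C$, this gives $\ve^{-1/2}$ times an explicit integral. I expect the phase matching at the turning point (a Dirichlet cut where $V=\ve$) to produce the factor $\frac{\sqrt C}{2}\bigl(\sec(\sqrt C/2)-1\bigr)$, coming from solving $-u''=(C/t^2)u$-type equations on a half-region with the $\tan$/$\cos$ boundary matching.

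\emph{Main obstacle.} The hardest part is reconciling two lengths: the random runs have length only $O(\log|x|)$, while the oscillation count uses the full line. Dirichlet decoupling at run endpoints loses one state per run, and there are $\sim \log_A$-many relevant shells, i.e.\ $O(\ve^{-1/2})$ runs, which is the same order as the claimed gain. I would try first to use the fact that on non-run cells the potential $\ve$ costs only $O(\ve)$ per unit length, so one can use global test functions, namely deterministic eigenfunctions of $-u''-Vu$ with energy in $(-\ve,-\ve/2)$, modified by Neumann-type gluing rather than Dirichlet cuts, and bound the random term by $\ve\|u\|^2_{\text{non-runs}}$, which is small relative to the energy gap.
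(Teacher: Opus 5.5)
\textbf{There is a genuine gap.} Your proposal never identifies the mechanism that produces the extra eigenvalues, and the concrete steps it commits to fail.

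\emph{Step 2.} Here the inequality runs the wrong way. $C<\pi^2$ means $\pi/\sqrt C>1$, so a zero-run of length $\ell_k\approx k\log_{1/q}A$ in a region where $V\approx C/k^2$ has a negative Dirichlet ground state only if $\log_{1/q}A>\pi/\sqrt C$. In that regime $N(\varepsilon)=\infty$ almost surely (cf. Theorem~\ref{sake}), and the statement is empty. Otherwise every state confined to a run by hard walls is positive. Either way, such states do not depend on $\varepsilon$, so they cannot produce a count of order $\varepsilon^{-1/2}$.

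\emph{Step 3.} This does not repair the problem. The quantity $\int(\sqrt V-\sqrt{(V-\varepsilon)_+})\,dx$ diverges, since on $\{V<\varepsilon\}$ it equals $\int\sqrt V\,dx$ with $\sqrt V\sim 1/\log|x|$. Restricted to any window $\log_A|x|\asymp\sqrt{C/\varepsilon}$, it is exponentially large in $\varepsilon^{-1/2}$, not of order $\varepsilon^{-1/2}$. Deterministic eigenfunctions with energies in $(-\varepsilon,0)$ are spread over exponentially long stretches, where the random potential equals $\varepsilon$ on a fraction $\approx p$ of the cells. So $\varepsilon\|u\|^2_{\text{non-runs}}$ is of order $\varepsilon\|u\|^2$, which is not small relative to the energy gap. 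The constant $\frac{\sqrt C}{2}\bigl(\sec(\sqrt C/2)-1\bigr)$ is only ``expected'', never derived.

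Also, Dirichlet decoupling never loses states in a lower bound, since $N(H)\ge\sum_j N(H_{D,j})$. The real difficulty is that a run with hard walls carries less than one bound state.

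\textbf{The missing idea is a soft, $\varepsilon$-dependent well that gives exactly one bound state per shell.}

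\emph{The well.} Do not cut at the ends of the run $I$ of length $L$. The random potential is at most $\varepsilon$ and vanishes on $I$, so it is bounded above by $\varepsilon\mathbf 1_{\mathbb R\setminus I}$. The whole-line ground state of $-d^2/dx^2+\varepsilon\mathbf 1_{\mathbb R\setminus I}$ satisfies $\sqrt\lambda/\cos(\sqrt\lambda L/2)=\sqrt\varepsilon$. The left side is increasing in $\lambda$, so $\lambda<C/L^2$ if and only if $L<\sqrt{C/\varepsilon}/\cos(\sqrt C/2)$. This is where the cosine comes from, and $C<\pi^2$ is what makes it meaningful.

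\emph{Localization.} Since $\varepsilon-\lambda$ is of order $\varepsilon$, the state decays on the scale $\varepsilon^{-1/2}\sim L$. Dirichlet cuts placed at distance $\sim A^{n-1}$ from the run therefore cost a negligible amount.

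\emph{Counting.} Borel--Cantelli with $Aq>1$ gives a zero-run of length $n$ in the $n$-th shell, and $V\approx C/n^2$ there. Hence every shell with $\sqrt{C/\varepsilon}<n<\sqrt{C/\varepsilon}/\cos(\sqrt C/2)$ yields one negative eigenvalue. Use every other (enlarged) shell so that the decoupling regions are disjoint. This gives $\frac12\sqrt{C/\varepsilon}\,\bigl(1/\cos(\sqrt C/2)-1\bigr)$ states. The $\frac12$ comes from this selection, not from a turning-point phase.

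\emph{Separating the surplus from $N_0(\varepsilon)$.} The lower cutoff $n>\sqrt{C/\varepsilon}$ is exactly where $V<\varepsilon$, so $-d^2/dx^2+\varepsilon-V\ge 0$ for $|x|>A^{\sqrt{C/\varepsilon}}$. A Dirichlet split at $|x|=A^{\sqrt{C/\varepsilon}}$ then gives $N(\varepsilon)\ge N_0(\varepsilon)+N_1(\varepsilon)-O(1)$.
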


Theorem ~\ref{T7}  shows that, in one dimension, the random background
produces an additional family of negative eigenvalues beyond those
already present for the deterministic operator $-\frac{d^2}{dx^2}+\varepsilon -V$.
In particular, the difference $N(\varepsilon)-N_0(\varepsilon)$ becomes   very large 
 in the weak-coupling limit.

We next give a lower bound that applies in arbitrary dimension. The
result is obtained by exploiting the occurrence of sufficiently large
regions in which the mean value of  the random potential  is close to $p$.
These
regions provide trial domains on which the attractive potential $V$
dominates the random contribution and hence generate negative
eigenvalues. The following theorem gives an explicit quantitative
lower bound on their number.

\begin{theorem}\label{lowerbound2}
Let $d\geq 1$, $0<\delta<q=1-p$, and let $C>\delta$. Assume that the
potential $V$ satisfies
\[\notag
V(x)=\frac{C}{(\log_A |x|)^{2/d}},
\qquad |x|>A,
\]
where $A>1$. Define
\[\notag
n_1(\varepsilon)
=
\left\lfloor
\left(
1+\frac{d(q-\delta)}{2(p+\delta)}
\right)^{1/2}
\sqrt{\frac{C-\delta}{\varepsilon}}
\right\rfloor .
\]
Then there exists a constant $\beta_0>0$, independent of $\omega$, such
that, for almost every $\omega$, there exists $\varepsilon_0(\omega)>0$
with the following property: for every $\beta>\beta_0$ and every
$0<\varepsilon<\varepsilon_0(\omega)$, the number $N(\varepsilon)$ of
negative eigenvalues of the random Schr\"odinger operator
\[\notag
H_\varepsilon
=
-\Delta
+
\varepsilon\sum_{n\in\mathbb Z^d}
\omega_n\chi(x-n)
-
V
\]
satisfies
\[\notag
\begin{aligned}
N(\varepsilon)
\geq{}&
(1-2\delta)
\frac{\varkappa_d}{\bigl(\beta n_1(\varepsilon)\bigr)^d}
\\
&\quad\times
\left[
\left(
A^{n_1(\varepsilon)^d}
-\beta n_1(\varepsilon)\sqrt d
\right)^d
-
\left(
A^{(n_1(\varepsilon)-1)^d}
+\beta n_1(\varepsilon)\sqrt d
\right)^d
\right].
\end{aligned}
\]
\end{theorem}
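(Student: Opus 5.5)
We argue by the variational principle. The aim is to exhibit, inside the spherical shell
\[\notag
S_\ve=\bigl\{A^{(n_1-1)^d}+\beta n_1\sqrt d<|x|<A^{n_1^d}-\beta n_1\sqrt d\bigr\},\qquad n_1=n_1(\ve),
\]
many pairwise disjoint cubes $Q$ of side $\beta n_1$. On each such cube we construct a test function $\psi_Q\in H^1_0(Q)$ with $(H_\ve\psi_Q,\psi_Q)<0$. Since the supports are disjoint, the quadratic form is negative on the span of the $\psi_Q$, and $N(\ve)$ is at least the number of such cubes. The volume of $S_\ve$ divided by $(\beta n_1)^d$ gives exactly the bracket $\varkappa_d(\beta n_1)^{-d}[\dots]$ in the statement. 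The correction $\pm\beta n_1\sqrt d$ in the radii guarantees that every cube of the grid $\beta n_1\mathbb Z^d$ meeting the shell lies in the annulus $A^{(n_1-1)^d}<|x|<A^{n_1^d}$. On that annulus $\log_A|x|<n_1^d$, so $V(x)>C/n_1^2$.

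Fix a cube $Q$ of side $L=\beta n_1$ and take $\psi_Q$ to be the product of cosines, i.e. the first Dirichlet eigenfunction of $Q$. This gives the bound
\[\notag
(H_\ve\psi_Q,\psi_Q)\le \frac{d\pi^2}{L^2}\|\psi_Q\|^2+\ve\sum_{n}\omega_n\int_{Q_n}\psi_Q^2-\frac{C}{n_1^2}\|\psi_Q\|^2 .
\]
Call $Q$ \emph{good} if the weighted average $\sum_n\omega_n\int_{Q_n}\psi_Q^2/\|\psi_Q\|^2$ is at most $p+\delta$. For a good cube the form is negative as soon as
\[\notag
\frac{d\pi^2}{\beta^2n_1^2}+\ve(p+\delta)<\frac{C}{n_1^2}.
\]
By the definition of $n_1$ we have $\ve n_1^2\le (C-\delta)\bigl(1+\tfrac{d(q-\delta)}{2(p+\delta)}\bigr)$, so it suffices to choose $\beta_0$ large in terms of $d,\delta,p$. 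The extra freedom encoded in the factor $1+\frac{d(q-\delta)}{2(p+\delta)}$ presumably requires a slightly better test function than the flat cosine. The idea is to concentrate $\psi_Q$ on the part of $Q$ where $\omega$ vanishes, or to optimise the shape of $\psi_Q$ against the deviation $q-\delta$. If the constant does not come out exactly, we will first verify the inequality with the crude $\beta_0$ and then tune the trial function to recover the stated factor.

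The main step is probabilistic: we must show that almost surely, for all small $\ve$, at least a fraction $1-2\delta$ of the grid cubes in $S_\ve$ are good. The weighted average is a sum of about $L^d$ independent bounded variables with mean $p$. Hoeffding's inequality therefore bounds the probability that a cube is bad by $e^{-c\delta^2L^d}\le\delta$ once $\beta_0$ is large. The number of bad cubes among the $M_\ve\sim A^{dn_1^d}/L^d$ cubes in $S_\ve$ is a binomial variable. A second Chernoff bound then gives
\[\notag
P(\#\mathrm{bad}>2\delta M_\ve)\le e^{-c\delta M_\ve}.
\]
This is summable over the countably many values of $n_1$, since $M_\ve$ grows superexponentially. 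Borel--Cantelli then yields $\ve_0(\omega)$ such that the good fraction exceeds $1-2\delta$ for all $\ve<\ve_0(\omega)$, uniformly in $\beta>\beta_0$ if we take a countable dense set of $\beta$ and use monotonicity. Counting the good cubes then gives the stated lower bound. The expected obstacle is matching the precise constant in $n_1(\ve)$; the probabilistic part is standard.
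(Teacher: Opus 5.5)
\textbf{There is a genuine gap, and it sits in the sentence ``it suffices to choose $\beta_0$ large in terms of $d,\delta,p$''.}

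From $\varepsilon n_1^2\le (C-\delta)\bigl(1+\frac{d(q-\delta)}{2(p+\delta)}\bigr)$ you only get
$\varepsilon(p+\delta)n_1^2\le (C-\delta)\bigl(1+(\frac d2-1)(q-\delta)\bigr)$, and $\varepsilon n_1^2$ approaches this upper bound as $\varepsilon\to0$.

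For $d\in\{1,2\}$ the last factor is at most $1$. Then $\beta>\pi\sqrt{d/\delta}$ makes the form negative on every good cube, and your proof is complete. It differs from the paper's route, and is sounder. Your goodness condition (the $\psi_Q^2$-weighted density) is exactly what the Rayleigh quotient of the actual random potential needs. This avoids the paper's step of replacing the random black set by a central block. Two small repairs: take $\beta$ integral, so cubes are unions of cells and the goodness events are independent. Then fix one such $\beta$ as $\beta_0$ and use that the right-hand side decreases in $\beta$; this gives uniformity without a dense set.

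For $d\ge3$ the argument fails. Once $(C-\delta)(\frac d2-1)(q-\delta)\ge\delta$, the inequality $\frac{d\pi^2}{\beta^2}+\varepsilon(p+\delta)n_1^2<C$ fails for all small $\varepsilon$, whatever $\beta$ is. What you actually prove there is the bound with $n_1$ replaced by the smaller $\lfloor\sqrt{(C-\delta)/((p+\delta)\varepsilon)}\rfloor$.

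\textbf{Tuning the trial function cannot close this.} Let $W=\sum_j\omega_j\chi(\cdot-j)$. Suppose every subcube of $Q$ of side $\theta n$ has black density at least $p-\eta$; each cube fails this with probability at most $(\beta/\theta)^d e^{-2\eta^2(\theta n)^d}$. Then the $L^1$-Poincar\'e inequality for $u^2$ on these subcubes gives $\int_Q Wu^2\ge (p-\eta)\|u\|^2-c_d\,\theta n\,\|u\|\,\|\nabla u\|$.

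Any $u\in H^1_0(Q)$ with $(H_\varepsilon u,u)<0$ in layer $n$ has $\|\nabla u\|^2<C(n-1)^{-2}\|u\|^2$. So it pays at least $\varepsilon\bigl(p-\eta-O(\theta\sqrt C)\bigr)\|u\|^2$ in potential energy. Putting the mass of $u$ on white cells instead costs kinetic energy of order $1\gg n^{-2}$.

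The stated $n_1$, by contrast, needs an $\varepsilon$-coefficient of about $\bigl(1+\frac{d(q-\delta)}{2(p+\delta)}\bigr)^{-1}\approx\frac{2p}{2p+dq}$. This is below $p$ exactly when $d\ge3$; for example it is $0.4$ versus $0.5$ when $d=3$, $p=q=\frac12$. So for small $\delta$ and $C\gg\delta$, typical cubes of layer $n_1(\varepsilon)$ carry no negative Dirichlet eigenvalue at all. No argument that counts a $(1-2\delta)$-fraction of them can then succeed.

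\textbf{The paper's proof does not supply the missing step for $d\ge3$ either.} It obtains the factor from the kind of trial function you anticipate: $u=\min_i\phi(x_i)$, equal to $1$ on the concentric subcube $X=[-(p+\delta)L,(p+\delta)L)^d$ and decaying linearly to $0$ at $\partial Q$. It evaluates this $u$ for the model operator $-\Delta+\varepsilon\chi_X$. That computation assumes the black cells fill $X$, which a suitable cube of the random configuration does not do; note also that $|X|=(p+\delta)^d|Q|$, not $(p+\delta)|Q|$. So the computation cannot be imported into your argument.
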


For the sake of  completeness, we provide a proof of the following relatively simple result (see also \cite{MV}):

\begin{theorem} \label{sake} Let \(V \geq 0\) be a bounded function on \({\Bbb R}^{d}\). 
Suppose there exist constants \[\label{boundC>}
C > d\pi^2(\ln 1/q)^{2/d},\] and \(R>1\) such that
\[\notag
V(x)\geq   \frac{C} {   ( \ln (|x|))^{2/d}  }    
\] for all $|x|>R$.    Then for any $\ve>0$,  the negative spectrum of  the operator
\[\notag
-\Delta+\ve\sum_{n\in {\Bbb Z}^d}\omega_n \chi(x-n)-V(x)
\]
is almost  surely  infinite.
\end{theorem}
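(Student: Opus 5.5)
The plan is to produce infinitely many disjoint trial regions far out in space on which the random potential vanishes identically, and to build on each a test function with negative energy. For a cube $Q$ of side $L$, write $\Omega_Q=\{\omega:\omega_n=0 \text{ for all } n\in Q\cap{\Bbb Z}^d\}$; its probability is $q^{L^d}$. On such a cube the quadratic form of $H_\varepsilon$ acting on functions supported in $Q$ reduces to $\int|\nabla u|^2-\int V|u|^2$. Take as test function the ground state of the Dirichlet Laplacian on $Q$, with eigenvalue $d\pi^2/L^2$. If $Q$ lies in $\{|x|\ge r\}$, then $V\ge C(\ln|x|)^{-2/d}$ there and the form is bounded by
\[
\frac{d\pi^2}{L^2}-\frac{C}{(\ln(r+\sqrt d L))^{2/d}}
\]
times the squared norm. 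If we find infinitely many pairwise disjoint such cubes, with this quantity negative for each, then the min-max principle gives infinitely many negative eigenvalues, since disjointly supported test functions are orthogonal and the form vanishes on their cross terms. This holds for every $\varepsilon$, because $\varepsilon$ never enters on these cubes.

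To find the cubes, we use a Borel--Cantelli argument on dyadic shells. In the shell $S_k=\{2^k\le|x|<2^{k+1}\}$ the number of disjoint integer-aligned cubes of side $L_k$ is of order $2^{kd}/L_k^d$. Choose
\[
L_k=\Bigl(\frac{(1-\theta)\,k\ln 2\, d}{\ln(1/q)}\Bigr)^{1/d}
\]
with a small $\theta>0$. Then each cube is free of impurities with probability $q^{L_k^d}=2^{-(1-\theta)kd}$. The expected number of empty cubes in $S_k$ is therefore of order $2^{\theta kd}/L_k^d\to\infty$. Since the events for disjoint cubes are independent, the probability that no cube in $S_k$ is empty is at most
\[
\bigl(1-2^{-(1-\theta)kd}\bigr)^{c2^{kd}/L_k^d}\le \exp\bigl(-c\,2^{\theta kd}/L_k^d\bigr).
\]
This is summable in $k$. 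By Borel--Cantelli, almost surely for all large $k$ the shell $S_k$ contains an empty cube $Q_k$, and the $Q_k$ for distinct $k$ are disjoint.

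It remains to check negativity. For $x\in Q_k$ we have $\ln|x|\le (k+1)\ln2+o(k)$, so
\[
\frac{C}{(\ln|x|)^{2/d}}\ge \frac{C}{((k+1)\ln 2)^{2/d}}(1-o(1)).
\]
Meanwhile
\[
\frac{d\pi^2}{L_k^2}=d\pi^2\Bigl(\frac{\ln(1/q)}{(1-\theta)kd\ln2}\Bigr)^{2/d}.
\]
Negativity for large $k$ thus requires
\[
C> d\pi^2 (\ln 1/q)^{2/d}\,\bigl(d(1-\theta)\bigr)^{-2/d}.
\]
Since $d^{1-2/d}\cdot(\ldots)\le d$, this follows from \eqref{boundC>} once $\theta$ is small enough; the hypothesis is even slightly wasteful, since $d^{-2/d}\le1$. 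This bookkeeping of constants is the only delicate point. The main obstacle is to make sure that the $\ln 2$ factors cancel correctly: the shell scale enters both $L_k^d\sim k\ln 2$ and $\ln|x|\sim k\ln 2$, so they cancel. A minor point is that the cubes must be aligned with the lattice, so that $\Omega_Q$ involves exactly $L^d$ sites; we handle this by taking $L_k$ integer, rounding down, which costs only $1+o(1)$.
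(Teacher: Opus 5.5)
Your proposal is correct and follows essentially the paper's route. The paper uses Borel--Cantelli to produce an impurity-free lattice cube in each far shell: its shells are the layers $A^{(n-1)^d}<|x|\le A^{n^d}$ with cubes of side $n$, and its condition $A^d q>1$ plays the role of your slack $\theta$. It then combines the Dirichlet ground-state energy $d\pi^2/L^2$ with decoupling of the disjoint cubes (Dirichlet--Neumann bracketing in the paper, orthogonal disjointly supported test functions in yours), and your remark that the hypothesis is wasteful by a factor $d^{-2/d}$ matches the paper's $\log_A$ version, which translates into the weaker requirement $C>d^{1-2/d}\pi^2(\ln 1/q)^{2/d}$.
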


The right-hand side of \eqref{boundC>} exceeds that of \eqref{boundC<}, and it remains unknown whether these inequalities are sharp.

Due to the novelty of our study, direct literature is scarce. Consequently, we limit our bibliography to foundational works analyzing eigenvalue counts in the non-random case and primary contributions in the random case, specifically referencing \cite{HMV}, \cite{MV}, and \cite{Sa1}.

\section{Kolmogorov's Law and Dichotomy}
An operator analogous to \eqref{operator} can be studied on the lattice \({\Bbb Z}^{d}\):
\[\notag
\left(H_{\rm disc} u\right)(n)=-\sum_{|m-n|=1} \bigl(u(m)-u(n)\bigr) +\ve \omega_n u(n)-V(n)u(n).
\]
Here, \(\omega _{n}\) are independent and identically distributed (i.i.d.) random variables taking the values \(0\) and \(1\) with probabilities \(q\) and \(1-q\), respectively.
To rigorously establish the spectral properties of $H_{\rm disc}$, we rely on foundational probabilistic tools.
 \begin{definition}
 Given a sequence of random variables \(\{\xi_n\}\), the tail \(\sigma \)-algebra, denoted \(\mathcal{T}\), is the intersection of the \(\sigma \)-algebras ${\frak S} (\xi_{n},\xi_{n+1},\dots ) $ generated by the "tails" of the sequence:\[\notag \mathcal{T}=\bigcap _{n=1}^{\infty }{\frak S} (\xi_{n},\xi_{n+1},\dots )\]
 \end{definition}
 
 When the random variables in the sequence are independent, the tail \(\sigma \)-algebra exhibits a fundamental mathematical property known as  Kolmogorov’s Zero-One Law. This theorem states that any event \(A\) in the tail \(\sigma \)-algebra of independent random variables is trivial, meaning its probability is always either \(0\) or \(1\):\[\notag 
 P(A) = 0 \text{ or } 1.\]
By utilizing the fact that altering finitely many  \(\omega _{n}\)'s  constitutes a finite-rank perturbation, we translate this probabilistic dichotomy into a deterministic structural result for the operator's spectrum.

\begin{theorem} Let $V\in L^{\infty}({\Bbb Z}^d)$ be a  decaying   potential.  For any $\ve>0$,
the negative  spectrum of the operator $H_{\rm disc}$  is either  almost surely  finite or   almost surely infinite.
\end{theorem}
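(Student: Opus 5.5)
The plan is to show that the event
\[\notag
E=\{\omega:\ \text{the negative spectrum of } H_{\rm disc}(\omega) \text{ is infinite}\}
\]
is measurable and belongs to the tail $\sigma$-algebra of the sequence $\{\omega_n\}$. Kolmogorov's Zero-One Law then gives $P(E)\in\{0,1\}$, which is exactly the asserted dichotomy. To make the tail $\sigma$-algebra meaningful, we first fix an enumeration $n\mapsto \xi_k=\omega_{n_k}$ of ${\Bbb Z}^d$, for instance ordered by increasing $|n|$. The tail $\sigma$-algebra is then $\bigcap_{N}{\frak S}(\omega_n : |n|>N)$.

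\textbf{Step 1: measurability.} Write $N_-(\omega)=\dim \mathrm{Ran}\,\mathbf 1_{(-\infty,0)}(H_{\rm disc})$. By the variational (Glazman) principle, $N_-(\omega)$ is the supremum over finite-dimensional subspaces spanned by finitely supported vectors of the number of negative eigenvalues of the compressed finite matrix. Because finitely supported vectors are dense in the form domain, and $H_{\rm disc}$ is bounded since $V\in L^\infty$, this supremum can be taken over the countable family of truncations to boxes $\{|n|\le L\}$. Each truncated count is a measurable function of finitely many $\omega_n$. Hence $N_-$ is measurable and $E=\{N_-=\infty\}$ is an event.

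\textbf{Step 2: tail property.} Fix $N$ and let $\omega,\omega'$ agree for $|n|>N$. Then
\[\notag
H_{\rm disc}(\omega)-H_{\rm disc}(\omega')=\ve\sum_{|n|\le N}(\omega_n-\omega'_n)\langle\cdot,\delta_n\rangle\delta_n
\]
is a self-adjoint operator of rank at most $K=\#\{|n|\le N\}$. The standard rank-perturbation bound (min-max) gives $|N_-(\omega)-N_-(\omega')|\le K$. Consequently $N_-(\omega)=\infty$ if and only if $N_-(\omega')=\infty$, so $\mathbf 1_E$ does not depend on $\omega_n$ for $|n|\le N$.

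To turn this into membership in ${\frak S}(\omega_n:|n|>N)$, define $\tilde\omega$ by setting $\tilde\omega_n=0$ for $|n|\le N$ and $\tilde\omega_n=\omega_n$ otherwise. Then $\mathbf 1_E(\omega)=\mathbf 1_E(\tilde\omega)$, and the map $\omega\mapsto\tilde\omega$ is ${\frak S}(\omega_n:|n|>N)$-measurable. Since this holds for every $N$, the event $E$ lies in $\mathcal T$.

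\textbf{Step 3: conclusion.} The variables $\omega_n$ are independent, so Kolmogorov's Zero-One Law gives $P(E)=0$ or $1$.

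The decay of $V$ is not really needed for the argument. It only ensures that the essential spectrum is nonnegative, so that "negative spectrum" means discrete eigenvalues and "infinite" means infinitely many eigenvalues accumulating at $0$. If one prefers to count eigenvalues strictly below $0$ rather than spectral dimension, the finite-rank comparison still holds, because $\sigma_{\rm ess}\cap(-\infty,0)=\emptyset$ is stable under finite-rank perturbation.

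The main technical care point is Step 1, the measurability of $N_-$ via a countable approximation. The rank-perturbation inequality and the zero-one law themselves are routine.
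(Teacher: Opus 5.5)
Your proposal is correct and takes the same route as the paper: changing finitely many $\omega_n$ is a finite-rank perturbation, so the event that the negative spectrum is infinite is a tail event, and Kolmogorov's zero-one law applies. You also supply two details the paper leaves implicit, namely the measurability of $N_-$ via box truncations and the explicit bound $|N_-(\omega)-N_-(\omega')|\le K$.
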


Consequently, the class of nonnegative bounded functions $V$ on ${\Bbb Z}^d$ splits cleanly into two disjoint sets: \[\notag
\{V \text{ with } N(\varepsilon) < \infty \text{ a.s.}\}\qquad \text{and} \qquad\{V \text{ with } N(\varepsilon) = \infty \text{ a.s.}\},\]
where $N(\ve)$  denotes   the number of  negative eigenvalues of $H_{\rm disc}$.
Although Kolmogorov's law guarantees this binary alternative, the precise frontier between the finite and infinite eigenvalue regimes is dictated entirely by the spatial decay of the potential \(V\).  While this boundary can be completely characterized in simply solvable models (see Section~\ref{sec: exactly}), the more realistic, higher-complexity model   in this paper allowed  us to describe only partial subsets of both regimes.

\section{The standard approach fails}

The mathematical study of random media with trapping phenomena traces back to physical frameworks established by Smoluchowski in the context of diffusion-controlled reactions. 
The exact mathematical results regarding the heat kernel for a random operator in an environment featuring random static traps were pioneered by Alain-Sol Sznitman \cite{Sz1}, \cite{Sz2}.   These  results revolve around the long-time asymptotic behavior of the quenched and annealed heat kernels.In mathematical physics, this model represents Brownian motion in a Poissonian obstacle field. 
The fundamental random operator under study is the random Schrödinger operator (often called the Dirichlet Laplacian with random traps):\[\notag \tilde H_{\omega }=-\frac{1}{2}\Delta +\tilde V_{\omega }(x)\]
where \(\Delta \) is the Laplacian and \[\notag \tilde V_\omega(x) = \sum_{i} W(x - x_i)\] represents a random potential generated by a Poisson point process of static trapping obstacles \(\{x_i\}\) with a constant spatial density \(\nu > 0\).
To describe  the results,  we define $k_\omega(t,x,y)$ to be the integral kernel of the operator $\exp(-t \tilde H_\omega)$.

{\it The Annealed Heat Kernel Asymptotics}.
The annealed heat kernel \[\notag\mathbb{E}[k_\omega(t, x, x)]\] averages the return probability over all possible random trap configurations. Sznitman rigorously formalized the exact leading-order long-time decay, known as the Lifshitz tail effect.

For a \(d\)-dimensional space \(\mathbb{R}^{d}\), as time \(t \to \infty\):\[\label{SZ1}\lim _{t\rightarrow \infty }t^{-\frac{d}{d+2}}\log \mathbb{E}[k_{\omega }(t,x,x)]=-c(d,\nu )\]
The precise value of the constant is defined explicitly by:\[\notag c(d,\nu )=\left(\frac{d+2}{d}\right)\left(\frac{2\nu \varkappa _{d}}{d}\right)^{\frac{2}{d+2}}\lambda _{1}(B_{1})^{\frac{d}{d+2}}\]
where \(\varkappa _{d}\) is the volume of the unit ball in \(\mathbb{R}^{d}\), and \(\lambda_1(B_1)\) is the principal (lowest) eigenvalue of the Dirichlet Laplacian on a unit ball \(B_{1}\).

 This exact result reflects that the survival of a Brownian particle over long periods relies entirely on finding a vast, naturally occurring "empty clearing" of radius \(R \sim t^{1/(d+2)}\) that is entirely free of traps.

 {\it The Quenched Heat Kernel Asymptotics}. 
 The quenched heat kernel \(k_\omega(t, x, x)\) tracks the return probability within a single, fixed  almost every realization of the random environment. 
 Sznitman established that for almost every realization \(\omega \), the individual sample behaves entirely differently from the ensemble average because large clearings are exponentially rare but inevitably exist somewhere far away.
 For a fixed realization \(\omega \), as \(t \to \infty\):\[\label{SZ2} \lim _{t\rightarrow \infty }\frac{\log k_{\omega }(t,x,x)}{t/(\log t)^{2/d}}=-c_{\text{quenched}}(d)\]
  Notice that the decay rate changes from \(t^{\frac{d}{d+2}}\) in the annealed case to \(\frac{t}{(\log t)^{2/d}}\) in the quenched case.

 A physical interpretation of this  fact is the following.   Because the environment is fixed, the particle cannot rely on a clearing being natively present exactly where it starts. Instead, it must pay an entropic transport cost to travel across space to reach the nearest massive trap-free clearing.

To achieve these exact results, Sznitman developed a specialized mathematical technique called the ``Enlargement of Obstacles''.    Instead of dealing with irregular and granular random potential formulas, this method uses coarse-graining to replace thousands of random scattered traps with a smooth, macroscopically larger effective boundary condition. This breakthrough allowed for sharp, mathematically exact multi-scale estimates of the principal Dirichlet eigenvalues on random domains, ultimately yielding the precise geometric constants listed above.

{\it Limitations of the Standard Approach}.
The rigorous mapping of heat kernel expectations to variational problems and Dirichlet eigenvalues was established in classic papers by Donsker and Varadhan \cite{DV1}, \cite{DV2} on the large deviation principle (LDP) for Brownian motion. Specifically, \eqref{SZ1}  was established for the unperturbed Anderson model. 
While a precise understanding of heat kernel behavior theoretically justifies applying Lieb's approach \cite{Lieb} (which is based on the Feynman–Kac formula) for obtaining eigenvalue bounds, this method ultimately breaks down. As demonstrated by Molchanov and Vainberg \cite{MV},  using the estimate
\[\notag
{\Bbb E}[N(\ve)]\leq C(\sigma)\int_{{\Bbb R}^d} V(x)\int_{\frac{\sigma}{V(x)} }^\infty {\Bbb E}[k_\omega(t,x,x)]dt dx,\qquad \sigma>0,
\] reveals  that
the annealed estimate \eqref{SZ1} is too coarse; it only guarantees a finite number of negative eigenvalues if the potential decays faster than \(C/(\ln\vert{}x\vert{})^{1+2/d}\) at infinity.

While employing the quenched asymptotics \eqref{SZ2} would provide the necessary sharpness,  this approach fails in practice and its  formal application  leads to an incorrect result. The rate at which the quenched kernel \(k_{\omega }\) converges to its asymptotic limit depends wildly and non-uniformly on the spatial variable \(x\). This lack of uniformity renders the standard methodology inapplicable.

\section{One exactly solvable model}
\label{sec: exactly}

This section establishes a simplified, mathematically rigorous baseline for studying random operators, providing a solvable benchmark to determine conditions for when a random system has finitely or infinitely many bound states.

Let \(\{x_n\}_{n \in \mathbb{N}}\) be a sequence of independent and identically distributed (i.i.d.) random variables with a probability density function \(f(x) = e^{-x}\) on \([0, \infty).\)
We define interval endpoints \(\{a_n\}_{n \geq 0}\) via the partial sums:\[\notag
a_n = \sum_{j=1}^n x_j, \quad a_0 = 0.\]
Consider the differential operator \(\mathcal{H}\) acting as the orthogonal sum of operators on the disjoint intervals \([a_n, a_{n+1}]\):
\[\label{H1}
\mathcal{H} = \bigoplus_{n=0}^{\infty} \left( -\frac{d^2}{dx^2} - V(x) \right).\]
This operator is subject to Dirichlet boundary conditions at the endpoints \(a_{n}\). Assume that the potential \(V(x)\) is piecewise constant, defined as
\[\notag V(x) = v_n\geq 0 \quad \text{for} \quad x \in [a_n, a_{n+1}).\]

The negative spectrum of \(\mathcal{H}\) is a discrete, random set formed by the eigenvalues of each isolated interval \([a_n, a_{n+1}]\). For a given interval index \(n\) and mode index \(m\), the negative eigenvalues are given by:
\[\notag
E_{n,m}=\left(\frac{m\pi }{x_{n+1}}\right)^{2}-v_{n}<0\]
where \(n \in \{0, 1, 2, \dots\}\) and \(m \in \mathbb{N} = \{1, 2, 3, \dots\}\), subject to the condition that \(v_n > \left(\frac{m\pi}{x_{n+1}}\right)^2\).
The condition for the \(n\)-th interval  to possess at least one negative eigenvalue is 
\[\label{x>v}
x_{n+1} > \frac{\pi}{\sqrt{v_n}}.\]
We know that \(\{x_n\}_{n \in \mathbb{N}}\) are i.i.d. variables   that  follow an exponential distribution
whose density is  $e^{-x}$.
Thus the probability  that \eqref{x>v}  holds is
\[\notag
P_n = \int_{\frac{\pi}{\sqrt{v_n}}}^{\infty} e^{- x} \,dx=e^{- \frac{\pi}{\sqrt{v_n}}}.\]
Applying the Borel-Cantelli lemmas, we obtain the following result.
\begin{theorem}
The operator \eqref{H1}  almost surely has  finitely  many eigenvalues if and  only if 
\[
\sum_{n=1}^\infty e^{-\frac{\pi}{\sqrt{v_n}}}<\infty.
\]
\end{theorem}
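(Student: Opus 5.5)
The plan is to reduce the statement to the two Borel--Cantelli lemmas applied to the independent events
\[\notag
E_n=\{x_{n+1}>\pi/\sqrt{v_n}\},\qquad n\ge 0 ,
\]
with the convention $E_n=\emptyset$ when $v_n=0$. Since $\mathcal H$ is an orthogonal sum, its negative spectrum is the union of the negative spectra of the interval operators on $[a_n,a_{n+1}]$. I would treat the two directions separately.

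The first step is to record the exact eigenvalue count on each interval. The $n$-th block has exactly
\[\notag
k_n=\#\{m\ge1:\ m\pi/x_{n+1}<\sqrt{v_n}\}
\]
negative eigenvalues, and $k_n\ge1$ if and only if $E_n$ occurs, by \eqref{x>v}. Since $x_{n+1}<\infty$ almost surely and $v_n$ is finite, each $k_n$ is almost surely finite. Hence the total number of negative eigenvalues is $\sum_n k_n$. This sum is finite exactly when only finitely many $k_n$ are nonzero, that is, when only finitely many $E_n$ occur.

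The second step is the probabilistic dichotomy. The events $E_n$ depend on distinct variables $x_{n+1}$, so they are independent, because the $v_n$ are deterministic. Their probabilities are $P(E_n)=P_n=e^{-\pi/\sqrt{v_n}}$, as computed above.
\begin{itemize}
\item If $\sum P_n<\infty$, the first Borel--Cantelli lemma gives that almost surely only finitely many $E_n$ occur. The spectrum count is then $\sum_n k_n<\infty$, since it is a finite sum of almost surely finite terms.
\item If $\sum P_n=\infty$, independence allows the second Borel--Cantelli lemma. Infinitely many $E_n$ then occur almost surely, so infinitely many blocks contribute at least one negative eigenvalue each.
\end{itemize}
Combining the two cases gives the equivalence. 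The index shift between $n\ge0$ and $n\ge1$ in the sum is irrelevant to convergence.

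The only delicate point is the reading of ``finitely many eigenvalues''. If it means counting with multiplicity, we need each block to contribute finitely many, and this holds almost surely as noted. If it means the spectrum not accumulating, we should observe the following. Eigenvalues from distinct blocks satisfy $E_{n,m}\ge -v_n$, and if $v_n\to0$ the infinitely many eigenvalues accumulate at $0$. In either reading, infinitely many blocks with a negative eigenvalue give an infinite negative spectrum, so no real obstacle arises. I expect the argument to be routine, with the independence check for the second Borel--Cantelli lemma being the only step needing explicit mention.
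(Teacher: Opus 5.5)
Your proof is correct and follows the paper's own route: compute $P(x_{n+1}>\pi/\sqrt{v_n})=e^{-\pi/\sqrt{v_n}}$ and apply the first and second Borel--Cantelli lemmas to these events. You also make explicit two points the paper leaves implicit: each block contributes only finitely many negative eigenvalues almost surely, and the events are independent because each depends on a distinct i.i.d.\ variable $x_{n+1}$ while the $v_n$ are deterministic.
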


{\bf Remark.} This model exhibits unique behavior because the potential \(V\) introduces an additional layer of randomness, contrasting with our previous model where the potential was deterministic.

\section{Painting the space in two colors}

 For a fixed scaling factor \(A > 1\), consider a sequence of expanding spherical layers $ \mathcal{L}_n$  in \(\mathbb{R}^{d}\) defined by
 \[\notag
 \mathcal{L}_n=\{x\in {\Bbb R}^d:\quad A^{(n-1)^d}< |x|\leq A^{n^d}\}.
 \]
 
 Each layer is covered   by a  collection  of \(d\)-dimensional half-open cubes having a side length \(n \in \mathbb{N}\). These cubes are defined by:
 \[\notag
Q_{\tilde n, n}= n\cdot(\tilde n+ [0,1)^d),\qquad \text{with}\quad \tilde n\in {\Bbb Z}^d.
 \]
The outer boundary of the \(n\)-th layer is located at a distance of \(R_n = A^{n^d}\) from the origin.

\begin{center}
\begin{tikzpicture}

\foreach \i/\col in {1/blue, 2/red, 3/green!60!black, 4/purple} {
    \pgfmathsetmacro{\radius}{0.25 * pow(2, \i-1)}
    
    \draw[thick, \col] (0,0) circle (\radius);
    
    \node[\col, font=\small] at (45:\radius + 0.1) {$R_{\i}$};
}

\filldraw (0,0) circle (1.5pt) node[anchor=north east, font=\scriptsize] {$(0,0)$};

\end{tikzpicture}
\end{center}

The number of the cubes $Q_{\tilde n, n}$ 
 covering  the 
$n$-th layer, denoted by 
$N_n$, does not exceed the following  quantity:
\[
N_n\leq  \frac{\varkappa_d}{n^d} ((A^{ n^d}+n\sqrt{d})^d-(A^{(n-1)^d}-n\sqrt{d})^d)
\]

Now, consider a single cube $Q_{\tilde n, n}$ of side length $n$
 consisting of $n^d$
 unit cells. 
 Suppose each unit cell is painted black with probability $p$ and white  with  probability $q=1-p$ independently.
 
 \begin{center}
\begin{tikzpicture}
\def\q{0.4}
\def\gridsize{15}
\def\cellsize{0.25}

\foreach \x in {1, ..., \gridsize} {
    \foreach \y in {1, ..., \gridsize} {
        \pgfmathsetmacro{\randval}{rnd}
        \ifdim \randval pt < \q pt
            \fill[black] (\x * \cellsize, \y * \cellsize) rectangle ++(\cellsize, \cellsize);
        \else
            \fill[white] (\x * \cellsize, \y * \cellsize) rectangle ++(\cellsize, \cellsize);
        \fi
        \draw[gray!30] (\x * \cellsize, \y * \cellsize) rectangle ++(\cellsize, \cellsize);
    }
}
\end{tikzpicture}

\end{center}

 We define a "boundary condition" where the number of black cells does not exceed the number of cells on the cube's boundary,     $n^d-(n-1)^d$.  
The probability that the number of black cells in the cube does not exceed $n^d-(n-1)^d$
is not larger  than
\[
\sum_{k=(n-1)^d}^{n^d}\frac{n^d!}{k!(n^d-k)!} q^{k}(1-q)^{n^d-k}.
\]
Therefore, applying a union bound, the probability $P_n$
 that at least one such cube exists in the 
$n$-th layer is bounded by:
\[
  P_n\leq\frac{\varkappa_d}{n^d} ((A^{ n^d}+n\sqrt{d})^d-(A^{(n-1)^d}-n\sqrt{d})^d)\sum_{k=(n-1)^d}^{n^d}\frac{n^d!}{k!(n^d-k)!} q^{k}(1-q)^{n^d-k}.
\]
By the Borel-Cantelli Lemma, the property occurs finitely often almost surely if $\sum_n P_n<\infty.$
Given that $N_n$
 grows as $A^{dn^d}$
 and the probability decays as 
$q^{n^d}$, the series converges if the growth of the volume is offset by the decay of the probability, specifically when $$A^d<q^{-1}.$$

Thus,  we obtain the following result.
\begin{theorem}\label{T0.1}
If \(A^d < q^{-1}\), then with probability 1, there exists a finite \(n_0(\omega)\) such that for all \(n > n_0(\omega)\), every cube $Q_{\tilde n,n}$ intersecting  the $n$-th layer contains fewer than \((n-1)^d\) white cells.
\end{theorem}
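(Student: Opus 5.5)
The plan is a first-moment (union bound) estimate in each layer combined with the first Borel--Cantelli lemma, as sketched before the statement. The only real work is to show that the combinatorial factors in the binomial tail are subexponential in $n^d$. Fix $n$ and call a cube $Q_{\tilde n,n}$ \emph{bad} if it contains at least $(n-1)^d$ white cells, i.e.\ at most $m_n:=n^d-(n-1)^d$ black cells. Let $E_n$ be the event that some cube meeting $\mathcal L_n$ is bad. It suffices to prove $\sum_n P(E_n)<\infty$. Borel--Cantelli then gives that a.s.\ only finitely many $E_n$ occur, which yields $n_0(\omega)$.

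\emph{Step 1: counting cubes.} Every cube of side $n$ meeting $\mathcal L_n$ has diameter $n\sqrt d$, so it lies in the ball of radius $A^{n^d}+n\sqrt d$. The cubes are disjoint, each of volume $n^d$, so their number satisfies
\[
N_n\le \varkappa_d (A^{n^d}+n\sqrt d)^d/n^d .
\]
This is consistent with the bound stated above. Since $n\sqrt d\le A^{n^d}$ for large $n$, we get $N_n\le 2^d\varkappa_d A^{dn^d}$ eventually.

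\emph{Step 2: probability that one cube is bad.} The cells of one cube are independent. The probability that a fixed cube is bad is
\[
\sum_{j=0}^{m_n}\binom{n^d}{j}p^{j}q^{n^d-j},
\]
where $j$ is the number of black cells. There are $m_n+1\le d n^{d-1}+1$ terms. Each term is at most
\[
\binom{n^d}{j}q^{(n-1)^d}\le n^{d m_n} q^{n^d}q^{-m_n}.
\]
Since $m_n\le d n^{d-1}$, this is bounded by
\[
q^{n^d}\exp\bigl(d^2 n^{d-1}\ln n + d n^{d-1}\ln q^{-1}\bigr)=q^{n^d}e^{o(n^d)}.
\]

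\emph{Step 3: union bound and summation.} Combining the two steps,
\[
P(E_n)\le N_n\cdot(d n^{d-1}+1)\,q^{n^d}e^{o(n^d)}\le \exp\bigl(-n^d(\ln q^{-1}-d\ln A)+o(n^d)\bigr).
\]
The hypothesis $A^d<q^{-1}$ means $c:=\ln q^{-1}-d\ln A>0$. Hence $P(E_n)\le e^{-c n^d/2}\le e^{-cn/2}$ for large $n$, and the series converges. The main obstacle is precisely the check in Step 2 that the binomial coefficient and the loss $q^{-m_n}$ cost only $e^{O(n^{d-1}\ln n)}$. Because the boundary layer has size $m_n=O(n^{d-1})=o(n^d)$, this does not compete with the exponential gap $c\,n^d$.
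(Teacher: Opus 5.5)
Your proposal is correct and follows the paper's route: a union bound over the roughly $A^{dn^d}$ cubes meeting $\mathcal{L}_n$, a binomial tail estimate for a single cube, and the first Borel--Cantelli lemma. Your Step 2 shows the tail is $q^{n^d}e^{O(n^{d-1}\ln n)}$ because $m_n=O(n^{d-1})$. This makes rigorous what the paper only asserts when it says the probability ``decays as $q^{n^d}$''.
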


Finally, 
note  that the Euclidean distance $|x|$  from the origin to the  cube $Q_{\tilde n, n}$ satisfies 
 \[\notag
A^{(n-1)^d}-n\sqrt d \leq |x|\leq A^{ n^d}+n\sqrt d.
\]
Consequently, as the distance approaches infinity, the layer index \(n\) scales as:
\[\notag
n=\Bigl(\frac{\ln |x|}{\ln A}\Bigr)^{1/d} (1+o(1)),\qquad \text{as}\quad |x|\to\infty.
\]
However,  the nature of  our  results requires a  more accurate  estimate of $n$ in terms of $|x|$.
\[\label{n<x}
n\leq \left(\log_A|x|\right)^{1/d} +2,\qquad \text{for}\quad n\geq n_1(A).
\]

\section{Lowest eigenvalue  for the Neumann problem}

For a bounded domain $\Omega \subset \mathbb{R}^n$ and a subset $X \subset \Omega$ with $|X| = m$, let $\lambda_1(\Omega, X)$ be the first eigenvalue of:
\begin{equation}
-\Delta u+\epsilon \chi _{X}u=\lambda u\text{\ in\ }\Omega ,\quad \frac{\partial u}{\partial \nu }=0\text{\ on\ }\partial \Omega
\end{equation}
where $\chi _{X}$ is the indicator function of $X$. If $B$ is a ball with $|B| = |\Omega|$ and $S$ is a spherical layer at the boundary of $B$ with $|S| = |X|$, then $\lambda_1(\Omega, X) \geq \lambda_1(B, S)$.
Put differently, we intend to prove that
the lowest eigenvalue of the Neumann problem with a localized potential is minimized when the domain is a ball and the potential is concentrated in a spherical layer at the boundary. 

The first eigenvalue $\lambda_1(\Omega, X)$ is characterized by the Rayleigh quotient:
\begin{equation}
\lambda _{1}(\Omega ,X)=\inf _{u\in H^{1}(\Omega ),u\ne 0}\frac{\int _{\Omega }|\nabla u|{}^{2}dx+\epsilon \int _{X}|u|^{2}dx}{\int _{\Omega }|u|^{2}dx}
\end{equation}
For $\epsilon > 0$, the first eigenfunction $u$ is strictly positive and can be chosen such that $\int_{\Omega} u^2 dx = 1$. Thus:
\begin{equation}
\lambda _{1}(\Omega ,X)=\int _{\Omega }|\nabla u|{}^{2}dx+\epsilon \int _{X}u^{2}dx
\end{equation}

Let $B$ be the ball centered at the origin such that $|B| = |\Omega|$. By the Faber-Krahn inequality principle \cite{Fa,Kr} and the Polya-Szego theorem, for any $u \in H^1(\Omega)$, its spherically symmetric decreasing rearrangement $u^* \in H^1(B)$ satisfies:
\begin{equation}
\int_B (u^*)^2 dx = \int_{\Omega} u^2 dx
\end{equation}
\begin{equation}
\int_B \vert\nabla u^*\vert{}^2 dx \leq \int_{\Omega} \vert\nabla u\vert^2 dx
\end{equation}

To minimize the term $\epsilon \int_X u^2 dx$, we use the Hardy-Littlewood inequality (see \cite{LL}) which  states that for two non-negative functions \(f\) and \(g\), the integral of their product is maximized when both functions are symmetrically rearranged:
\[
\int_\Omega f g dx\leq \int_B f^* g^* dx.
\]
Consequently,
\[
\int_\Omega (1-\chi_X)|u|^2  dx\leq \int_B (1-\chi_S) |u^*|^2 dx,
\]
which implies that
\[
\epsilon \int_\Omega \chi_X|u|^2  dx\geq \epsilon \int_B \chi_S |u^*|^2 dx.
\]
The interpretation of this  inequality is  the following.
To minimize the term $\epsilon \int_X u^2 dx$, 
must place the potential $\epsilon$ where the eigenfunction $u$ takes its smallest values. 
In the Neumann case, the lowest eigenvalue for a fixed domain is minimized when the potential $\epsilon \chi_X$ is "as far as possible" from the "center" of the domain.

Combining these steps, let $u$ be the first eigenfunction for $(\Omega, X)$. Then:
\begin{equation}
\lambda _{1}(\Omega ,X)=\int _{\Omega }|\nabla u|{}^{2}dx+\epsilon \int _{X}u^{2}dx\ge \int _{B}|\nabla u^{*}|{}^{2}dx+\epsilon \int _{S}(u^{*})^{2}dx
\end{equation}
where $S$ is the spherical layer of volume $|X|$ at the boundary of $B$. Since $u^{*}$ is a valid test function for the Neumann problem on $B$ with potential $\epsilon \chi_S$, the right-hand side is greater than or equal to the true first eigenvalue $\lambda_1(B, S)$ by the variational principle.

\begin{theorem}\label{T1.1}
The lowest Neumann eigenvalue of the Schrödinger operator on a domain with a  potential $\epsilon \chi_X$  is bounded below by the lowest Neumann eigenvalue of
the Schrödinger on the ball of equal volume where the potential $\epsilon \chi_S$ is concentrated in the boundary layer $S$ of equal volume:
\begin{equation}
\lambda _{1}(\Omega ,X)\ge \lambda _{1}(B,S).
\end{equation}
\end{theorem}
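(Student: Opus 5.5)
The plan is to compare the Rayleigh quotients of the two problems through Schwarz symmetrization, i.e.\ the spherically symmetric decreasing rearrangement.

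\emph{Step 1 (set-up).} Fix $\epsilon>0$. Since $\Omega$ is bounded and, we assume, regular enough for $H^1(\Omega)$ to embed compactly into $L^2(\Omega)$, the quadratic form $\int_\Omega|\nabla u|^2+\epsilon\int_X u^2$ has a minimizer $u$ with $\int_\Omega u^2=1$. Replacing $u$ by $|u|$ does not change the quotient, so we may take $u\ge0$. We then form the rearrangement $u^*$ on the ball $B$ with $|B|=|\Omega|$. Equimeasurability gives $\int_B (u^*)^2=\int_\Omega u^2=1$.

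\emph{Step 2 (gradient term).} We need the P\'olya--Szeg\H{o} inequality
\[\notag
\int_B|\nabla u^*|^2\le\int_\Omega|\nabla u|^2 .
\]
This is the main obstacle. For $u\in H^1_0$ it is classical, but a Neumann function need not vanish on $\partial\Omega$, and the inequality can then fail for nonconvex or irregular domains.

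The route I would try proves it via the coarea formula and the isoperimetric inequality for the level sets $\{u>t\}$. For $t$ above the boundary values the level sets are compactly contained in $\Omega$, and the standard argument applies. For lower $t$ one needs the \emph{relative} perimeter $P(\{u>t\};\Omega)$ to dominate the perimeter of the corresponding centered ball. That comparison is not automatic, so I would impose (or verify for the domains used later) a relative isoperimetric hypothesis. An alternative is to restrict to what is actually needed later, namely cubes. For cubes one can reflect $u$ evenly across the faces to obtain a periodic $H^1_{\rm loc}$ function and apply the argument to the periodic extension. One should also check whether the weaker statement with a constant loss, such as $2^{-1}\phi_{2\epsilon}$ in Theorem~\ref{CRLMS}, is all that is used.

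\emph{Step 3 (potential term).} Apply the Hardy--Littlewood inequality to $f=u^2$ and $g=1-\chi_X$. The rearrangement $g^*$ is the indicator of the centered ball of volume $|\Omega|-|X|$, which equals $1-\chi_S$ on $B$. Combined with $\int_\Omega u^2=\int_B(u^*)^2$, this gives
\[\notag
\int_X u^2\ge\int_S (u^*)^2 .
\]

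\emph{Step 4 (conclusion).} Adding Steps 2 and 3,
\[\notag
\lambda_1(\Omega,X)\ge \int_B|\nabla u^*|^2+\epsilon\int_S (u^*)^2\ge \lambda_1(B,S).
\]
The last inequality holds because $u^*\in H^1(B)$ is an admissible Neumann test function with unit $L^2$ norm, so the variational principle on $B$ applies.
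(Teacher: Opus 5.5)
Your outline is the paper's proof step for step: symmetrize the ground state, use P\'olya--Szeg\H{o} for the gradient term, apply Hardy--Littlewood to $f=u^2$ and $g=1-\chi_X$ for the potential term, and finish with the variational principle on $B$. Your Step~3 is correct. The paper simply asserts the P\'olya--Szeg\H{o} inequality ``for any $u\in H^1(\Omega)$''. You were right to single that step out, but the problem is not one of regularity or convexity that an extra hypothesis could fix.

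Without a boundary condition, the inequality $\int_B|\nabla u^*|^2\le\int_\Omega|\nabla u|^2$ fails on every bounded domain, convex or not. Already for $u=1-x$ on $(0,1)$ one gets $u^*(y)=1-2|y|$ on $(-\tfrac12,\tfrac12)$, and the two Dirichlet integrals are $1$ and $4$. The rearrangement puts the smallest values of $u$ on the whole sphere $\partial B$, whereas in $\Omega$ they may occupy a small piece of $\partial\Omega$ or a small interior region.

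Accordingly, the relative isoperimetric comparison $P(E;\Omega)\ge P(E^*)$ you propose to impose holds for no bounded $\Omega$. For $E=\Omega\setminus\overline{B_r(x_0)}$ the left side tends to $0$ as $r\to0$, while $P(E^*)\to P(B)$. In a cube it already fails for the slabs $\{x_1<s\}$ with $s$ near the side length. The even periodic reflection does not help when $d\ge2$: the reflected level sets are again slabs, and once their volume is close to that of the period cell their perimeter is smaller than that of the ball of equal volume.

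More importantly, no repair of Step~2 can succeed, because the inequality $\lambda_1(\Omega,X)\ge\lambda_1(B,S)$ is itself false. Take $d=1$, $\Omega=(0,2L)$ and $X=(2L-m,2L)$ with $0<m<2L$, so that $B=(-L,L)$ and $S=\{L-m/2<|y|<L\}$. Let $v$ be the even ground state of $(B,S)$; it is not constant, since $0<|S|<|B|$ and $\epsilon>0$. Then $w(x)=v(x/2)\in H^1(\Omega)$, and a change of variables gives
\begin{equation*}
\frac{\int_\Omega|w'|^2+\epsilon\int_X w^2}{\int_\Omega w^2}=\frac{\frac14\int_0^L|v'|^2+\epsilon\int_{L-m/2}^L v^2}{\int_0^L v^2}<\frac{\int_0^L|v'|^2+\epsilon\int_{L-m/2}^L v^2}{\int_0^L v^2}=\lambda_1(B,S).
\end{equation*}
Hence $\lambda_1(\Omega,X)<\lambda_1(B,S)$ for every $\epsilon>0$.

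The same reversal occurs in every dimension. Take $\Omega=(0,a)^d$ and $X=\{a-h<x_1<a\}$ with $h<a/2$. The test function $\cos\bigl(\pi x_1/(2(a-h))\bigr)$, set equal to zero on $X$, gives $\lambda_1(\Omega,X)\le\pi^2/(4(a-h)^2)<\pi^2/a^2$ for all $\epsilon$. On the other hand, as $\epsilon\to\infty$, $\lambda_1(B,S)$ tends to the Dirichlet eigenvalue of the inner ball. Comparing $B$ with its circumscribed cube, this exceeds $d\pi^2\varkappa_d^{2/d}/(4a^2)\ge\pi^2/a^2$.

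So the statement must be changed, not just proved more carefully. Its Dirichlet analogue is true, since P\'olya--Szeg\H{o} does hold on $H^1_0$. For Neumann cubes one only gets non-sharp bounds. For example, the Neumann Poincar\'e inequality on a cube $Q$ of side $n$ yields $\lambda_1(Q,X)\ge\frac{|X|}{4|Q|}\min\{\epsilon,\pi^2/(2n^2)\}$, with no ball constants.
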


Consider a particle confined within a one-dimensional, symmetric potential well. 
Our goal is  to estimate its lowest energy eigenvalue,$\l$.
We define  the Schrödinger operator
\[\label{HL}
-\frac{d^2}{d x^2}+ \ve \chi_L
\] with the  Neumann conditions on the boundary of 
the interval $[-L,L]$, where $\chi_L$ denotes the  characteriostic  function of 
the set
\[
L-1<|x|<L, \quad \text{for}\quad L>1.
\]
By introdicing  the potential,  we subject  the particle to a potential barrier $\ve>0$.

In the region where $|x|<L-1$, the  eigenfunction $u$ corresponding to the eigenvalue $\lambda>0$ equals $u=A \cos (\sqrt \lambda x)$.
On the  other  hand, in the region $L-1<|x|<L$, it equals $u=B \cosh(\sqrt{\ve-\lambda }(L-x))$.
Matching $u$ and $u'$ at the  popint $x=L-1$ leads to
\[\begin{split}
A \cos (\sqrt \lambda (L-1))=B \cosh(\sqrt{\ve-\lambda })\\
A \sqrt \lambda \sin (\sqrt \lambda (L-1))=B \sqrt{\ve-\lambda }\sinh(\sqrt{\ve-\lambda })
\end{split}
\]
Dividing one equation by the other, we obtain that
\[\label{eqn1}
 \sqrt \lambda \tan(\sqrt \lambda (L-1))= \sqrt{\ve-\lambda }\tanh(\sqrt{\ve-\lambda })
\]
It is very well known that $\tanh x<x$ for $x>0$. Consequently,
\[\label{eqn2}
 \sqrt \lambda \tan(\sqrt \lambda (L-1))<\ve-\lambda 
\]
By analyzing the monotonicity of the left (increasing) and right (decreasing) sides of \eqref{eqn1}, we guarantee a unique solution for the eigenvalue $\l$.
Namely,  the right  hand side of \eqref{eqn1} is a monotonically decreasing  function of $\lambda$
on the interval $[0,\ve]$,  while the left hand side is  monotonically increasing  from $0$ to $\infty$ on the interval
$[0, (\pi/(2(L-1)))^2]$. 
Consequently,  the   equation \eqref{eqn1} has a  unique solution on the intersection of these two intervals,  and  this solution satisfies
\[\notag
\sqrt \lambda(L-1)<\pi/2.
\]
Therefore, \eqref{eqn2} implies the inequality
\[\notag
 \lambda<\ve /L.
\]

For a 
given $\ve>0$,  we define the system parameter $\d\in(0,1)$ to be 
\[\notag
\d=1-\frac{\tanh \sqrt \ve}{\sqrt \ve}.
\]
If we know in advance  that  $0<\sqrt \lambda (L-1)<\pi/4$,   then $\tan (\sqrt \lambda (L-1) )<\frac4\pi \sqrt \lambda (L-1)$. Furthermore, we have
\[\notag
 \sqrt{\ve-\lambda} \tanh \sqrt{\ve-\lambda}\geq (1-\delta)(\ve -\lambda).
\] 
Combining the two inequalities  yields the lower bound
\[\notag
\lambda>\frac{ \ve \pi (1-\delta)}{ 4L+1}.
\]
This leads to the following result.
\begin{proposition} Let $\lambda$ be   the eigenvalue
of the operator \eqref{HL}.  Then
 either
\[\notag
\frac{\ve}{L}>\l\geq \Bigl(\frac{\pi}{4(L-1)}\Bigr)^2,
\] or $\l$  obeys 
\[\notag
\frac{ \ve \pi (1-\delta)}{ 4L+1}<\l< \Bigl(\frac{\pi}{4(L-1)}\Bigr)^2.
\] 
\end{proposition}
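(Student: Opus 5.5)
The plan is to work directly with the transcendental equation \eqref{eqn1} for the lowest Neumann eigenvalue $\l$ of \eqref{HL}. I first check that $0<\l<\ve$, so that the ansatz $u=A\cos(\sqrt\l x)$ on $|x|<L-1$ and $u=B\cosh(\sqrt{\ve-\l}(L-|x|))$ on $L-1<|x|<L$ is legitimate. Then I split according to whether $\sqrt\l(L-1)\ge\pi/4$ or $\sqrt\l(L-1)<\pi/4$.

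\emph{Step 1: $\l<\ve/L$.} Positivity $\l>0$ follows from the Rayleigh quotient. Since $\ve\chi_L\ge0$, the quotient is nonnegative, and it cannot vanish: that would force $u$ to be constant, and then the term $\ve\int\chi_L u^2$ would be strictly positive. For the upper bound, insert the constant test function $u\equiv1$. This gives
\[\notag
\frac{\ve\,|\{L-1<|x|<L\}|}{2L}=\frac{2\ve}{2L}=\frac{\ve}{L}.
\]
The inequality is strict because a constant is not an eigenfunction: it fails the equation on $L-1<|x|<L$. Hence $\l<\ve/L<\ve$, which justifies the ansatz and already gives the upper bound in the first alternative.

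\emph{Step 2: the case $\sqrt\l(L-1)\ge\pi/4$.} Combining this with Step 1 gives
\[\notag
\frac{\ve}{L}>\l\geq\Bigl(\frac{\pi}{4(L-1)}\Bigr)^2,
\]
which is the first alternative.

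\emph{Step 3: the case $\sqrt\l(L-1)<\pi/4$.} The upper inequality $\l<(\pi/(4(L-1)))^2$ is immediate, so only the lower bound needs work.

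First, $\tan$ is convex on $[0,\pi/4]$ with $\tan 0=0$ and $\tan(\pi/4)=1$. Hence $\tan y\le \frac4\pi y$ there, and so the left side of \eqref{eqn1} is at most $\frac4\pi\l(L-1)$.

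Second, set $s=\sqrt{\ve-\l}\in(0,\sqrt\ve)$. The function $\tanh s/s$ is decreasing on $(0,\infty)$, so $\tanh s/s\ge \tanh\sqrt\ve/\sqrt\ve=1-\d$. Therefore the right side of \eqref{eqn1} satisfies $s\tanh s\ge(1-\d)(\ve-\l)$.

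Equating both sides of \eqref{eqn1} through these two bounds gives
\[\notag
\Bigl(\tfrac4\pi(L-1)+(1-\d)\Bigr)\l\ \ge\ (1-\d)\ve,
\qquad\text{that is,}\qquad
\l\ge\frac{\pi(1-\d)\ve}{4(L-1)+\pi(1-\d)}.
\]
Because $\pi(1-\d)<\pi<5$, the denominator is strictly smaller than $4L+1$. This yields $\l>\ve\pi(1-\d)/(4L+1)$, which is the second alternative.

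The only delicate points are elementary. One is the monotonicity of $\tanh s/s$, which follows from $\sinh s\cosh s>s$ for $s>0$. The other is making sure the derivation of \eqref{eqn1} uses the lowest eigenvalue, so that the solution lies in the branch $\sqrt\l(L-1)<\pi/2$. This is guaranteed by the uniqueness argument already given before the statement, together with Step 1. I expect Step 3 to be the main (though mild) obstacle, specifically getting the constants to land exactly on $4L+1$.
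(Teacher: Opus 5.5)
Your proof is correct. Step 3 is the paper's lower-bound argument: $\tan y\le\frac4\pi y$ on $[0,\pi/4]$, then $s\tanh s\ge(1-\delta)s^2$ for $s<\sqrt\varepsilon$, then $4(L-1)+\pi(1-\delta)<4L+1$. You additionally justify the two elementary inequalities (convexity of $\tan$, monotonicity of $\tanh s/s$) and the final constant, which the paper leaves implicit.

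You genuinely differ in how you get the upper bound $\lambda<\varepsilon/L$.
\begin{itemize}
\item The paper stays inside the secular equation. It uses $\tanh s<s$ to reach \eqref{eqn2}. A monotonicity argument places the root on the first branch $\sqrt\lambda(L-1)<\pi/2$, and there $\tan y>y$ (used tacitly) gives $\lambda L<\varepsilon$.
\item You plug $u\equiv1$ into the Rayleigh quotient.
\end{itemize}
Your route needs no information about the eigenfunction. It also yields $0<\lambda<\varepsilon$ for free, which is exactly what legitimizes the $\cosh$ ansatz on the layer; the paper takes that for granted by restricting its root search to $[0,\varepsilon]$. The paper's route, in return, keeps everything in one explicit computation and identifies $\lambda$ as the unique root on the first branch.

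A remark on your closing paragraph: in your own argument the branch issue is moot. Step 2 uses only the variational bound, and Step 3 is confined to $\sqrt\lambda(L-1)<\pi/4$, where $\cos(\sqrt\lambda(L-1))>0$. What you do need, and should state, is that the ground state is even. This holds because it is simple and the problem is reflection symmetric. It is what makes the interior piece $A\cos(\sqrt\lambda x)$. Then $A\neq0$ follows, since otherwise the matching conditions force $B=0$.
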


Separating the  variables  we obtain the  following result.

\begin{corollary} \label{C1.3}
Let $d\geq 3$ and
let $B$ be the  ball   of radius $L>1$  centered  at the origin.  Define  $\chi_S$ as  the  characteristic  function 
of the layer 
\[\notag
S=\{x\in {\Bbb R}^d:\qquad L-1<|x|<L\}.
\]
Let $\lambda$  be the  first  Neumann eigenvalue of the  operator 
\[\notag
-\Delta+\varepsilon \chi_S\qquad \text{on the ball } \quad B.
\]
Then
 either
\[\label{pervaya}
\frac{\ve}{L}>\l\geq \Bigl(\frac{\pi}{4(L-1)}\Bigr)^2,\quad
\text{ or }\quad
\frac{ \ve \pi (1-\delta)}{ 4L+1}<\l< \Bigl(\frac{\pi}{4(L-1)}\Bigr)^2.
\] 
\end{corollary}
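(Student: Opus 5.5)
The plan is to reduce the statement to a one-dimensional radial problem and then compare with the interval computation that precedes the Proposition. The operator $-\Delta+\varepsilon\chi_S$ on $B$ with Neumann conditions is rotation invariant. Its first eigenvalue is simple, with a positive eigenfunction. So the ground state $u$ is radial, $u=u(r)$, and it solves
\[\notag
u''+\frac{d-1}{r}u'+(\lambda-\varepsilon\chi_S)u=0,\qquad u'(0)=0,\quad u'(L)=0 .
\]
Integrating $(r^{d-1}u')'=r^{d-1}(\varepsilon\chi_S-\lambda)u$ from $0$ gives $u'<0$ on $(0,L-1]$, using $0<\lambda<\varepsilon$. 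Integrating backwards from $L$ gives $u'\le 0$ on $[L-1,L]$ as well. Hence the logarithmic derivative $w=u'/u$ satisfies $w\le 0$ on $(0,L]$. It obeys the Riccati equation
\[\notag
w'=-(\lambda-\varepsilon\chi_S)-w^2-\frac{d-1}{r}w .
\]
Since $w\le0$, the extra term $-\frac{d-1}{r}w$ is nonnegative, and this sign drives both comparisons below.

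\textbf{Inner region $r<L-1$.} Here $w'\ge -\lambda-w^2$ and $w(0)=0$. The one-dimensional solution of $v'=-\lambda-v^2$, $v(0)=0$, is $v(r)=-\sqrt\lambda\tan(\sqrt\lambda r)$. Riccati comparison therefore gives $w(L-1)\ge -\sqrt\lambda\tan(\sqrt\lambda(L-1))$, provided $\sqrt\lambda(L-1)<\pi/2$. If instead $\sqrt\lambda(L-1)\ge\pi/4$, we already have $\lambda\ge(\pi/(4(L-1)))^2$, which is the lower bound in the first alternative of \eqref{pervaya}. So for the second alternative we may assume $\sqrt\lambda(L-1)<\pi/4$.

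\textbf{Shell $L-1<r<L$.} Here $w'\ge(\varepsilon-\lambda)-w^2$ with $w(L)=0$. Integrating backwards from $L$, $w$ drops at least as fast as the solution of $v'=(\varepsilon-\lambda)-v^2$, $v(L)=0$, which is $v(r)=-\sqrt{\varepsilon-\lambda}\tanh(\sqrt{\varepsilon-\lambda}(L-r))$. This yields $w(L-1)\le-\sqrt{\varepsilon-\lambda}\tanh\sqrt{\varepsilon-\lambda}$.

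Combining the two comparisons gives
\[\notag
\sqrt\lambda\tan(\sqrt\lambda(L-1))\ge\sqrt{\varepsilon-\lambda}\tanh\sqrt{\varepsilon-\lambda}.
\]
This is exactly the inequality obtained from \eqref{eqn1} in the one-dimensional case, with $=$ replaced by $\ge$. From here the derivation preceding the Proposition applies verbatim. Use $\tan t<\frac4\pi t$ for $t<\pi/4$, and use $\sqrt{\varepsilon-\lambda}\tanh\sqrt{\varepsilon-\lambda}\ge(1-\delta)(\varepsilon-\lambda)$. The latter holds because $\tanh s/s$ is decreasing and $\varepsilon-\lambda\le\varepsilon$. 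Solving the resulting linear inequality in $\lambda$ gives $\lambda>\varepsilon\pi(1-\delta)/(4L+1)$, which is the second alternative.

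\textbf{Upper bound $\lambda<\varepsilon/L$ (main obstacle).} The Riccati comparisons above only push $\lambda$ from below, so they do not give this bound. The first thing I would try is the identity obtained by integrating the equation over $B$:
\[\notag
\lambda\int_B u=\varepsilon\int_S u .
\]
Since $u$ is radially decreasing, the average of $u$ over the outer shell $S$ is at most its average over $B$. Hence $\lambda\le\varepsilon|S|/|B|=\varepsilon\bigl(1-(1-1/L)^d\bigr)$. For $d\ge3$ this is only $\lesssim d\varepsilon/L$, which is weaker than the claimed $\varepsilon/L$.

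To close the gap I would try one of two refinements. The first is to use the Bessel form $u=c\,r^{1-d/2}J_{d/2-1}(\sqrt\lambda r)$ in the inner ball together with the flux balance at $r=L-1$. The second is a test-function argument with a radial $u$ adapted to the $r^{d-1}$ weight. If neither reaches the constant $1$, the fallback is to carry the constant $d$ (or $1-(1-1/L)^d$) through the statement. That only changes constants in the later estimates that use \eqref{pervaya}.
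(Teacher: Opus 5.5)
The Riccati comparison that gives you the second alternative is correct. It also supplies something the paper's one-line justification (``separating the variables'') lacks: for $d\ge 3$ the radial operator carries the term $\frac{d-1}{r}\partial_r$, so \eqref{eqn1} is no longer an identity. Its sign gives only the inequality you derive, and that inequality suffices for the lower bound.

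\textbf{The gap.} The genuine gap is the bound $\lambda<\varepsilon/L$ in the first alternative, which you leave open. Neither of your refinements (Bessel functions, adapted test functions) can close it, because for $d\ge3$ it is false.

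Here is a counterexample. Take $d=3$ and $L\ge 3$.
\begin{itemize}
\item Choice of $\varepsilon$. As $\varepsilon\to0$, your bound $\lambda\le\varepsilon|S|/|B|$ gives $\lambda\to0$. For large $\varepsilon$, your Riccati inequality rules out $\sqrt\lambda(L-1)\le1$: its left side stays below $\tan 1/(L-1)$, while its right side tends to infinity. Since $\lambda$ depends continuously on $\varepsilon$, some $\varepsilon$ gives $\sqrt\lambda(L-1)=1$. Then $\lambda>(\pi/(4(L-1)))^2$, so the second alternative is excluded.
\item Shape of $u$. On $B_{L-1}$ the ground state is $u=u(0)\sin(\sqrt\lambda r)/(\sqrt\lambda r)$. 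Hence $u(L-1)=u(0)\sin 1$ and $u'(L-1)=-u(0)(\sin1-\cos1)/(L-1)$.
\item Drop across the shell. On the shell $r^2u'$ increases to $0$, so $u'\ge u'(L-1)$ there. Therefore $u(L)\ge u(0)\bigl(\sin1-\frac{\sin1-\cos1}{L-1}\bigr)$.
\item Conclusion. Insert $\int_B u\le u(0)|B|$ and $\int_S u\ge u(L)|S|$ into your identity $\lambda\int_B u=\varepsilon\int_S u$. This gives
\begin{equation*}
\lambda\ \ge\ \varepsilon\Bigl(\sin1-\frac{\sin1-\cos1}{L-1}\Bigr)\Bigl(1-\bigl(1-\tfrac1L\bigr)^3\Bigr).
\end{equation*}
The right side is about $2.5\,\varepsilon/L$ for large $L$. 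Already at $L=3$ it exceeds $0.48\,\varepsilon>\varepsilon/L$. So neither alternative holds.
\end{itemize}

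\textbf{Why it fails.} Your own computation points to the mechanism. When $\sqrt\lambda(L-1)\sim1$, the eigenfunction varies only by a bounded factor, so $\lambda/\varepsilon$ is comparable to $|S|/|B|\approx d/L$, not $1/L$. In one dimension the bound came from \eqref{eqn1} being an equality, combined with $\tan t>t$. Your comparison yields only the inequality in the direction useful for lower bounds.

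\textbf{The repair.} Your fallback is the correct statement: replace $\varepsilon/L$ by $\varepsilon|S|/|B|=\varepsilon\bigl(1-(1-1/L)^d\bigr)$. This follows directly from the constant test function, with no monotonicity of $u$ needed. The repair costs nothing in the rest of the paper. The function $\phi_\varepsilon$ built from the next corollary uses only the lower bound
\begin{equation*}
\lambda\ge\min\Bigl\{\Bigl(\frac{\pi}{4(L-1)}\Bigr)^2,\ \frac{\varepsilon\pi(1-\delta)}{4L+1}\Bigr\}.
\end{equation*}
That is exactly what your Riccati argument proves.
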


Combining Theorems ~\ref{T0.1}   and ~\ref{T1.1} with Corollary ~\ref{C1.3}  yields the following  statement.

\begin{corollary}
Let $\lambda$  be the  first  Neumann eigenvalue of the  operator  
\[\notag
-\Delta +\ve \sum_{k\in {\Bbb Z^d}} \omega_k \chi(x-k)
\]
on the cube $Q_{\tilde n,n}$.  Then there  is an $n_0(\omega)$  such that for all $n>n_0(\omega)$
 either
\[\notag
\frac{\ve}{n/\varkappa_d^{1/d}}>\l\geq \Bigl(\frac{\pi}{4(n/\varkappa_d^{1/d}-1)}\Bigr)^2,\quad
\text{ or }\quad
\frac{ \ve \pi (1-\delta)}{ 4n/\varkappa_d^{1/d}+1}<\l< \Bigl(\frac{\pi}{4(n/\varkappa_d^{1/d}-1)}\Bigr)^2,
\] 
where $\varkappa_d$  is the   volume of the unit ball in ${\Bbb R}^d$.
\end{corollary}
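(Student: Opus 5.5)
We chain the three earlier results in the order named. First we fix $\omega$ in the full-measure event of Theorem~\ref{T0.1}, which gives $n_0(\omega)$. For $n>n_0(\omega)$, every cube $Q_{\tilde n,n}$ meeting the $n$-th layer contains fewer than $(n-1)^d$ white cells. Equivalently, the set $X=\bigcup\{k+[0,1)^d:\ \omega_k=1\}\cap Q_{\tilde n,n}$ of black cells has volume
\[\notag
|X|> n^d-(n-1)^d .
\]
On $Q_{\tilde n,n}$ the potential $\ve\sum_k\omega_k\chi(x-k)$ is exactly $\ve\chi_X$. So the Neumann eigenvalue in question is $\lambda_1(Q_{\tilde n,n},X)$.

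Next we apply Theorem~\ref{T1.1} with $\Omega=Q_{\tilde n,n}$, which has volume $n^d$. The ball $B$ of equal volume has radius $L=n/\varkappa_d^{1/d}$. Theorem~\ref{T1.1} gives
\[\notag
\lambda_1(Q_{\tilde n,n},X)\ge\lambda_1(B,S_X),
\]
where $S_X$ is the boundary layer of $B$ with $|S_X|=|X|$.

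We then compare $S_X$ with the unit-width layer $S=\{L-1<|x|<L\}$ of Corollary~\ref{C1.3}. Since the Rayleigh quotient is monotone in the potential, it suffices to have $S\subset S_X$, that is, $|S_X|\ge|S|$. The volume of $S$ is
\[\notag
|S|=\varkappa_d\bigl(L^d-(L-1)^d\bigr)=n^d-\varkappa_d\bigl(n/\varkappa_d^{1/d}-1\bigr)^d .
\]
We must therefore check that $n^d-(n-1)^d\ge|S|$, i.e. $\varkappa_d(L-1)^d\ge(n-1)^d$, i.e. $n-\varkappa_d^{1/d}\ge n-1$. This requires $\varkappa_d\le 1$, which fails for $3\le d\le 12$.

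This volume comparison is the main obstacle. The first thing we would try is to use the slack in Theorem~\ref{T0.1}. Its Borel--Cantelli argument works equally well if the threshold of white cells is lowered to $(n-c_d)^d$ for a fixed $c_d$: the probability still decays like $q^{n^d(1+o(1))}$ against volume growth $A^{dn^d}$. Taking $c_d\ge\varkappa_d^{1/d}$ then makes $S\subset S_X$ hold for large $n$, so we simply enlarge $n_0(\omega)$. Failing that, one can redefine $S$ as the layer of volume exactly $n^d-(n-1)^d$. Its width $w$ satisfies $w\to \varkappa_d^{-1/d}$ as $n\to\infty$, and the one-dimensional analysis leading to Corollary~\ref{C1.3} can be rerun with width $w$ in place of $1$; the constants change only by bounded factors, which are absorbed for $n$ large.

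Once $\lambda_1(Q_{\tilde n,n},X)\ge\lambda_1(B,S)$ is established, we apply Corollary~\ref{C1.3} with $L=n/\varkappa_d^{1/d}$; $L>1$ holds for $n$ large. This yields one of the two alternatives in \eqref{pervaya}, with $L$ replaced by $n/\varkappa_d^{1/d}$. Strictly, this gives these alternatives as lower bounds on $\lambda$, and that is what is used later. The upper bounds $\ve/L$ transfer as stated only if we read the corollary as a statement about the comparison eigenvalue $\lambda_1(B,S)$.
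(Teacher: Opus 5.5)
Your chain is the paper's own proof, which is just the sentence ``combining Theorems~\ref{T0.1} and~\ref{T1.1} with Corollary~\ref{C1.3}''. The difference is that you carried out the volume check the paper skips, and that check exposes a genuine gap in the paper's one-liner.

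\textbf{The volume mismatch.} With $L=n/\varkappa_d^{1/d}$, the unit-width layer has $|S|=n^d-(n-\varkappa_d^{1/d})^d$. This exceeds the black volume $n^d-(n-1)^d$ guaranteed by Theorem~\ref{T0.1} by roughly $d(\varkappa_d^{1/d}-1)n^{d-1}$ whenever $\varkappa_d>1$, i.e.\ for all $3\le d\le 12$. So the direct combination does not close.

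\textbf{Repair (a) is sound.} You have $\mathbb{P}\bigl(\text{at least }(n-c_d)^d\text{ white cells}\bigr)\le\binom{n^d}{(n-c_d)^d}q^{(n-c_d)^d}\le q^{n^d}e^{O(n^{d-1}\ln n)}$. This is still summable against the $O(A^{dn^d})$ cubes meeting $\mathcal{L}_n$, since $A^dq<1$. Choosing $c_d\ge\varkappa_d^{1/d}$ then gives $S\subset S_X$.

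\textbf{Repair (b) does not give the stated inequality.} Rerunning the one-dimensional argument with width $w$ replaces $\tanh\sqrt{\varepsilon-\lambda}$ by $\tanh(w\sqrt{\varepsilon-\lambda})$. To leading order in $L$, the lower bound in the second alternative is then multiplied by $\tanh(w\sqrt{\varepsilon})/\tanh\sqrt{\varepsilon}<1$. Since $w\to\varkappa_d^{-1/d}<1$, this factor does not tend to $1$ as $n\to\infty$, so it is not absorbed. Rely on (a).

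\textbf{The upper bounds.} Your caveat is correct and can be sharpened: the upper bounds genuinely fail for the cube eigenvalue. An all-black cube is allowed by Theorem~\ref{T0.1}, and it occurs almost surely in every late layer when $A^dp>1$. Its eigenvalue is $\lambda=\varepsilon$, which violates both alternatives once $L>1$ and $(\pi/(4(L-1)))^2<\varepsilon$. The provable conclusion is $\lambda\ge\min\{(\pi/(4(L-1)))^2,\ \varepsilon\pi(1-\delta)/(4L+1)\}$, and that is exactly what enters the definition of $\phi_\varepsilon$.

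\textbf{A caution that applies equally to the paper.} The whole chain rests on Theorem~\ref{T1.1}. Its proof applies the P\'olya--Szeg\H{o} inequality to $u\in H^1(\Omega)$ that do not vanish on $\partial\Omega$, where it fails. For example, for $u=x_1$ on the unit cube in $d=3$, one gets $\int_B|\nabla u^*|^2=27\varkappa_3^{2/3}/7\approx 10>1=\int_\Omega|\nabla u|^2$. You should not treat that theorem as settled.
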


We now
define  the  function $\phi_\ve$  by setting 
\[\notag
\phi_\ve(x)=\min\Bigl\{\frac{ \ve \pi (1-\delta)\varkappa_d^{1/d}}{ 4((\log_A|x|)^{1/d}+8+\varkappa_d^{1/d}}, \,\, \Bigl( \frac{\pi \varkappa_d^{1/d}} {4   ( \log_A |x|)^{1/d} +8 }   \Bigr)^2    \Bigr\},\qquad |x|>A.
\]
Let $\tilde\chi_R$
denote the  characteristic  function of the   region $\{x\in {\Bbb R}^d:\quad |x|>R\}$.
We claim that for any $\omega$,  there exists  $R_0(\omega)>0$   such that the operator
\[\notag
-\Delta+\sum_{n\in {\Bbb Z}^d}\omega_n \chi(x-n)- \phi_\ve(|x|)\tilde\chi_R
\]
does  not have negative eigenvalues (i.e.  it is positive) for all $R>R_0(\omega)$.
Moreover, 

\begin{theorem} Let $d\geq 3$, and
let $V\geq 0$ be a  bounded  function on ${\Bbb R}^d$  such that
\[\notag
V(x)\leq   \Bigl( \frac{\pi \varkappa_d^{1/d}} {4   ( \log_A (|x|))^{1/d} +8 }   \Bigr)^2  
\] for  all $|x|>R$. Then the negative spectrum of  the operator
\[\notag
-\Delta+\ve \sum_{n\in {\Bbb Z}^d}\omega_n \chi(x-n)-V(x),\qquad \ve>0,
\]
is  finite
almost  surely.
\end{theorem}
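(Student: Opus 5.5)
We will use a Neumann bracketing (IMS-free) argument on the cube decomposition of Section 5: lower the operator by adding Neumann conditions on cube boundaries, and show that on every far cube the random barrier beats $V$.

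\emph{Step 1 (reduction).} Fix $A$ with $1<A^d<1/q$, which is implicit in the definition of the layers. By Theorem~\ref{T0.1} there is, almost surely, an $n_0(\omega)$ such that every cube $Q_{\tilde n,n}$ meeting the layer $\mathcal L_n$ with $n>n_0(\omega)$ has fewer than $(n-1)^d$ white cells. Hence at least $n^d-(n-1)^d$ of its cells carry the potential $\varepsilon\omega_k\chi(x-k)$.

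Choose $R_1=A^{n_0^d}$ large. Partition $\{|x|\le R_1'\}$ into a bounded region $\Omega_0$, where $R_1'$ is slightly larger than $R_1$ to absorb cubes straddling the sphere, and partition the rest into the cubes $Q_{\tilde n,n}$ assigned to layers $\mathcal L_n$, $n>n_0$. We must arrange that each point lies in exactly one cube; the natural choice is to assign each cube of the $n$-grid to the layer containing its corner. Making this a genuine disjoint tiling near layer interfaces, where grids of different mesh meet, will need some care. If the tiling fails, I would instead place a thin bounded transition region into $\Omega_0$-like pieces, or use a partition of unity with an IMS error of order $n^{-2}$ absorbed by the slack constant $8$ in the denominator.

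Imposing Neumann conditions on all interfaces lowers the quadratic form. Therefore $N(\varepsilon)$ is at most the number of negative eigenvalues on $\Omega_0$, which is finite since $\Omega_0$ is bounded and $V$ is bounded, plus the sum over far cubes.

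\emph{Step 2 (each far cube is nonnegative).} On $Q=Q_{\tilde n,n}$ the quadratic form of $-\Delta+\varepsilon\sum\omega_k\chi(x-k)-V$ is bounded below by $\lambda_1(Q)-\sup_Q V$, with $\lambda_1$ the first Neumann eigenvalue. By Theorem~\ref{T1.1} and the preceding corollary (the cube equals in volume a ball of radius $L=n/\varkappa_d^{1/d}$), $\lambda_1(Q)$ satisfies one of the two alternatives in \eqref{pervaya}.

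\emph{Step 3 (comparison).} On $Q$ we have $|x|\ge A^{(n-1)^d}-n\sqrt d$, and by \eqref{n<x} $n\le(\log_A|x|)^{1/d}+2$. Then
\[\notag
V(x)\le\Bigl(\frac{\pi\varkappa_d^{1/d}}{4(\log_A|x|)^{1/d}+8}\Bigr)^2\le\Bigl(\frac{\pi\varkappa_d^{1/d}}{4n}\Bigr)^2\le\Bigl(\frac{\pi}{4(L-1)}\Bigr)^2 .
\]
If $\lambda_1\ge(\pi/(4(L-1)))^2$, the cube contributes no negative eigenvalue.

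In the second alternative, $\lambda_1$ exceeds only $\varepsilon\pi(1-\delta)/(4L+1)$, which is of order $\varepsilon/n$. This is much smaller than $n^{-2}$ for large $n$ only if $n\gtrsim 1/\varepsilon$, so the comparison goes the wrong way for large $n$. This is the main obstacle. I would argue that for fixed $\varepsilon$, once $n\gg\varepsilon^{-1}$, the second alternative is impossible. Indeed, the first alternative is the regime $\lambda<(\pi/(4(L-1)))^2<\varepsilon/L$, i.e. $L\gtrsim 1/\varepsilon$, so for large cubes we always fall in the first case. Concretely, from \eqref{eqn1}, $\sqrt\lambda\tan(\sqrt\lambda(L-1))=\sqrt{\varepsilon-\lambda}\tanh\sqrt{\varepsilon-\lambda}$, so $\lambda<(\pi/(4(L-1)))^2$ would force $\tan(\cdot)\le 1$. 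This gives $\sqrt\lambda\ge c\sqrt\varepsilon\tanh\sqrt{\varepsilon/2}$, hence $\lambda\ge c'\varepsilon^2$, which contradicts $\lambda<(\pi/(4(L-1)))^2$ once $L\gg\varepsilon^{-1}$.

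Thus only finitely many layers, those with $n\lesssim C(\varepsilon)$, can possibly contribute. These lie in a bounded region, which we absorb into $\Omega_0$ by enlarging $R_1$ depending on $\varepsilon$. This is allowed, since the present statement, unlike Theorem~\ref{CRLMS}, requires no uniformity in $\varepsilon$.

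\emph{Step 4 (bounded part).} On $\Omega_0$ the Neumann operator $-\Delta-V$ with $V$ bounded has compact resolvent, hence finitely many negative eigenvalues. Combining this with Steps 2–3 gives $N(\varepsilon)<\infty$ almost surely.
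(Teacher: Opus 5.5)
Your route is the paper's: Neumann bracketing over the cubes $Q_{\tilde n,n}$, Theorem~\ref{T0.1} for the black cells, and Theorem~\ref{T1.1} with Corollary~\ref{C1.3} for the cube eigenvalue. Far out, for fixed $\varepsilon$, only the $n^{-2}$ bound matters, and the bounded remainder contributes finitely many eigenvalues.

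The ``obstacle'' in Step 3 is stated backwards. For $n\gg 1/\varepsilon$ one has $\varepsilon/n\gg n^{-2}$, so the second alternative in \eqref{pervaya} is empty and $\lambda_1\ge(\pi/(4(L-1)))^2$ holds directly. There is no need to go back to \eqref{eqn1}. This is exactly the paper's remark that, for large $|x|$, the minimum defining $\phi_\varepsilon$ is the quadratic term.

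The genuine gap is Step 1, which the paper also passes over in silence. Cubes of the $n$-grid and the $(n+1)$-grid do not fit together along the sphere $|x|=A^{n^d}$, so assigning cubes by their corners produces overlaps and holes. Neither of your fallbacks repairs this.

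\emph{Transition zones into $\Omega_0$.} There is one transition zone per layer, hence infinitely many, so they cannot all be placed in a bounded $\Omega_0$. Each is only known to carry finitely many negative Neumann eigenvalues, which gives no finite total. Moreover, if you chop a zone into small pieces, some piece may contain no black cell, and then the constant test function already gives a negative eigenvalue.

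\emph{IMS on the scale of a cube.} This costs an error $\asymp n^{-2}$, and that error cannot be absorbed. On the $n$-th layer $(\log_A|x|)^{1/d}\ge n-1-o(1)$, so $\sup_Q V\le(\pi\varkappa_d^{1/d}/(4n+4-o(1)))^2$, while the cube bound is $(\pi\varkappa_d^{1/d}/(4n-4\varkappa_d^{1/d}))^2$. An additive constant in the denominator buys a margin of only $O(n^{-3})$.

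\emph{A clean repair.} Use sup-norm shells $\{a_{n-1}\le\|x\|_\infty<a_n\}$ with $a_n\in n(n+1)\mathbb{N}$ and $A^{n^d}\le a_n<A^{n^d}+n(n+1)$. Each shell is then exactly a union of cubes of the $n$-grid. The union bound behind Theorem~\ref{T0.1} changes only by a constant factor. Finally, $|x|\ge a_{n-1}\ge A^{(n-1)^d}$ gives $n\le(\log_A|x|)^{1/d}+1$.

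\emph{An alternative repair.} Radial IMS cutoffs with transition width comparable to $A^{(n-1)^d}$, tiling each enlarged layer by its own $n$-grid, cost only $O(A^{-2(n-1)^d})$, and the $O(n^{-3})$ margin absorbs this.

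With either fix, the rest of your argument goes through.
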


\section{Key Estimates and Inequalities}

The number of negative eigenvalues of a Schrödinger operator with a sum of potentials \(V_1 + V_2\) is estimated using the variational principle. 
This estimate relates the total number of bound states (negative energy levels) of the combined system to the bound states  related to  the  potentials $V_1$ and $V_2$ separately.

The relationship between the number of negative eigenvalues \(N(V)\) of an operator \( -\Delta - V\) for the combined and separate potentials is generally described by the subadditivity of the counting function. Namely,
 for any two potentials \(V_{1}\) and \(V_{2}\), the number of negative eigenvalues of the operator with the sum potential satisfies:
 \[\notag
 N(V_{1}+V_{2})\le N(V_{1}/(1-\tau ))+N(V_{2}/\tau )
 \]for any \(0 < \tau< 1.\)
This arises because the quadratic form of the combined operator is the sum of the individual forms, and the dimension of the subspace where the sum is negative is constrained by the dimensions of the subspaces where the components are negative.
Applying this to a perturbed system with \(\chi_n(x) = \chi(x-n)\), we obtain:
\[\notag
N(V- \ve \sum_{n\in {\Bbb Z}^d}\omega_n \chi_n)\leq  N(\tilde \chi_R \phi_{2\ve}- 2\ve \sum_{n\in {\Bbb Z}^d}\omega_n \chi_n)+N(2V-\tilde\chi_R\phi_{2\ve})
\]
Choosing $R>R_0(\omega)$, we obtain that
\[\label{N<N(2V)}
N(V- \ve \sum_{n\in {\Bbb Z}^d}\omega_n \chi_n)\leq  N(2V-\tilde\chi_R\phi_{2\ve})
\]
For a Schrödinger operator \( -\Delta - V\) acting on the Hilbert space \(L^2(\mathbb{R}^d)\),  with $d\geq3$
:\[\label{CRL} N(V)\le C_{d}\int _{\mathbb{R}^{d}}V_+(x)^{d/2}\,dx\]
Here, \(N(V)\) is the total number of negative eigenvalues, counting multiplicities (physically, the number of bound states an electron can occupy), \(C_{d}\) is a universal constant that depends only  on the spatial dimension \(d\), and \(V_+(x)\) is the positive part of the potential function $V$:
 \[\notag V_+(x)=\max(V(x), 0).\]
Consequently, if  follows  from \eqref{N<N(2V)} and \eqref{CRL}  that
\[\notag
N(V- \ve \sum_{n\in {\Bbb Z}^d}\omega_n \chi_n)\leq  2^{d/2} C_d \Bigl( \int_{|x|<R}V^{d/2}(x) dx+\int_{|x|>R}\bigl(V(x)-2^{-1} \phi_{2\ve}\bigr)_+^{d/2} dx\Bigr)
\]
This proves  Theorem~\ref{CRLMS}.

For the case \(d=1\), we replace the Cwikel-Lieb-Rozenblum (CLR) inequality with the Bargmann inequality. 
In one dimension, the CLR bound fails because a negative potential, no matter how small or fast-decaying, always produces at least one bound state.
In the one-dimensional case, we utilize specific bounds for the number of negative eigenvalues \(N(V)\) of the operator \[\notag -\frac{d^2}{dx^2} - V.\] 
According to the Bargmann inequality, the number of bound states is controlled by the first moment of the potential:\begin{equation}\label{Bargmann}N(V) \le 1 + \int_{-\infty}^{\infty} |x| V_+(x) , dx.\end{equation}
Applying the subadditivity principle and the reduction in \eqref{N<N(2V)} to the one-dimensional case, we can use the  Bargmann bound   \eqref{Bargmann}  to obtain:
\begin{equation}
\notag N\left(V- \ve \sum_{n\in {\Bbb Z}_+}\omega_n \chi_n\right) \leq1+\int _{-\infty}^{\infty }|x|\left(2V(x)- \tilde{\chi }_{R}\phi _{2\epsilon }(x)\right)_{+}dx.
\end{equation}
Splitting the integration domain at \(R\), we arrive at the following estimate for \(d=1\):\begin{equation}\notag N\left(V- \ve \sum_{n\in {\Bbb Z}_+}\omega_n \chi_n\right) \leq 1+2\left(\int _{|x|<R}|x|V_{+}(x)dx+\int _{|x|>R}|x| \left(V(x)-\frac{1}{2} \phi _{2\epsilon }(x)\right)_{+}dx\right).\end{equation}
This provides the necessary bound on the number of negative eigenvalues for the perturbed one-dimensional system, analogous to the result obtained via the CLR inequality in higher dimensions.

\section{Proof of Theorem~\ref{T3}}

According to assumptions of  Theorem~\ref{T3},
\[\notag
V(x) = \frac{C}{\left(4   ( \log_A |x|)^{1/d} +8 +\varkappa_d^{1/d}\right)^2},\]
where
\[\notag
C< \pi^2  \varkappa_d^{2/d} .  
\]
The first integral on the right  hand side of \eqref{estimateMS} does not contain \(\varepsilon \). Since \(V\), it is bounded, this term contributes a constant, independent of \(\varepsilon \),.

We can  replace  \(\phi_{2\varepsilon}(x)\) by the function   defined as the minimum of two terms:
\[\notag\phi_\ve(x)=\min\Bigl\{\frac{ \ve \pi (1-\delta)\varkappa_d^{1/d}}{ 4((\log_A|x|)^{1/d}+8+\varkappa_d^{1/d}}, \,\, \Bigl( \frac{\pi \varkappa_d^{1/d}} {4   ( \log_A |x|)^{1/d} +8 +\varkappa_d^{1/d}}   \Bigr)^2    \Bigr\},\qquad |x|>A. \]
For large \(|x|\) (where \(\log_A|x|\) is large), the first term is strictly larger  than the second term  which is larger  than $V$. 
Therefore, 
\begin{equation*}
\begin{split}
\int_{|x|>R}&\Bigl(V(x)-\frac{1}{2}\phi _{2\varepsilon }(x)\Bigr)_+^{d/2}   dx = \\ 
\int_{|x|>R} &\left(\frac{C}{\left(4   ( \log_A |x|)^{1/d} +8 +\varkappa_d^{1/d}\right)^2}  - \frac{ \ve \pi (1-\delta_{2\ve})\varkappa_d^{1/d}}{ 4(\log_A|x|)^{1/d}+8+\varkappa_d^{1/d}}  \right)_+^{d/2}dx. 
\end{split}
\end{equation*}
The positive part \((\dots)_+\) implies the integrand is non-zero when
\[\notag 
 \log _{A}|x|<\left(\frac{C}{4\varepsilon \pi (1-\delta _{2\varepsilon })\varkappa _{d}^{1/d}}-2-\varkappa _{d}^{1/d}/4\right)^d=T_\ve.
\]

Using spherical coordinates in the domain \(|x| > R\)  and
substituting \(t = \log_A|x|\), we conclude that the second integral equals
\[I(\ve)=|{\Bbb S}_d|\int _{\log_A R}^{\infty }\left(\frac{C}{\left(4t^{1/d}+8+\varkappa_d^{1/d}\right)^2}-\frac{\varepsilon \pi \varkappa _{d}^{1/d}}{4 t^{1/d}+8+\varkappa_d^{1/d}}\right)_{+}^{d/2}A^{t\cdot d}\ln A\,dt\]
where $|{\Bbb S}_d|=d \varkappa_d$ is the area of the unit  sphere.
We factor out 
\(\frac{C}{\left(4t^{1/d}+8+\varkappa_d^{1/d}\right)^2}\) from the parenthesis:
\begin{equation*}
\begin{split}
\left(\frac{C}{\left(4t^{1/d}+8+\varkappa_d^{1/d}\right)^2}-\frac{\varepsilon \pi \varkappa _{d}^{1/d}}{4 t^{1/d}+8+\varkappa_d^{1/d}}\right)_{+}^{d/2}=\\
\frac{C^{d/2}}{\left(4t^{1/d}+8+\varkappa _{d}^{1/d}\right)^{d}}  \left(1-\frac{\varepsilon(1-\delta_{2\ve}) \pi \varkappa _{d}^{1/d} \left(4t^{1/d}+8+\varkappa _{d}^{1/d}\right)}{C}\right)_{+}^{d/2}\\=
\frac{(4\varepsilon(1-\delta_{2\ve}) \pi)^{d/2} \varkappa _{d}^{1/2} T_\ve^{1/2}}{\left(4t^{1/d}+8+\varkappa _{d}^{1/d}\right)^{d}} 
\left(1-\left(\frac{t}{T_{\varepsilon }}\right)^{1/d}\right)_+^{d/2},
\end{split}
\end{equation*}
where  \( T_\varepsilon = \left(\frac{C}{4\varepsilon \pi (1-\delta _{2\varepsilon })\varkappa _{d}^{1/d}}-2-\varkappa _{d}^{1/d}/4\right)^d\).

Then the asymptotics of   the  integral becomes:\[\notag I(\varepsilon )\sim |{\Bbb S}_{d}|4^{-d}C^{d/2}\ln A\int _{\log _{A}R}^{T_{\varepsilon }}\frac{1}{t}\left(1-\left(\frac{t}{T_{\varepsilon }}\right)^{1/d}\right)^{d/2}A^{dt}\,dt.\]
Obviously,    the integral $\int _{\log _{A}R}^{T_{\varepsilon }}\frac{1}{t}\left(1-\left(\frac{t}{T_{\varepsilon }}\right)^{1/d}\right)^{d/2}A^{dt}\,dt$  can be replaced  by
$\int _{T_{\varepsilon }/2}^{T_{\varepsilon }}\frac{1}{t}\left(1-\left(\frac{t}{T_{\varepsilon }}\right)^{1/d}\right)^{d/2}A^{dt}\,dt$, because
\[\notag
\int _{\log _{A}R}^{T_{\varepsilon }/2}\frac{1}{t}\left(1-\left(\frac{t}{T_{\varepsilon }}\right)^{1/d}\right)^{d/2}A^{dt}\,dt\leq  \frac{1}{\log _{A}R} A^{dT_{\varepsilon }/2}.
\]
Changing the variables, we obtain
\[\notag
\begin{split}
I(\varepsilon )\sim |{\Bbb S}_{d}| 4^{-d}C^{d/2}\ln A\int _{1/2}^{1}\frac{1}{t}\left(1-t^{1/d}\right)^{d/2}A^{dT_{\varepsilon }t}\,dt\sim \\
 |{\Bbb S}_{d}|4^{-d}C^{d/2}\ln A\int _{0}^{\infty}\left(\frac{s}{d}\right)^{d/2}A^{dT_{\varepsilon }(1-s)}\,ds.
 \end{split}
\]
Finally, computing the remaining exponential integral, the asymptotics as \(\varepsilon \to 0\) are given by:\[\notag
I(\varepsilon )\sim |{\Bbb S}_{d}| 4^{-d}C^{d/2}\ln A  \frac{A^{dT_{\varepsilon}}}{d^{d+1} (T_{\varepsilon} \ln A)^{d/2 + 1}} \Gamma\left(\frac{d}{2} + 1\right).
\]
Using the identities for the surface area of a \(d\)-dimensional sphere \(\vert{}{\Bbb S}_{d}\vert{} = \frac{2\pi^{d/2}}{\Gamma(d/2)}\) and \(\Gamma(\frac{d}{2}+1) = \frac{d}{2}\Gamma(\frac{d}{2})\), we have \(\vert{}{\Bbb S}_{d}\vert{} \Gamma(\frac{d}{2}+1) = d\pi^{d/2}\).
Thus,
we  obtain
\[
\notag
I(\varepsilon )\sim \left(\frac{C\pi }{16d^{2}}\right)^{d/2}\frac{A^{dT_{\varepsilon }}}{T_{\varepsilon }^{d/2+1}(\ln A)^{d/2}}\quad \text{as\ }\varepsilon \rightarrow 0.
\]

\section{Proof of Theorem ~\ref{T4}}

For the reader's convenience, we recall that
\[\notag
V(x) =  \frac{C}{\left(4 \log_A \vert{}x\vert{} + 10 \right)^2},
\quad\text{
and }
\quad
T_{\varepsilon} = \frac{C}{8\varepsilon \pi (1-\delta_{2\ve})} - \frac{5}{2}.
\]
According to Calogero’s bound in one dimension, the number of bound states \(N\) of the operator \(-d^2/dx^2 - W(x)\) is bounded from above.  For a potential \(W(x)\geq 0\) that monotonically decreases with distance from the origin, the inequality is:
\[\notag N\le 1+\frac{2}{\pi }\int _{-\infty }^{\infty }|W(x)|{}^{1/2}\,dx.\]
Applying the inequality \eqref{N<N(2V)},   we obtain that
\[\notag
N(\ve)\leq  N(2V-\alpha\tilde\chi_R\phi_{2\ve})\leq 1+\frac{2}{\pi }\int _{-\infty}^\infty (2V(x)-\tilde\chi_R\phi_{2\ve})_+^{1/2}\,dx,
\]
We evaluate this integral in two parts: for  \(|x|\leq R\) (which contributes a bounded constant), and for \(|x| > R\).
To find the leading-order asymptotics of the tail integral as \(\varepsilon \to 0\), we expand the parameter \(\delta \) and the function \(\phi_{2\varepsilon}(x)\) with \(d=1\). 
The one-dimensional function \(\phi_{2\varepsilon}(x)\) becomes:\[\notag\phi_\ve(x)=\min\Bigl\{\frac{ 2\ve \pi (1-\delta_\ve)}{ 4\log_A|x|+10}, \,\, \Bigl( \frac{2\pi } {4   \log_A |x|+10}   \Bigr)^2    \Bigr\},\qquad |x|>A. \]
For large \(|x|\), the first term is strictly larger than the second, which is larger than \(V\). 
Thus, we can replace  \(\phi_{2\varepsilon}(x)\) by \(\frac{ \ve \pi (1-\delta_{2\ve})}{ \log_A|x|+5/2}\).
Substituting this into the positive part of the function  yields:
\[\notag \int _{|x|>R}\left(V(x)-\frac{1}{2}\phi _{2\varepsilon }(x)\right)_{+}^{1/2}dx=2 \int _{R}^{\infty }\left( \frac{C}{\left(4 \log_A \vert{}x\vert{} + 10 \right)^2}- \frac{ \ve \pi (1-\delta_{2\ve})}{ 2\log_A|x|+5}\right)_{+}^{1/2}dx\]
The integrand is non-zero when:\[\notag \frac{C}{\left(4 \log_A \vert{}x\vert{} + 10 \right)^2} >  \frac{ 2\ve \pi (1-\delta_{2\ve})}{ 4\log_A|x|+10} \quad \iff \quad \log _{A}|x| < 
\frac{C}{8\varepsilon \pi (1-\delta_{2\ve})}-\frac52.\]
Letting \(T_{\varepsilon} = \frac{C}{8\varepsilon \pi (1-\delta_{2\ve})}-\frac52\) and substituting \(t = \log_A|x|\), we conclude that the integral equals 
\[\notag I(\varepsilon) =  2\int _{\log_A R}^{\infty }\left( \frac{C}{\left(4 t + 10 \right)^2}- \frac{ \ve \pi (1-\delta_{2\ve})}{ 2t+5}\right)_{+}^{1/2}A^t\ln A \,  dt.\]
Factoring out \( \frac{C}{\left(4 t + 10 \right)^2}\) from the parenthesis gives the following result: \[\notag I(\varepsilon )=2\sqrt{C}\ln A \sqrt{\frac{T_{\varepsilon }}{T_{\varepsilon }+\frac{5}{2}}}\int _{\log _{A}R}^{\infty }\frac{1}{4t+10}\left(1-\frac{t}{T_{\varepsilon }}\right)_{+}^{1/2}A^{t}\,dt.\]
The asymptotics of  the integral is therefore
\[\notag I(\varepsilon) \sim \frac{ \sqrt{C}\ln A }{2}\int _{\log _{A}R}^{T_{\varepsilon }}\frac{1}{t}\left(1-\frac{t}{T_{\varepsilon}}\right)^{1/2}A^{t}\,dt.\]
Using Laplace's method for asymptotic evaluation, the dominant contribution arises near the point  \(t = T_{\varepsilon}\). 
Replacing the lower limit \(\log _{A}R\) with \(T_{\varepsilon}/2\) and changing variables, we get:
\[\notag I(\varepsilon) \sim  \frac{ \sqrt{C}\ln A }{2}\int _{0}^{\infty} s^{1/2}A^{T_{\varepsilon}(1-s)}\,ds\]
Consequently, evaluating this asymptotic limit as \(\varepsilon \to 0\) gives:
\[\notag I(\varepsilon )\sim \frac{\sqrt{\pi C}\cdot A^{T_{\varepsilon }}}{4 T_{\varepsilon }^{3/2}\sqrt{\ln A}}.\]
This proves  Theorem~\ref{T4}.

\section{Existence of white  cubes of length $n$}

\begin{theorem}
Let the \(d\)-dimensional space be partitioned into layers, where the \(n\)-th layer is defined as
\[\notag
\mathcal{L}_n=\{x\in {\Bbb R}^d:\quad  A^{(n-1)^d}< |x|\leq A^{n^d}\}.\]
Let each unit cell   in this space independently be white with probability \(q\) and black with probability \( p=(1-q)\).
Assume that  \[\notag
{A}^{{d}}{>q}^{{-1}}\quad \text{or\ equivalently}\quad {A}^{{d}}{q>1}.
\]
Then almost surely (with probability 1), there exists an index \(n_0(\omega)\) such that every \(n\)-th layer for \(n \ge n_0\) contains at least one completely white cube of side length \(n\).
\end{theorem}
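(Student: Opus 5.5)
The plan is a second Borel--Cantelli argument, the counterpart of Theorem~\ref{T0.1}. For each $n$, select inside the layer $\mathcal{L}_n$ a family $\mathcal{F}_n$ of pairwise disjoint cubes $Q_{\tilde n,n}$ of side $n$ that lie entirely in $\mathcal{L}_n$. Because these cubes are disjoint and made of integer unit cells, the events ``$Q$ is completely white'', $Q\in\mathcal{F}_n$, are independent, and each has probability $q^{n^d}$. Hence
\[\notag
P\bigl(\text{no cube of }\mathcal{F}_n\text{ is completely white}\bigr)=\bigl(1-q^{n^d}\bigr)^{|\mathcal{F}_n|}\le \exp\bigl(-|\mathcal{F}_n|\,q^{n^d}\bigr).
\]
If the right-hand side is summable in $n$, the first Borel--Cantelli lemma shows that almost surely only finitely many layers fail to contain a completely white cube of side $n$. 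This gives the required $n_0(\omega)$.

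The first step is to bound $|\mathcal{F}_n|$ from below. Every cube $Q_{\tilde n,n}$ with $Q_{\tilde n,n}\subset\mathcal{L}_n$ has diameter $n\sqrt d$. So every grid cube meeting the shrunken shell $\{A^{(n-1)^d}+n\sqrt d<|x|\le A^{n^d}-n\sqrt d\}$ lies entirely in $\mathcal{L}_n$. Counting these cubes by volume, as in the upper bound for $N_n$ but with the signs of the $n\sqrt d$ corrections reversed, gives
\[\notag
|\mathcal{F}_n|\ \ge\ \frac{\varkappa_d}{n^d}\Bigl(\bigl(A^{n^d}-n\sqrt d\bigr)^d-\bigl(A^{(n-1)^d}+n\sqrt d\bigr)^d\Bigr).
\]
For large $n$ this is $\ge c\,A^{dn^d}/n^d$ for some $c=c(d,A)>0$. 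The reason is that $A^{(n-1)^d}=o(A^{n^d})$ and $n\sqrt d=o(A^{n^d})$.

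The second step is to check summability:
\[\notag
|\mathcal{F}_n|\,q^{n^d}\ \ge\ \frac{c}{n^d}\bigl(A^dq\bigr)^{n^d}.
\]
Since $A^dq>1$, the quantity $(A^dq)^{n^d}$ grows super-exponentially in $n$ and swamps the polynomial factor $n^{-d}$. Thus $\exp(-|\mathcal{F}_n|q^{n^d})\le \exp(-c' e^{\gamma n^d})$ with $\gamma=\tfrac12\ln(A^dq)>0$ for large $n$, and this is clearly summable.

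The only delicate point, and the step I would write out most carefully, is the geometric lower count in the first step. One must make sure the cubes counted are genuinely contained in $\mathcal{L}_n$, not merely intersecting it, and that they are disjoint. Disjointness is automatic for the half-open grid $Q_{\tilde n,n}=n(\tilde n+[0,1)^d)$. Containment follows from the diameter bound once the shell radii are shrunk by $n\sqrt d$. The volume comparison should use the union of the selected cubes covering the shrunken shell. Everything else is the independence of the disjoint cubes plus Borel--Cantelli, exactly parallel to Theorem~\ref{T0.1}.
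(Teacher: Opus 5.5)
Your proposal is correct and is essentially the paper's proof. Both take disjoint side-$n$ cubes lying inside $\mathcal{L}_n$, use independence to get $(1-q^{n^d})^{|\mathcal{F}_n|}\le\exp(-|\mathcal{F}_n|q^{n^d})$, and conclude with the first Borel--Cantelli lemma using $A^dq>1$; your shrunken-shell count makes rigorous the paper's heuristic $M_n\approx\varkappa_dA^{dn^d}/n^d$. The only slip is the claim $A^{(n-1)^d}=o(A^{n^d})$, which fails for $d=1$, but $(n-1)^d\le n^d-1$ still gives $|\mathcal{F}_n|\ge c\,A^{dn^d}/n^d$ for large $n$ with, e.g., $c=\varkappa_d(1-A^{-d})/2$, consistent with the paper's separate $d=1$ count $2(1-1/A)A^n/n$.
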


{\it Proof.}
Indeed, the \(n\)-th layer occupies the region between the boundaries \(A^{(n-1)^{d}}\) and \(A^{n^{d}}\). 
The total volume of this layer (approximately) grows proportionally to \(A^{d\cdot n^{d}}\).
Since each individual cube \(Q_{\tilde {n},n}\) has a volume of \(n^{d}\), we can place a large number \(M_{n}\) of disjoint (independent) cubes entirely within the \(n\)-th layer.  As \(n \to \infty\), this number scales as:
\[\notag
M_{n}\approx \frac{\varkappa_d A^{d\cdot n^{d}}}{n^{d}}\qquad \text{if}\quad d\geq 2,
\]
and
\[\notag
M_{n}\approx \frac{2(1-1/A) A^{n}}{n}\qquad \text{if}\quad d=1.
\]

For a single cube of side length \(n\) to be completely white, all of its \(n^{d}\) unit cells must be white. 
Since each cell is white independently with probability \(q\), the probability that a specific cube is entirely white is:\[\notag P(\text{Cube\ is\ all\ white})=q^{n^{d}}.\]
Conversely, the probability that a specific cube is not entirely white is \(1 - q^{n^d}\). Because the \(M_{n}\) cubes are disjoint, their cell configurations are mutually independent. The probability \(P(E_n)\) that none of the \(M_{n}\) cubes in the \(n\)-th layer are  entirely white is:
\[\notag
P(E_{n})=\left(1-q^{n^{d}}\right)^{M_{n}}.\]
Using the standard inequality \(1 - x \leq e^{-x}\), we bound this probability by:\[\notag
P(E_{n})\le \exp \left(-M_{n}\cdot q^{n^{d}}\right)\approx\exp \left(-\frac{\varkappa_d A^{d\cdot n^{d}}q^{n^{d}}}{n^{d}}\right)=\exp \left(-\frac{\varkappa_d(A^{d}q)^{n^{d}}}{n^{d}}\right)\]

We want the event \(E_{n}\) (the \(n\)-th layer  contains no all-white cubes) to occur only finitely many times. According to the first Borel-Cantelli Lemma, this happens with probability 1 if the sum of the probabilities converges:\(\sum _{n=1}^{\infty }P(E_{n})<\infty. \)
If \(A^d q > 1\): The term \((A^d q)^{n^d}\) grows double-exponentially fast. As a result, \(P(E_n)\) decays to \(0\) extremely quickly (faster than any exponential), guaranteeing that the series \(\sum_n P(E_n)\) converges.

$\Box$

\bigskip

Combining this theorem with the formula for the first Dirichlet eigenvalue of the negative Laplacian on a \(d\)-dimensional cube of side length \(n\), \[\notag \lambda_1 = \frac{d\pi^2}{n^2},\] we obtain the following:
  
\begin{corollary} Let \(V \geq 0\) be a bounded function on \({\Bbb R}^{d}\). 
Suppose there exist constants \(C > d\pi^2\), \(A>1/q^{1/d}\), and \(R>A\) such that
\[\notag
V(x)\geq   \frac{C} {   ( \log_A (|x|))^{2/d}  }    
\] for all $|x|>R$.    Then for any $\ve>0$,  the negative spectrum of  the operator
\[\notag
-\Delta+\ve\sum_{n\in {\Bbb Z}^d}\omega_n \chi(x-n)-V(x)
\]
is almost  surely  infinite.
\end{corollary}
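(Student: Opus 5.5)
The proof combines the preceding theorem on the existence of all-white cubes with Dirichlet bracketing. Fix $A>q^{-1/d}$, so that $A^dq>1$. By the preceding theorem, almost surely there is $n_0(\omega)$ such that for every $n\ge n_0$ the layer $\mathcal{L}_n$ contains a cube $Q^{(n)}=Q_{\tilde n,n}$ of side $n$ all of whose unit cells are white. On a white cell $\omega_k=0$, so the random potential $\ve\sum_k\omega_k\chi(x-k)$ vanishes identically on $Q^{(n)}$, whatever the value of $\ve>0$. Since the layers are disjoint, the cubes $Q^{(n)}$, $n\ge n_0$, are pairwise disjoint.

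Next we build trial functions. Let $u_n$ be the first Dirichlet eigenfunction of $-\Delta$ on $Q^{(n)}$, extended by zero; it lies in $H^1(\mathbb{R}^d)$ and satisfies $\int|\nabla u_n|^2=\frac{d\pi^2}{n^2}\int|u_n|^2$. Because the random potential is zero on the support of $u_n$, the quadratic form of the operator gives
\[\notag
h[u_n]=\int_{Q^{(n)}}|\nabla u_n|^2\,dx-\int_{Q^{(n)}}V|u_n|^2\,dx\le\Bigl(\frac{d\pi^2}{n^2}-\inf_{Q^{(n)}}V\Bigr)\|u_n\|^2 .
\]
For $x\in Q^{(n)}\subset\mathcal{L}_n$ we have $|x|\le A^{n^d}$ (up to the $n\sqrt d$ correction from cubes that stick out of the layer), so $\log_A|x|\le n^d(1+o(1))$ and hence
\[\notag
V(x)\ge \frac{C}{(\log_A|x|)^{2/d}}\ge \frac{C}{n^2}(1+o(1)).
\]
Since $C>d\pi^2$, it follows that $h[u_n]<0$ for all large $n$; here we also need $|x|>R$ on $Q^{(n)}$, which holds for large $n$. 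The disjointness of supports then shows that $h$ is negative on the span of $u_{n_0},\dots,u_{n_0+k}$ for every $k$, and by the variational principle (Glazman lemma) the operator has infinitely many eigenvalues below $0$, i.e. below the essential spectrum, which begins at $\inf\sigma(-\Delta+\ve\sum\omega_n\chi_n)\ge0$. This conclusion holds almost surely and for every $\ve>0$.

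The main technical point is to make sure the cube lies in the region where the lower bound on $V$ is effective, and to control $\log_A|x|$ on it. The crude estimate $\log_A(A^{n^d}+n\sqrt d)=n^d+o(1)$ suffices, because $C>d\pi^2$ is a strict inequality and leaves the slack needed to absorb the $(1+o(1))$ factor. If the proof of the preceding theorem places the cubes only approximately inside $\mathcal{L}_n$, I would choose the disjoint cubes strictly inside $\{A^{(n-1)^d}+n\sqrt d<|x|<A^{n^d}-n\sqrt d\}$, whose volume is still of order $A^{dn^d}$, so the Borel--Cantelli count there is unaffected.
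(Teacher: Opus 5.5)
Your proposal is correct and is essentially the paper's argument. The paper combines the all-white-cube theorem (which places the cubes entirely inside $\mathcal{L}_n$, so in fact $V\ge C/n^2$ holds exactly there) with the Dirichlet eigenvalue $d\pi^2/n^2$ of a cube of side $n$, and concludes by Dirichlet--Neumann bracketing; your disjointly supported Dirichlet ground states plus Glazman's lemma carry out exactly that. One small aside: identifying $0$ with the bottom of the essential spectrum requires $V\to 0$ at infinity, which is not assumed. This is harmless, because Glazman's lemma already shows that the spectral projection onto $(-\infty,0)$ has infinite rank.
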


{\it Proof.}  The result follows immediately from the Dirichlet-Neumann bracketing. $\,\,\Box$

\bigskip

This  corollary is equivalent to Theorem~\ref{sake}.

\section{Proof of Theorem~\ref{T7}}

Here we study the one-dimensional operator. Let  \(\tilde{\chi}_L(x)\) be  the characteristic function of the set \({\Bbb R}\setminus \left[-\frac{L}{2}, \frac{L}{2}\right]\), which equals \(1\)   outside the interval and \(0\) inside.
To find the lowest  eigenvalue  \(\lambda \) of the differential operator \(\hat{L} =- \frac{d^2}{dx^2} + \epsilon \tilde{\chi}_L(x)\) acting on the Hilbert space \(L^2(\mathbb{R})\), we solve the eigenvalue equation \(\hat{L}\psi(x) = \lambda \psi(x)\).
The operator \(\hat{L} = -\frac{d^2}{dx^2} + \epsilon \tilde{\chi}_L(x)\) represents a particle in a symmetric finite potential well of width \(L\) and height \(\epsilon \).
For a bound state in \(L^2(\mathbb{R})\), the eigenvalue must lie in the range \(0 < \lambda < \epsilon\).
Because the potential is symmetric under reflection \(x \to -x\), the ground state (lowest eigenvalue) wavefunction must be symmetric (even parity). We can write the general solution as:\[\notag 
\psi (x)=\begin{cases}A\cos (\sqrt\lambda x)&\text{for\ }|x|\le \frac{L}{2}\\ Be^{-\sqrt{\epsilon-\lambda} |x|}&\text{for\ }|x|>\frac{L}{2}.\end{cases}\]

To ensure \(\psi(x)\) belongs to the domain of the differential operator, both \(\psi(x)\) and its first derivative \(\psi'(x)\) must be continuous at the boundaries \(x = \pm \frac{L}{2}\). Matching them at \(x = \frac{L}{2}\) gives

$$A\cos \left(\frac{\sqrt\lambda L}{2}\right)=Be^{-\sqrt{\epsilon-\lambda} L/2},$$
and 
$$-\sqrt\lambda A\sin \left(\frac{\sqrt\lambda L}{2}\right)=-\sqrt{\epsilon-\lambda} Be^{-\sqrt{\epsilon-\lambda} L/2}.$$
Thus, the  lowest eigenvalue \(\lambda \) is  the smallest positive solution to the following equation:
\[\notag
\sqrt{\lambda }\tan \left(\frac{\sqrt{\lambda }L}{2}\right)=\sqrt{\epsilon -\lambda }.\]
This  equaion is  equivalent to
\[\notag
\cos \left(\frac{ \sqrt{\lambda}L}2 \right)=\sqrt{\frac{\lambda}{\epsilon}},
\qquad \text{or}\quad
\frac{\sqrt{\lambda}}{\cos \left(\frac{ \sqrt{\lambda}L}2 \right)}=\sqrt{\epsilon}.
\]
Since the  function on the left hand  side of the latter  equation is strictly increasing, we  obtain that $\lambda<C/L^2$ if  and  only if
\[\notag
\frac{\sqrt{C}}{L\cos \left(\frac{ \sqrt{C}}{2} \right)}>\sqrt{\epsilon}.
\]

Let \(d=1\) and define the potential \(V(x)\) as:
\[\notag
V(x)=\frac{C}{(\log_A |x|)^2}, \qquad \text{where }\quad  A>1/q,\quad \text{and}\quad 0<C<\pi^2.
\]
Let  also $\theta_\ve$  be the characteristic  function of the set
\[\notag 
\left\{
x\in{\Bbb R}:\qquad \sqrt{\frac{C}{\ve}}<\log_A|x|<\sqrt{ \frac{C}{\ve} }
\frac1{ \cos\left( \frac{\sqrt{C}}2 \right) }
\right\}.
\]
Define now $N_1(\ve)$ as the number of   negative  eigenvalues of  the operator
\[\label{opH}
-\frac{d^2}{dx^2} +\ve \sum_{n\in {\Bbb Z}}\omega_n\chi(x-n)- \theta_\ve V(x).
\]
Using the bounds derived in the unperturbed problem, we obtain the following asymptotic lower bound for the number of negative eigenvalues:
\[\label{liminf}
\liminf_{\ve\to0}    \sqrt{\ve } N_1(\ve)\geq \frac{\sqrt{C}}2\left(\frac1
{ \cos \left( \sqrt{C} / 2 \right) }- 1\right).
\]

{\it Proof of \eqref{liminf}}  For almost every \(\omega \), there exists an \(n_0(\omega) \in \mathbb{N}\) such that for all \(n > n_0(\omega)\), the spherical layer \(A^{n-1} < \vert{}x\vert{} < A^n\) contains an interval of length \(n\) where the random potential \(\sum_j \omega_j \chi(x-j)\) vanishes.
 Let us  now introduce the  layers
\[\notag 
{\frak S}_n=\left \{ x\in {\Bbb R}:\quad   \frac{A^{n-1}+A^{n-2}}2 <|x|<  \frac{A^{n}+A^{n+1}}2\right \}.
\]
For sufficiently large \(n\), each layer \({\frak S}_{n}\) contains an  interval of length \(n\) where the random potential \(\sum_j \omega_j \chi(x-j)\) vanishes identically.  Moreover, the minimal distance from this zero-potential interval to the boundary of \({\frak S}_{n}\) is bounded below by \(\frac{A^{n-1}-A^{n-2}}{2}\).

By the domain monotonicity of eigenvalues, we can decouple the operator \eqref{opH} into independent operators on each layer \({\frak S}_{2n}\) with Dirichlet boundary conditions.

If ${\frak S}_n\subset {\rm supp}(\theta_\ve)$,  and $\ve>0$  is sufficiently small,  then the  operator
\[\notag
-\frac{d^2}{dx^2} +\ve \sum_{j\in {\Bbb Z}}\omega_j\chi(x-j)- \theta_\ve V(x).
\]  on  \({\frak S}_n\) possesses at least one negative eigenvalue.
Consequently,  for small $\ve$,
\[\notag
N_1(\ve)\geq \#\{n \in {\Bbb N}: \quad n/2\in {\Bbb N},\quad {\frak S}_n\subset {\rm supp}(\theta_\ve) \}.
\]
This implies the statement.
$\,\,\Box$

\bigskip

\begin{corollary}
Let $d=1$, and let $V(x)$ be a real, bounded potential satisfying
\[ \notag
V(x) = \frac{C}{(\log_A |x|)^2} \quad \text{for } |x| > A, 
\]
where $A > 1/q$ and $0 < C < \pi^2$. Let $N_0(\ve)$ denote the number of negative eigenvalues of the operator
\begin{equation}\notag
-\frac{d^2}{dx^2} + \ve - V(x). 
\end{equation}
Then
\begin{equation}\notag
\liminf_{\ve\to0^+} \sqrt{\ve} \bigl(N(\ve)-N_0(\ve)\bigr) \geq \frac{\sqrt{C}}{2} \left( \frac{1}{\cos(\sqrt{C}/2)} - 1 \right). 
\end{equation}
\end{corollary}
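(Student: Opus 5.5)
The corollary should follow from \eqref{liminf} once the eigenvalues counted by $N_0(\ve)$ and those counted by $N_1(\ve)$ are placed in disjoint regions of space. The idea is to decouple $H_\ve$ by Dirichlet bracketing into a ``deterministic'' inner part and a ``random'' outer part, each of which bounds one of the two quantities from below.

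\textbf{Step 1: choose the cut point.} For small $\ve$ I will fix a point $r_\ve=A^{a\sqrt{C/\ve}}$ with $a<1$ close to $1$. Insert a Dirichlet condition at $\pm r_\ve$. Domain monotonicity then gives
\[\notag
N(\ve)\geq N^{\rm in}(\ve)+N^{\rm out}(\ve),
\]
where $N^{\rm in}$ and $N^{\rm out}$ are the numbers of negative eigenvalues of $H_\ve$ with Dirichlet conditions on $(-r_\ve,r_\ve)$ and on $\{|x|>r_\ve\}$, respectively.

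\textbf{Step 2: the outer part.} On $\{|x|>r_\ve\}$ the support of $\theta_\ve$ lies inside the region. Also $V\geq\theta_\ve V$. So the proof of \eqref{liminf}, which counts the Dirichlet layers ${\frak S}_n\subset{\rm supp}(\theta_\ve)$, gives
\[\notag
\liminf_{\ve\to0}\sqrt\ve\, N^{\rm out}(\ve)\geq \frac{\sqrt C}{2}\Bigl(\frac{1}{\cos(\sqrt C/2)}-1\Bigr).
\]
Here the count of admissible $n$ is $\sqrt{C/\ve}\,(1/\cos(\sqrt C/2)-1)$ up to the factor coming from using every other layer. That factor is absorbed exactly as in \eqref{liminf}.

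\textbf{Step 3: the inner part.} Since $\ve\sum\omega_n\chi_n\leq\ve$, we have $N^{\rm in}(\ve)\geq$ the number of eigenvalues of the Dirichlet problem for $-d^2/dx^2-V$ on $(-r_\ve,r_\ve)$ lying below $-\ve$. It remains to show that this is at least $N_0(\ve)-o(\ve^{-1/2})$. The reason is that, by the quasiclassical/Sturm oscillation count, eigenfunctions of $-d^2/dx^2-V$ with energy below $-\ve$ are localized where $V>\ve$, i.e.\ $\log_A|x|<\sqrt{C/\ve}$. The truncation at $a\sqrt{C/\ve}$ therefore loses at most the number of oscillations of solutions at energy $-\ve$ in the region $a\sqrt{C/\ve}<\log_A|x|<\sqrt{C/\ve}$. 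Concretely, I will compare the count for $(-r_\ve,r_\ve)$ with Dirichlet conditions to the count on $\Bbb R$ with a Neumann decoupling at $\pm r_\ve$. By Neumann bracketing, the deficit is bounded by $2$ plus the number of eigenvalues below $-\ve$ of the outer Neumann problem. The latter is bounded by a Calogero-type bound, $\frac{2}{\pi}\int_{|x|>r_\ve}(V-\ve)_+^{1/2}dx$. Its size is of order $\int_{a\sqrt{C/\ve}}^{\sqrt{C/\ve}}A^t(\cdots)dt$, which unfortunately is exponentially large, not $o(\ve^{-1/2})$.

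\textbf{Main obstacle.} Step 3 is where the plan is most likely to break. A phase-space count in the variable $t=\log_A|x|$ is needed instead. I will try Sturm oscillation: the number of zeros of the solution at energy $-\ve$ on $\{|x|>r_\ve\}$ is governed by $\int(V-\ve)_+^{1/2}dx$. For $V\sim C/t^2$ with $C<\pi^2$, a Hardy-type (Liouville transformed) estimate in the $t$ variable shows the operator is subcritical there. I expect this yields only $O(1)$ eigenvalues lost from the truncation, since $C<\pi^2$ is precisely the finiteness threshold for $-d^2/dx^2-C/(\log_A|x|)^2$ combined with the shift by $\ve$. If this works, $N^{\rm in}\geq N_0(\ve)-O(1)$. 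Combining with Step 2 and letting $a\to1$ then gives the corollary.
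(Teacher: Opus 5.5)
There is a genuine gap, and it comes from your choice $a<1$ in Step~1. The claim in Step~3 that $N^{\rm in}(\varepsilon)\geq N_0(\varepsilon)-O(1)$, or even $N_0(\varepsilon)-o(\varepsilon^{-1/2})$, is false for every fixed $a<1$.

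To see this, compare with $-d^2/dx^2-\|V\|_\infty$ on $(-r_\varepsilon,r_\varepsilon)$. This gives
$N^{\rm in}(\varepsilon)\leq \frac{2}{\pi}\|V\|_\infty^{1/2}A^{a\sqrt{C/\varepsilon}}+1.$
On the other hand, for any $a'\in(a,1)$ one has $V-\varepsilon\geq \varepsilon(a'^{-2}-1)>0$ on $A<|x|<A^{a'\sqrt{C/\varepsilon}}$. Dirichlet bracketing therefore gives
$N_0(\varepsilon)\gtrsim \sqrt{\varepsilon}\,A^{a'\sqrt{C/\varepsilon}}.$
So the inner region misses exponentially many of the eigenvalues counted by $N_0(\varepsilon)$. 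Your exponentially large Calogero bound is not an artifact of a crude estimate.

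The proposed repair rests on a false premise. The operator $-d^2/dx^2-C(\log_A|x|)^{-2}$ has infinitely many negative eigenvalues for every $C>0$, since $x^2V(x)\to\infty$. So $C<\pi^2$ is not a finiteness threshold for it, and there is no Hardy-type subcriticality to exploit; in the paper, $C<\pi^2$ serves only to make $\cos(\sqrt C/2)>0$. The lost eigenvalues do reappear in $N^{\rm out}(\varepsilon)$, but Step~2 counts only the $\theta_\varepsilon$-eigenvalues there. Letting $a\to1$ afterwards does not help either, because the loss is exponential for each fixed $a<1$.

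The fix is to cut exactly at the turning point: take $a=1$, i.e. $r_\varepsilon=A^{\sqrt{C/\varepsilon}}$, where $V(\pm r_\varepsilon)=\varepsilon$.
\begin{itemize}
\item On $\{|x|>r_\varepsilon\}$ we have $V\leq\varepsilon$, so the outer Neumann operator $-d^2/dx^2+\varepsilon-V$ is nonnegative. Your own Neumann-bracketing argument then gives $N^{\rm in}(\varepsilon)\geq N_0(\varepsilon)-2$ with no further work.
\item $\theta_\varepsilon$ vanishes on $\{|x|\leq r_\varepsilon\}$, since it is the indicator of $\{\log_A|x|>\sqrt{C/\varepsilon}\}\cap\{\log_A|x|<\sqrt{C/\varepsilon}/\cos(\sqrt C/2)\}$. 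So Step~2 is unchanged.
\end{itemize}
This yields $N(\varepsilon)\geq N_0(\varepsilon)-2+N^{\rm out}(\varepsilon)$. That is the paper's splitting-principle inequality $N(\varepsilon)\geq N_0(\varepsilon)+N_1(\varepsilon)-3$ up to the constant. The lower edge $\sqrt{C/\varepsilon}$ of the set where $\theta_\varepsilon=1$ was chosen precisely so that the two counts sit on opposite sides of this turning point.
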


{\it Proof.} By the  splitting principle,  for any $\ve>0$,
\[\notag
N(\ve)\geq N_0(\ve )+ N_1(\ve)-3.\qquad \Box
\]
\bigskip

This corollary   is equivalent to Theorem~\ref{T7}.

\section{Proof of Theorem~\ref{lowerbound2}}

 Consider \(N\) disjoint unit cubes of the form
\[\notag
[0,1)^d+j,\qquad j\in\mathbb{Z}^d,
\]
in \(\mathbb{R}^d\). Each cube is colored independently: it is white with probability \(q\) and black with probability
\[\notag
p=1-q.
\]

Let \(S_N\) denote the total number of black cubes among these \(N\) cells. We first establish an exponential bound for the probability that the number of black cubes significantly exceeds its expected value.

\begin{proposition}

For any \(\delta>0\),
\[\notag
\mathbb{P}\big(S_N>(p+\delta)N\big)
\leq e^{-2\delta^2N}.
\]
\end{proposition}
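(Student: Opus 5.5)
This is Hoeffding's inequality for i.i.d. Bernoulli variables, and I will prove it by the exponential Chernoff method. Write $S_N=\sum_{j=1}^N \xi_j$, where the $\xi_j$ are the indicators that the $j$-th cell is black. These are independent, take values in $\{0,1\}$, and have mean $p$.

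First, fix $s>0$ and apply Markov's inequality to $e^{sS_N}$. Using independence, this gives
\[\notag
\mathbb{P}\big(S_N>(p+\delta)N\big)\le e^{-s(p+\delta)N}\,\mathbb{E}\,e^{sS_N}
=\Bigl(e^{-s(p+\delta)}\,\mathbb{E}\,e^{s\xi_1}\Bigr)^N
=\Bigl(e^{-s\delta}\,\mathbb{E}\,e^{s(\xi_1-p)}\Bigr)^N .
\]

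Second, I need the one-variable bound (Hoeffding's lemma)
\[\notag
\mathbb{E}\,e^{s(\xi_1-p)}\le e^{s^2/8}.
\]
This holds for any mean-zero variable with values in an interval of length $1$. For the Bernoulli case I will check it directly. Let $\psi(s)=\ln\bigl(qe^{-sp}+pe^{sq}\bigr)$. Then $\psi(0)=0$ and $\psi'(0)=0$. Moreover, $\psi''(s)=r(1-r)$, where $r=pe^{sq}/(qe^{-sp}+pe^{sq})\in(0,1)$. Since $r(1-r)\le 1/4$, Taylor's formula gives $\psi(s)\le s^2/8$.

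Third, the first two steps combine to
\[\notag
\mathbb{P}\big(S_N>(p+\delta)N\big)\le \exp\bigl(N(-s\delta+s^2/8)\bigr)
\]
for every $s>0$. Choosing $s=4\delta$, which minimizes the exponent, gives exactly $e^{-2\delta^2N}$. If $\delta\ge q$ the event is empty, so the bound holds trivially in that case as well.

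The only step that needs any care is the bound on the moment generating function in the second step. The curvature estimate $\psi''\le 1/4$ disposes of it cleanly, so I do not expect a real obstacle. Alternatively, one can simply cite Hoeffding's inequality.
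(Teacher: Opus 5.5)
Your proof is correct and complete, and it reaches the bound by a different key step than the paper. Both arguments apply Markov's inequality to $e^{sS_N}$ and factor the moment generating function by independence; they split after that.

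The paper minimizes the exact exponent $\log(q+pe^t)-t(p+\delta)$. This gives the optimal Chernoff rate, namely the relative entropy $I(p+\delta)$. The paper then appeals to the ``standard lower bound'' $I(p+\delta)\ge 2\delta^2$ (Pinsker's inequality for Bernoulli laws) without proving it. Its critical point $t^*$ also exists only for $\delta<q$.

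You instead first bound the centered log-moment generating function by the quadratic $s^2/8$ (Hoeffding's lemma, via $\psi''=r(1-r)\le 1/4$), and then minimize a quadratic.

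What each route buys:
\begin{itemize}
\item Your route is self-contained and works verbatim for every $\delta>0$. Your separate remark about $\delta\ge q$ is therefore not actually needed.
\item It never exhibits the intermediate bound $e^{-NI(p+\delta)}$, which is considerably sharper than $e^{-2\delta^2N}$ when $p$ is far from $1/2$.
\end{itemize}

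The two arguments are in fact dual. Since $I(p+\delta)=\sup_{s}\bigl(s\delta-\psi(s)\bigr)$, your estimate $\psi(s)\le s^2/8$ gives $I(p+\delta)\ge\sup_{s}\bigl(s\delta-s^2/8\bigr)=2\delta^2$ at once. So your lemma supplies exactly the step the paper takes for granted.
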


{\it Proof}.
Let \(X_j\), \(j=1,\ldots,N\), be independent Bernoulli random variables defined by
\[\notag
X_j=
\begin{cases}
1, & \text{if the \(j\)-th cube is black},\\
0, & \text{if the \(j\)-th cube is white}.
\end{cases}
\]
Thus,
\[\notag
\mathbb{P}(X_j=1)=p,\qquad
\mathbb{P}(X_j=0)=q,
\]
and
\[\notag
S_N=\sum_{j=1}^N X_j,
\qquad
\mathbb{E}[S_N]=pN.
\]

We estimate the upper tail
\[\notag
\mathbb{P}\big(S_N\geq(p+\delta)N\big).
\]
For any \(t>0\), Markov's inequality applied to \(e^{tS_N}\) gives
\[\notag
\begin{aligned}
\mathbb{P}\big(S_N\geq(p+\delta)N\big)
&=
\mathbb{P}\left(
e^{tS_N}\geq e^{t(p+\delta)N}
\right)\\
&\leq
\frac{\mathbb{E}[e^{tS_N}]}
{e^{t(p+\delta)N}}.
\end{aligned}
\]

Using the independence of the random variables \(X_j\), we obtain
\[\notag
\begin{aligned}
\mathbb{E}[e^{tS_N}]
&=
\prod_{j=1}^N \mathbb{E}[e^{tX_j}]\\
&=
(q+pe^t)^N.
\end{aligned}
\]
Consequently,
\[\notag
\mathbb{P}\big(S_N\geq(p+\delta)N\big)
\leq
\exp\left\{
N\left[
\log(q+pe^t)-t(p+\delta)
\right]
\right\}.
\]

We now optimize the exponent over \(t>0\). Define
\[\notag
f(t)=\log(q+pe^t)-t(p+\delta).
\]
Differentiating gives
\[
f'(t)
=
\frac{pe^t}{q+pe^t}-(p+\delta).
\]
The critical point is therefore
\[\notag
t^*
=
\log\left(
\frac{q(p+\delta)}
{p(q-\delta)}
\right)>0.
\]

Substituting \(t=t^*\), we obtain
\[\notag
\mathbb{P}\big(S_N\geq(p+\delta)N\big)
\leq
e^{-NI(p+\delta)},
\]
where
\[
I(p+\delta)
=
(p+\delta)
\log\left(1+\frac{\delta}{p}\right)
+
(q-\delta)
\log\left(1-\frac{\delta}{q}\right).
\]
The standard lower bound
\[\notag
I(p+\delta)\geq 2\delta^2
\]
then yields
\[
\mathbb{P}\big(S_N\geq(p+\delta)N\big)
\leq e^{-2\delta^2N}.
\]
$\,\,\Box$

Recall that the \(n\)-th layer is defined by
\[\notag
\mathcal{L}_n=
\left\{x\in {\Bbb R}^d:\quad A^{(n-1)^d}<|x|<A^{n^d}\right\}.
\]
Consider  the collection of  cubes
\[\notag
 Q_{\beta n,\widetilde n}=\beta n(\tilde n +[0,1)^d),\qquad \beta,n \in {\Bbb N},\quad \tilde n \in {\Bbb Z}^d,
\]
 covering  the \(n\)-th layer. We call a cube \(Q_{\beta n,\widetilde n}\) \emph{suitable} if it contains fewer than
\[\notag
(p+\delta)n^d
\]
black cells.  Denote by $N_n$  the number of  cubes $Q_{\beta n,\widetilde n}$  covering the $n$-th layer.

By the same  arguments as above, the probability that the \(n\)-th layer contains fewer than
\[\notag
\left(1-e^{-2\delta^2n^d}-\delta\right)N_n
\]
suitable cubes is at most
\[\notag
e^{-2\delta^2N_n}.
\]

Since the resulting probabilities are summable in \(n\), the Borel--Cantelli lemma implies that, for all sufficiently large \(n\), the \(n\)-th layer contains at least
\[\notag
(1-2\delta)N_n
\]
suitable cubes.

Therefore, the number \(N_n\) of cubes in the \(n\)-th layer satisfies
\[\notag
N_n
\geq
\frac{\varkappa_d}{(\beta n)^d}
\left[
(A^{n^d}-\beta n \sqrt{d})^d
-
(A^{(n-1)^d}+\beta n\sqrt{d})^d
\right].
\]

Thus, we obtain the following result
\begin{corollary}
There is a  sufficiently large index $n_0$
such that $\mathcal{L}_n$  intersects not fewer  than 
\[\notag
(1-2\delta)\frac{\varkappa_d}{(\beta n)^d}
\left[
(A^{n^d}-\beta n \sqrt{d})^d
-
(A^{(n-1)^d}+\beta n\sqrt{d})^d\right]
\]
suitable cubes  $Q_{\beta n,\widetilde n}$ for all $n>n_0$.
\end{corollary}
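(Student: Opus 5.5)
The statement is the final Corollary: for all large $n$, the layer $\mathcal{L}_n$ meets at least $(1-2\delta)N_n$ suitable cubes $Q_{\beta n,\tilde n}$. The plan is to combine two ingredients. The first is the Proposition just proved, the Hoeffding/Chernoff bound $\mathbb{P}(S_N>(p+\delta)N)\le e^{-2\delta^2 N}$, applied to each cube separately. The second is a counting argument over the family of cubes covering $\mathcal{L}_n$, closed off with Borel--Cantelli.

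\textbf{Step 1 (single cube).} Fix $n$ and a cube $Q_{\beta n,\tilde n}$. It consists of $(\beta n)^d\ge n^d$ unit cells. I will read ``suitable'' with the threshold normalized to the cube volume, i.e.\ fewer than $(p+\delta)(\beta n)^d$ black cells; if the threshold really is $(p+\delta)n^d$, the same argument works with $\beta=1$. By the Proposition, the probability that a given cube is not suitable is at most $\pi_n:=e^{-2\delta^2 n^d}$, since the exponent $2\delta^2(\beta n)^d$ only improves this.

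\textbf{Step 2 (many cubes in one layer).} The cubes $Q_{\beta n,\tilde n}$ are pairwise disjoint, so their suitability indicators are independent Bernoulli variables. The number $U_n$ of unsuitable cubes among the $N_n$ cubes meeting $\mathcal{L}_n$ is therefore stochastically dominated by a binomial variable with parameters $(N_n,\pi_n)$. Apply the Proposition a second time, now with $N=N_n$ and ``success'' meaning unsuitable:
\[
\mathbb{P}\bigl(U_n>(\pi_n+\delta)N_n\bigr)\le e^{-2\delta^2 N_n}.
\]
For $n$ large we have $\pi_n<\delta$, so on the complementary event at least $(1-\pi_n-\delta)N_n\ge(1-2\delta)N_n$ cubes are suitable.

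\textbf{Step 3 (counting $N_n$ and Borel--Cantelli).} Every cube of side $\beta n$ whose lattice corner lies in the shrunken shell $A^{(n-1)^d}+\beta n\sqrt d<|x|<A^{n^d}-\beta n\sqrt d$ meets $\mathcal{L}_n$. Comparing volumes, as in the upper bound for $N_n$ in the section ``Painting the space in two colors'', gives the stated lower bound
\[
N_n\ge \frac{\varkappa_d}{(\beta n)^d}\Bigl[(A^{n^d}-\beta n\sqrt d)^d-(A^{(n-1)^d}+\beta n\sqrt d)^d\Bigr].
\]
This bound grows like $A^{dn^d}/(\beta n)^d$. Hence $\sum_n e^{-2\delta^2 N_n}<\infty$, and Borel--Cantelli yields an $n_0(\omega)$ beyond which the event in Step 2 holds. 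Substituting the lower bound for $N_n$ gives the Corollary.

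The main obstacle is the lattice-volume comparison in Step 3, which must be made rigorous. The constant in front of the bracket has to be the volume of the unit ball divided by the cube volume, so one must check that every cube whose corner lies in the shrunken shell actually meets $\mathcal{L}_n$; this holds because the cube's diameter is $\beta n\sqrt d$. One must also check that these cubes are counted without overlap, i.e.\ that the shell volume is at most $(\beta n)^d$ times the number of such corners. That second inclusion is off by a boundary term of relative size $O(\beta n/A^{n^d})$, which is negligible. If the volume inequality needs slack, I would absorb it by slightly enlarging the shrinkage $\beta n\sqrt d$, or into the margin between $\pi_n+\delta$ and $2\delta$.
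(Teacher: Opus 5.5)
Your proposal is correct and follows essentially the same route as the paper. Both arguments use three steps: a per-cube Hoeffding bound $e^{-2\delta^2 n^d}$; a second application of the same bound to the count of unsuitable cubes, giving at least $(1-e^{-2\delta^2n^d}-\delta)N_n$ suitable cubes outside an event of probability $e^{-2\delta^2N_n}$; and Borel--Cantelli.

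The step you flag as the main obstacle is in fact immediate, with no lattice-point error term or absorption into the margin needed. The cubes meeting $\mathcal{L}_n$ cover it, so $N_n(\beta n)^d\ge|\mathcal{L}_n|=\varkappa_d\bigl(A^{dn^d}-A^{d(n-1)^d}\bigr)$, and this already exceeds $\varkappa_d$ times the bracket.
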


Let $p\in (0,1)$,  $0<\delta<q=1-p$, \(C > 0\) \(L>0\) be  fixed positive constants, and \(\varepsilon > 0\) be a sufficiently small parameter.Let \[Q =[-L,L)^d\]
Let \(X \subset Q\) be a concentric sub-cube:\[\notag X=\bigl[-(p+\delta)L,(p+\delta)L\bigr)^d\]

\bigskip

We consider the perturbed Dirichlet Laplacian operator \(T_{\varepsilon }\) acting on the Hilbert space \(L^2(Q)\) with zero boundary conditions:\[T_{\varepsilon }=-\Delta +\varepsilon \chi _{X}\]
\(\text{Dom}(T_{\varepsilon })=\mathcal{ H}_{2}(Q)\cap \mathcal{ H}_{1}^{0}(Q)\)
where \(\chi_X: Q \to \{0, 1\}\) is the characteristic indicator function of the set \(X\):\[\chi _{X}(x)=\begin{cases}1&\text{if\ }x\in X\\ 0&\text{if\ }x\notin X\end{cases}\]
Let \(\lambda_1(Q, X, \varepsilon)\) denote the lowest (ground-state) eigenvalue of this operator, defined via the Rayleigh quotient:\[\lambda_1(Q, X, \varepsilon) = \inf_{\substack{u \in \mathcal{H}_1^0(Q) \\ u \neq 0}} \frac{\int_Q \vert{}\nabla u(x)\vert{}^2 dx + \varepsilon \int_X \vert{}u(x)\vert{}^2 dx}{\int_Q \vert{}u(x)\vert{}^2 dx}\]

\bigskip

To obtain an explicit upper bound, choose a trial function that equals $1$ on $X$ and decreases
linearly to $0$ across the remaining portion of $Q$ so that $|\nabla u|=1/((q-\delta) L)$ on the   set $Q\setminus X$.
Let \(x = (x_1, x_2, \dots, x_d) \in Q = [-L, L)^d\). The function \(u(x)\) can be defined as:\[u(x)=\min _{1\le i\le d}\left\{\phi(x_{i})\right\}\]
where \(v(t)\) is the 1D function defined on \([-L, L)\) from the previous step:\[\phi(t)=\begin{cases}1&\text{if\ }|{}t|{}\le (p+\delta )L\\ 1-\frac{|{}t|{}-(p+\delta )L}{(q-\delta )L}&\text{if\ }(p+\delta )L<|{}t|{}\le L\end{cases}\]
It is easy to see  that 
\[\notag
\int_Q |u|^2 dx\geq |X|+d(q-\delta)L |X|^{(d-1)/d}=|X|+\frac{d(q-\delta)}{2(p+\delta)} |X|,
\]
whereas
\[\notag
\int_Q \vert{}\nabla u(x)\vert{}^2 dx + \varepsilon \int_X \vert{}u(x)\vert{}^2 dx=\frac{(|Q|-|X|)}{(q-\delta)^2L^2}+ \varepsilon |X|=
\Bigl(\frac{((p+\delta)^{-d}-1)}{(q-\delta)^2L^2}+ \varepsilon\Bigr) |X|.
\]
Consequently,
the lowest eigenvalue satisfies the inequality:\[\lambda _{1}(Q,X,\varepsilon ) \leq
\Bigl(\frac{((p+\delta)^{-d}-1)}{(q-\delta)^2L^2}+ \varepsilon\Bigr)\Bigl( 1+\frac{d(q-\delta)}{2(p+\delta)}\Bigr)^{-1}.\]

We now apply this estimate to the cubes $Q_{\beta n, \tilde n}$ occurring in the $n$-th layer $\mathcal{L}_n$. 
For sufficiently
large $\beta$, and or sufficiently large $n$, the scaling of these cubes yields the following
bound.

\begin{corollary}\label{Thm2}
There is a  large $\beta_0>0$ and a large  number $n_0$
such that, for every $\beta>\beta_0$  and $n>n_0$,  the lowest Dirichlet  eigenvalue
of the operator $-\Delta+\varepsilon\sum_n \omega_n \chi(x-n)$ on $Q_{\beta n, \tilde n}$ obeys
\[\notag
\lambda_1\leq  \frac{\delta }{n^2} + \varepsilon \Bigl( 1+\frac{d(q-\delta)}{2(p+\delta)}\Bigr)^{-1}
\]
Consequently, the Dirichlet operator $-\Delta+\varepsilon\sum_n \omega_n \chi(x-n)$ on $\mathcal{L}_n$
has at least
\[\notag
(1-2\delta)\frac{\varkappa_d}{(\beta n)^d}
\left[
(A^{n^d}-\beta n\sqrt{d})^d
-
(A^{(n-1)^d}+\beta n\sqrt{d})^d\right]
\]
eigenvalues below the level
\[\notag
 \frac{\delta }{n^2} + \varepsilon \Bigl( 1+\frac{d(q-\delta)}{2(p+\delta)}\Bigr)^{-1}.
 \]
\end{corollary}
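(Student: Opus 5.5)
We need to prove Corollary \ref{Thm2}. The plan is to apply the model trial-function estimate to each suitable cube $Q_{\beta n,\tilde n}$, and then to count, using the preceding corollary on the number of suitable cubes together with Dirichlet decoupling.

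\textbf{Step 1: one cube.} Fix a suitable cube $Q=Q_{\beta n,\tilde n}$ of side $2L=\beta n$; it has fewer than $(p+\delta)(\beta n)^d$ black cells. The random potential on $Q$ is supported on a set $Y$ of black cells with $|Y|<(p+\delta)|Q|$. The model computation placed the potential on the concentric subcube $X$ of relative volume $(p+\delta)^d$, whereas here the black set is scattered. I would therefore work directly with the trial function $u$ built above (equal to $1$ on $X$, decaying linearly to $0$ on $Q\setminus X$) and bound the potential term crudely:
\[
\varepsilon\int_Q \chi_Y u^2\,dx\le \varepsilon\,|Y| .
\]
The kinetic term is scale-invariant up to $L^{-2}$, so it is
\[
\frac{(p+\delta)^{-d}-1}{(q-\delta)^2L^2}\,|X|,\qquad L=\beta n/2 .
\]
Dividing by the lower bound for $\int_Q u^2\,dx$ gives a Rayleigh quotient of the form $c(p,\delta,d)\,\beta^{-2}n^{-2}+\varepsilon\cdot(\text{ratio})$. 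Choosing $\beta_0$ so that $c\,\beta_0^{-2}\le\delta$ makes the first term at most $\delta/n^2$ for all $\beta>\beta_0$.

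\textbf{Main obstacle: the $\varepsilon$-coefficient.} The coefficient of $\varepsilon$ must equal $\bigl(1+\tfrac{d(q-\delta)}{2(p+\delta)}\bigr)^{-1}$, which requires the potential term to be at most $\varepsilon|X|$ with $|X|$ taken relative to the chosen normalization. I would try to recover this by choosing the plateau set adapted to $Y$. By Hardy--Littlewood rearrangement, as in Theorem \ref{T1.1}, one may instead place the plateau on a region whose measure matches the black-cell count, so that $|Y|\le(p+\delta)|Q|$ plays the role of $|X|$. This identification is the delicate point, and I would state it as the working hypothesis that the bound $\varepsilon\int_Y u^2\le\varepsilon|X|$ holds up to the same factor. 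If the exact constant fails, I expect only the numerical constant to change, and that change would be absorbed in the definition of $n_1(\varepsilon)$ in Theorem \ref{lowerbound2}.

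\textbf{Step 2: counting.} The cubes $Q_{\beta n,\tilde n}$ are disjoint. Imposing Dirichlet conditions on their boundaries inside $\mathcal L_n$ only raises eigenvalues, so the Dirichlet operator on $\mathcal L_n$ (or on the union of the cubes meeting it) is bounded above by the direct sum of the cube operators. Each suitable cube therefore contributes an eigenvalue below
\[
\frac{\delta}{n^2}+\varepsilon\Bigl(1+\frac{d(q-\delta)}{2(p+\delta)}\Bigr)^{-1}.
\]
The preceding corollary, valid almost surely for $n>n_0$, gives at least
\[
(1-2\delta)\frac{\varkappa_d}{(\beta n)^d}\bigl[(A^{n^d}-\beta n\sqrt d)^d-(A^{(n-1)^d}+\beta n\sqrt d)^d\bigr]
\]
suitable cubes. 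By the min-max principle, the count of eigenvalues below this level is at least the number of suitable cubes, which completes the proof.
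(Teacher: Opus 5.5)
\textbf{There is a genuine gap in Step~1: the coefficient of $\varepsilon$ is never proved, and the repair you suggest cannot supply it.} Write $\mu:=\bigl(1+\frac{d(q-\delta)}{2(p+\delta)}\bigr)^{-1}$. This coefficient is the entire content of the estimate, since coefficient $1$ is trivial. You replace it by a working hypothesis, and absorbing a different constant into $n_1(\varepsilon)$ amounts to proving a different corollary.

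With your $u$, the bound $\varepsilon\int_Y u^2\le\varepsilon|X|$ requires $|Y|\le|X|=(p+\delta)^d|Q|$. Suitability, read as you do as fewer than $(p+\delta)(\beta n)^d$ black cells, only gives $|Y|<(p+\delta)|Q|$, which is larger when $d\ge2$. Rearrangement cannot close this. For a fixed trial function, $\int_Y u^2$ over sets of measure $m$ is maximized on a superlevel set of $u$, and for $m>|X|$ that maximum exceeds $|X|$. (Theorem~\ref{T1.1} uses rearrangement in the opposite direction, to bound eigenvalues from below.) Enlarging the plateau to measure $(p+\delta)|Q|$ makes $\int_Y u^2\le|Y|$ trivial, but it changes both the kinetic term and $\int_Q u^2$, and hence the coefficient.

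Moreover, the lower bound $\int_Q u^2\ge|X|\bigl(1+\frac{d(q-\delta)}{2(p+\delta)}\bigr)$ that you import is not correct. With $c=\frac{q-\delta}{p+\delta}$, the layer-cake formula gives $\int_Q u^2=|X|\int_0^1 2(1-r)(1+cr)^d\,dr=|X|\bigl(1+\frac{dc}{3}+O(c^2)\bigr)$, and exactly $|X|(1+\frac{c}{3})$ for $d=1$. Each face slab carries one third of its volume, not one half. So even for the central cube $X$ this $u$ does not in general produce $\mu$.

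\textbf{For $d\ge3$ no argument that uses only the black-cell count, as yours does, can produce $\mu$.} One checks that $\mu<p+\delta$ exactly when $d>2$. Suppose the black cells of $Q_{\beta n,\tilde n}$ follow a $k$-periodic pattern of density $\rho\in(\mu,p+\delta)$; such a cube is suitable. Extend $H^1_0$-functions by zero and compare with the periodic operator on $\mathbb{R}^d$, whose spectral bottom is the ground state energy on the period torus. Splitting a trial function into its mean and a part orthogonal to constants, and using the gap $4\pi^2/k^2$, gives $\lambda_1\ge\varepsilon\rho-O(k^2\varepsilon^2)$. This exceeds $\delta/n^2+\varepsilon\mu$ for fixed small $\varepsilon$ and all large $n$, whatever $\beta$ is.

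What the count does give, using a trial function equal to $1$ outside a boundary layer of relative width $w$, is $\lambda_1\le(1-w)^{-d}\bigl(\frac{4}{w^2\beta^2n^2}+\varepsilon(p+\delta)\bigr)$. This proves the claim for $d=1$, where $\mu>p+\delta$. For $d\ge3$ it only yields a coefficient near $p+\delta$, which would change $n_1(\varepsilon)$.

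The paper's own justification at this point is only that the computation for $-\Delta+\varepsilon\chi_X$, with the central $X$, rescales to $Q_{\beta n,\tilde n}$. That is exactly the unjustified identification you flagged: you located the weak point correctly, but neither your proposal nor the paper closes it.

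Step~2 is fine once Step~1 is available, provided you decouple along cubes contained in $\mathcal{L}_n$ (which is what the shrunken-annulus count provides). Decoupling on the union of cubes meeting $\mathcal{L}_n$ does not work, because that union is larger than $\mathcal{L}_n$.
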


Suppose now that $V$ is a  bounded potential such that
\[\notag
V(x)=\frac{C}{(\log_A |x|)^{2/d}},\qquad \text{for}\quad |x|>1.
\]
If $x\in \mathcal{L}_n$, then $|x|<A^{n^d}$. Therefore
\[\notag
V(x)\geq \frac{C}{n^2},\qquad \text{for all}\quad x\in \mathcal{L}_n.
\]
Combining this  lower bound with Theorem~\ref{Thm2} shows that  the suitable cubes in $\mathcal{L}_n$ contribute   negative eigenvalues
whenever
\[\notag
 \frac{\delta }{n^2} + \varepsilon \Bigl( 1+\frac{d(q-\delta)}{2(p+\delta)}\Bigr)^{-1}<\frac{C}{n^2}.
 \]
It is therefore natural to introduce
\[\notag
n_1(\varepsilon)=\left\lfloor \Bigl( 1+\frac{d(q-\delta)}{2(p+\delta)}\Bigr)^{1/2}\sqrt{ \frac{C-\delta}{\varepsilon}}\right\rfloor.
\]
Thus, for every $n\leq n_1(\varepsilon)$, the suitable cubes in the $n$-th layer provide eigenvalues below the
potential level $V$. Retaining the contribution from the outermost admissible layer gives the
following lower bound.
\begin{theorem}
Let $N(\varepsilon)$  be  the number of negative eigenvalues of the operator
\[\notag
-\Delta+\varepsilon\sum_n \omega_n \chi(x-n)-V.
\]
Then there is $\beta_0>0$ and, for  almost every $\omega$, there  exists $\varepsilon_0>0$, such that for every $\beta>\beta_0$ and $0<\varepsilon<\varepsilon_0$,
\[\notag N(\varepsilon)\geq 
(1-2\delta)\frac{\varkappa_d}{(\beta n_1(\varepsilon))^d}
\left[
(A^{n_1(\varepsilon)^d}-\beta n_1(\varepsilon)\sqrt{d})^d
-
(A^{(n_1(\varepsilon)-1)^d}+\beta n_1(\varepsilon)\sqrt{d})^d\right].
\]
\end{theorem}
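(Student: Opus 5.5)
The plan is to count negative eigenvalues by Dirichlet decoupling on the suitable cubes of the single outermost admissible layer $\mathcal{L}_{n_1(\varepsilon)}$, and then to invoke the min-max principle.

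\emph{Step 1: inserting Dirichlet walls.} Take all the cubes $Q_{\beta n_1,\tilde n}$ that lie entirely inside $\mathcal{L}_{n_1}$ with $n_1=n_1(\varepsilon)$. Imposing Dirichlet conditions on their boundaries, and discarding the complement, can only raise the eigenvalues of $H_\varepsilon$. Hence
\[\notag
N(\varepsilon)\ \geq\ \#\bigl\{\tilde n:\ Q_{\beta n_1,\tilde n}\subset\mathcal{L}_{n_1}\ \text{and the Dirichlet operator } H_\varepsilon|_{Q_{\beta n_1,\tilde n}} \text{ has a negative eigenvalue}\bigr\}.
\]
The subtracted terms $\pm\beta n\sqrt d$ in the bound are exactly what lets us restrict to cubes fully contained in the layer, as in the lower bound for $N_n$ preceding the Corollary on suitable cubes.

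\emph{Step 2: a negative eigenvalue in each suitable cube.} On a cube inside $\mathcal{L}_{n_1}$ we have $|x|<A^{n_1^d}$, so $V\geq C/n_1^2$ there. Apply Corollary~\ref{Thm2} with the constant chosen as in the paper, so that $\beta>\beta_0$ makes the kinetic term at most $\delta/n^2$. Monotonicity then gives
\[\notag
\lambda_1\bigl(H_\varepsilon|_Q\bigr)\ \le\ \frac{\delta}{n_1^2}+\varepsilon\Bigl(1+\frac{d(q-\delta)}{2(p+\delta)}\Bigr)^{-1}-\frac{C}{n_1^2}.
\]
This is negative because $n_1\le \bigl(1+\tfrac{d(q-\delta)}{2(p+\delta)}\bigr)^{1/2}\sqrt{(C-\delta)/\varepsilon}$, which gives $\varepsilon\bigl(1+\cdots\bigr)^{-1}\le (C-\delta)/n_1^2$. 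The boundary case of equality can be removed by using a slightly smaller $\delta$ in Corollary~\ref{Thm2}, or by invoking strict inequality in the trial-function estimate.

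For suitable cubes, the random potential on $Q$ is majorized after rearrangement by $\varepsilon\chi_X$ with $|X|\le (p+\delta)|Q|$. This needs some care. The trial-function computation uses a concentric subcube $X$, whereas the black cells are scattered. I would handle this by placing the actual black cells into the trial function's flat region, treating the bound as a bound with $\varepsilon\int_X u^2\le\varepsilon\,\#\text{black}$ and $u\le 1$. This is the step I expect to be the main obstacle. I would first try the crude estimate
\[\notag
\varepsilon\int \Bigl(\sum\omega\chi\Bigr)u^2\ \le\ \varepsilon\,(p+\delta)(\beta n)^d,
\]
combined with $\int u^2\geq(p+\delta)(\beta n)^d\bigl(1+\tfrac{d(q-\delta)}{2(p+\delta)}\bigr)$, which reproduces the stated constant.

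\emph{Step 3: probability.} By the Corollary on suitable cubes, for almost every $\omega$ there is an $n_0(\omega)$ such that every layer $\mathcal{L}_n$ with $n>n_0$ contains at least $(1-2\delta)N_n$ suitable cubes, with $N_n$ bounded below as displayed. Since $n_1(\varepsilon)\to\infty$ as $\varepsilon\to0$, choose $\varepsilon_0(\omega)$ such that $n_1(\varepsilon)>\max(n_0(\omega),n_0)$ for $\varepsilon<\varepsilon_0$. Combining this with Steps 1 and 2 yields the claimed lower bound for $N(\varepsilon)$. The constant $\beta_0$ comes only from the deterministic Corollary~\ref{Thm2}, so it is independent of $\omega$.
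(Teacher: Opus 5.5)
Your plan is the paper's: Dirichlet decoupling onto the suitable cubes of the layer $\mathcal{L}_{n_1(\varepsilon)}$, one negative eigenvalue per cube from Corollary~\ref{Thm2} together with $V\geq C/n_1^2$, and the Borel--Cantelli count. Steps 1 and 3 follow the paper. The concern you raise in Step 2 is real, but your fix fails. Write $W=\sum_k\omega_k\chi(\cdot-k)$. The inequality
\begin{equation*}
\int_Q u^2\,dx\;\geq\;(p+\delta)(\beta n)^d\Bigl(1+\frac{d(q-\delta)}{2(p+\delta)}\Bigr)=(\beta n)^d\Bigl[(p+\delta)+\frac d2\,(q-\delta)\Bigr]
\end{equation*}
is impossible for $d\geq2$. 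Its right-hand side is then at least $|Q|=(\beta n)^d$. But $0\leq u\leq1$, which you need for $\varepsilon\int Wu^2\leq\varepsilon\,\#\{\text{black cells}\}$, together with $u=0$ on $\partial Q$, forces $\int_Q u^2<|Q|$. You have conflated the black-cell fraction $p+\delta$ of a suitable cube with the volume fraction of the paper's plateau $X=[-(p+\delta)L,(p+\delta)L)^d$, which is $(p+\delta)^d$. The paper's estimate is $\int_Q u^2\geq|X|\bigl(1+\frac{d(q-\delta)}{2(p+\delta)}\bigr)$ with $|X|=(p+\delta)^d|Q|$. Combined correctly with $\#\{\text{black cells}\}\leq(p+\delta)|Q|$, it gives the constant $(p+\delta)^{1-d}\bigl(1+\frac{d(q-\delta)}{2(p+\delta)}\bigr)^{-1}$ in front of $\varepsilon$, not the stated one.

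No trial function rescues this route for $d\geq3$. Here $\lambda_1$ denotes the Dirichlet ground energy of $-\Delta+\varepsilon W$ on the cube. The best the crude estimate gives ($u\equiv1$ off a thin collar of $\partial Q$, $\beta$ large) is $\lambda_1\leq\delta/n^2+\varepsilon(p+\delta)(1+o(1))$. That is enough for $d=1$, and with some care for $d=2$, because there $\bigl(1+\frac{d(q-\delta)}{2(p+\delta)}\bigr)^{-1}\geq p+\delta$. For $d\geq3$ this inequality reverses. You then get negative eigenvalues only up to $n\approx\sqrt{(C-\delta)/((p+\delta)\varepsilon)}<n_1(\varepsilon)$. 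Since the count grows like $A^{dn^d}$, the result is strictly weaker than the statement.

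Citing Corollary~\ref{Thm2} as a black box does not help either. The paper applies the concentric-subcube computation to randomly placed black cells without argument, and for $d\geq3$ that step fails for typical cubes. Suppose every block of side $\sqrt n$ in the cube has black density at least $p-o(1)$, which holds for all but a vanishing fraction of cubes. Each such block $B$ has Neumann ground energy at least $\varepsilon\bar W_B-O(\varepsilon^2 n)$, where $\bar W_B$ is the mean of $W$ on $B$. Neumann bracketing then gives $\lambda_1\geq\varepsilon(p-o(1))$ for $\varepsilon\asymp n^{-2}$. For small $\delta$ one has $\bigl(1+\frac{d(q-\delta)}{2(p+\delta)}\bigr)^{-1}<p$. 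So the bound of Corollary~\ref{Thm2} fails on most suitable cubes whenever $\varepsilon n^2\bigl(p-(1+\frac{d(q-\delta)}{2(p+\delta)})^{-1}\bigr)>\delta$. At $n=n_1(\varepsilon)$ this condition reads $(C-\delta)\bigl(p(1+\frac{d(q-\delta)}{2(p+\delta)})-1\bigr)>\delta$, which holds unless $C$ is close to $\delta$. As written, Step 2 is unproved for $d\geq3$. The missing ingredient is exactly the passage from a concentric $X$ to scattered black cells, and with the stated constant it cannot be supplied.
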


\end{document}